\newif\if@fewtab\@fewtabtrue
\xdef\hourmin{\number\count255}
\xdef\hourmin{\hourmin:\ifnum\count255<10 0\fi\the\count255}}
\def\ps@draft{\let\@mkboth\@gobbletwo
     \def\@oddfoot{\hbox to 7 cm{\tiny \versionno
        \hfil}\hskip -7cm\hfil\rm\thepage \hfil {\tiny\draftdate}}
     \def\@oddhead{}
     \def\@evenhead{}\let\@evenfoot\@oddfoot}
\def\draftdate{\number\month/\number\day/\number\year\ \ \ \hourmin }
\def\citen#1{\if@filesw \immediate\write \@auxout {\string\citation{#1}}\fi%
\@tempcntb\m@ne \let\@h@ld\relax \def\@citea{}%
\@for \@citeb:=#1\do {\@ifundefined {b@\@citeb}%
     {\@h@ld\@citea\@tempcntb\m@ne{\bf ?}%
     \@warning {Citation `\@citeb ' on page \thepage \space undefined}}%
     {\@tempcnta\@tempcntb \advance\@tempcnta\@ne
     \setbox\z@\hbox\bgroup\ifcat0\csname b@\@citeb \endcsname \relax
     \egroup \@tempcntb\number\csname b@\@citeb \endcsname \relax
     \else \egroup \@tempcntb\m@ne \fi \ifnum\@tempcnta=\@tempcntb
     \ifx\@h@ld\relax \edef \@h@ld{\@citea\csname b@\@citeb\endcsname}%
     \else \edef\@h@ld{\hbox{--}\penalty\@highpenalty
     \csname b@\@citeb\endcsname}\fi
     \else \@h@ld\@citea\csname b@\@citeb \endcsname \let\@h@ld\relax \fi}%
\def\@citea{,\penalty\@highpenalty\hskip.13em plus.13em minus.13em}}\@h@ld}
\def\@citex[#1]#2{\@cite{\citen{#2}}{#1}}%
\def\@cite#1#2{\leavevmode\unskip\ifnum\lastpenalty=\z@\penalty\@highpenalty\fi%
   \ [{\multiply\@highpenalty 3 #1%
   \if@tempswa,\penalty\@highpenalty\ #2\fi}]}   %
\def\apo           {\mbox{\sc s}}
\def\apoi          {\mbox{\sc s}^{-1}}
\def\aposm         {\mbox{\sse\sc s}}
\def\be            {\begin{equation}}
\def\bearl         {\begin{array}{l}}
\def\bearll        {\begin{array}{ll}}
\newcommand\bee[5] {\begin{eqnarray} #5 \nonumber\\[-#1.#2em]~\\[#3.#4em]~\nonumber\end{eqnarray}}
\newcommand\BG[2]  {B_{#1,#2}}
\def\bicoaa        {{\triangleright\triangleleft}}
\def\boti          {{\,\boxtimes\,}}
\def\C             {{\ensuremath{\mathcal C}}}
\def\Cb            {\ensuremath{\hspace*{.3pt}{\mathcal C}^{\mathrm{rev}}}}
\def\CbC           {{\ensuremath{\hspace*{.3pt}\Cb\hspace*{.6pt}{\boxtimes}%
                    \hspace*{2.2pt}\mathcal C}\xspace}}
\def\cft           {conformal field theory}
\def\chii          {\raisebox{.15em}{$\chi$}}
\def\cir           {\,{\circ}\,}
\newcommand\coen[1]{\int^{#1}\hspace*{-.23em}}
\newcommand\coendF[1]{\int^{#1}\hspace*{-.33em}#1^\vee\hspace*{.07em}{\boxtimes}\hspace*{.16em}#1}
\newcommand\coendFomega[1]{\int^{#1}\hspace*{-.33em}#1^\vee\hspace*{.07em}{\boxtimes}\hspace*{.19em}\omega(#1)}
\newcommand\coendFx[1]{\int^{#1}\hspace*{-.22em}#1^\vee\hspace*{.07em}{\boxtimes}\hspace*{.16em}#1}
\def\coa           {_\triangleright}
\def\Cob           {\ensuremath{\mathcal Cob}}
\def\complex       {{\ensuremath{\mathbb C}}}
\def\CopC          {{\ensuremath{\mathcal C\op\hspace*{.8pt}{\times}\hspace*{1.5pt}\mathcal C}}}
\newcommand\Cor[2] {\ensuremath{\mathrm{Cor}^{}_{#1;#2}}}
\newcommand\Coro[2]{\ensuremath{\mathrm{Cor}_{#1;#2}}}
\newcommand\Corr[3]{\ensuremath{\mathrm{Cor}^{}_{#1;#2,#3}}}
\newcommand\Corro[3]{\ensuremath{\mathrm{Cor}_{#1;#2,#3}}}
\def\D             {{\ensuremath{\mathcal D}}}
\def\dim           {\mathrm{dim}}
\def\dimk          {\mathrm{dim}_\ko}
\def\dsty          {\displaystyle }
\def\ee            {\end{equation}}
\def\eear          {\end{array}}
\def\EndC          {{\ensuremath{\mathrm{End}_\C}}}
\def\eps           {\varepsilon}
\def\eq            {\,{=}\,}
\newcommand\erf[1] {(\ref{#1})}
\def\FF            {\mathrm F^{\mathcal Z}_\C}
\def\Fo            {{\ensuremath{F_{\scriptscriptstyle\!C}}}}
\def\Fomega        {{F_\omega}}
\def\FomegaR       {{F_\omega^\rational}}
\def\FoR           {{\ensuremath{F_{\scriptscriptstyle\!C}^\rational}}}
\def\findim        {fi\-ni\-te-di\-men\-si\-o\-nal}
\def\HBimod        {{\ensuremath{H\mbox{-}\mathrm{Bimod}}}}
\def\HBimodsm      {{H\text{-Bimod}}}
\def\HMod          {{\ensuremath{H}\text{-Mod}}}
\def\HModsm        {{H\text{-Mod}}}
\def\Hom           {{\ensuremath{\mathrm{Hom}}}}
\def\HomA          {{\ensuremath{\mathrm{Hom}_A}}}
\def\HomAA         {{\ensuremath{\mathrm{Hom}_{A|A}}}}
\def\HomC          {{\ensuremath{\mathrm{Hom}_\C}}}
\def\HomCC         {{\ensuremath{\mathrm{Hom}_\CbC}}}
\def\HomH          {{\ensuremath{\mathrm{Hom}_H}}}
\def\Homk          {{\ensuremath{\mathrm{Hom}_\ko}}}
\def\Hs            {{\ensuremath{H^*}}}
\newcommand\hsp[1] {\mbox{\hspace{#1 em}}}
\def\Hss           {{H^*_{}}}
\def\I             {\ensuremath{\mathcal I}}
\def\id            {\mbox{\sl id}}
\def\Id            {\mbox{\sl Id}}
\def\idHs          {\ensuremath{\id_{{H^{\phantom:}}^{\!\!*}}}}
\def\idsm          {\mbox{\footnotesize\sl id}}
\def\idUs          {\ensuremath{\id_{U^{*_{}}_{\phantom:}}}}
\def\iF            {\imath^\circ}
\def\iK            {\imath^K}
\def\iL            {\imath^L}
\def\iN            {\,{\in}\,}
\def\J             {{\ensuremath{\mathcal J}}}
\def\ko            {{\ensuremath{\Bbbk}}}
\def\KR            {{K^\rational_{}}}
\newcommand\labl[1]{\label{#1}\ee}
\def\LR            {{L^\rational_{}}}
\newcommand\Map[2] {\ensuremath{\text{Map}_{#1;#2}}}
\def\Mapgn         {\ensuremath{\text{Map}_{g:n}}}
\def\Mapgpq        {\ensuremath{\text{Map}_{g:p,q}}}
\def\Mapgppq       {\ensuremath{\text{Map}_{g:p+q}}}
\def\Mod           {\mbox{-Mod}}
\def\nx            {\raisebox{.08em}{\rule{.44em}{.44em}}\hsp{.4}}
\newcommand\nxl[1] {\\[#1mm]}
\newcommand\Nxl[1] {\\[-1.3em]\\[#1mm]}
\def\ohr           {\reflectbox{$\rho$}}
\def\ohrad         {\ohr_\diamond}
\def\OL            {\ensuremath{\mathcal O^L}}
\def\one           {{\bf1}}
\def\op            {^{\mathrm{op}}}
\def\oti           {\,{\otimes}\,}
\def\Oti           {{\otimes}}
\def\otic          {\,{\otimes_\complex}\,}
\def\otik          {\,{\otimes_\ko}\,}
\def\pigpq         {\ensuremath{\pi_{g:p,q}^{K;F}}}
\def\pigppq        {\ensuremath{\pi_{g:p+q}^{K;F}}}
\def\QB            {\mathcal Q^K}
\def\QL            {\ensuremath{\mathcal Q^L}}
\def\qquand        {\qquad{\rm and}\qquad}
\def\rational      {{\scriptscriptstyle\mathrm{rat.}}}
\def\rep           {representation}
\def\repV          {{\ensuremath{\mathcal{R}ep(\V)}}}
\def\req           {\varkappa}
\def\RR            {\mathrm R^{\mathcal Z}_\C}
\newcommand\setulen[2]{\setlength\unitlength{.#1#2pt}}
\def\SK            {S^K}
\def\slz           {\ensuremath{\mathrm{SL}(2,\zet)}}
\def\sse           {\scriptsize }
\def\ssg           {\scriptstyle }
\def\sss           {\scriptscriptstyle }
\def\tauHH         {\tau^{}_{\!H,H}}
\def\tauHHv        {\tau^{}_{\!H,H^*_{}}}
\def\tauHvH        {\tau^{}_{\!H^*_{}\!,H}}
\def\Times         {\,{\times}\,}
\def\TK            {T^K}
\def\To            {\,{\to}\,}
\def\V             {\ensuremath{\mathscr V}}
\def\Vectk         {\ensuremath{\mathcal V\mbox{\sl ect}_\ko}}
\def\Vee           {{}^{\vee\!}}
\def\Xs            {X^{*_{}}_{\phantom:}}
\newcommand\Z[1]   {\mathcal Z(#1)}
\def\zet           {{\ensuremath{\mathbb Z}}}
\newcommand\Includepichtft[1] {{\begin{picture}(0,0)(0,0)
                    \scalebox{.38}{\includegraphics{imgs/pic_htft_#1.eps}}\end{picture}}}
\newcommand\INcludepichtft[2] {{\begin{picture}(0,0)(0,0)
                   \scalebox{.#2}{\includegraphics{imgs/pic_htft_#1.eps}}\end{picture}}}
\newcommand\includepichtftsm[1] {{\begin{picture}(0,0)(0,0)
                    \scalebox{.266}{\includegraphics{imgs/pic_htft_#1.eps}}\end{picture}}}
\newcommand\Includepichopfsm[1] {{\begin{picture}(0,0)(0,0)
                   \scalebox{.266}{\includegraphics{imgs/pic_hopf_#1.eps}}\end{picture}}}
\newcommand\eqpic[4]{\begin{eqnarray}
                    \begin{picture}(#2,#3){}\end{picture}\nonumber\\
                    \raisebox{-#3pt}{ \begin{picture}(#2,#3) #4 \end{picture} }
                    \label{#1} \\~\nonumber \end{eqnarray} }
\newcommand\Eqpic[4]{\begin{eqnarray}
                    \begin{picture}(#2,#3){}\end{picture}\nonumber\\
                    \raisebox{-#3pt}{ \begin{picture}(#2,#3) #4 \end{picture} }
                    \nonumber \\[-3pt]~\label{#1} \end{eqnarray} }
\newtheorem{thm}{Theorem}
\newtheorem{prop}[thm]{Proposition}
\theoremstyle{definition}
\newtheorem{rem}[thm]{Remark}
\begin{document}

\def\cir{\,{\circ}\,} 
\numberwithin{equation}{section}
\numberwithin{thm}{section}

\begin{flushright}
    {\sf ZMP-HH/13-2 }\\
    {\sf Hamburger$\;$Beitr\"age$\;$zur$\;$Mathematik$\;$Nr.$\;$468}\\[2mm]
    February 2013
\end{flushright}
\vskip 3.1em
\begin{center}
\begin{tabular}c \Large\bf FROM NON-SEMISIMPLE HOPF ALGEBRAS TO 
         \\[3mm] \Large\bf CORRELATION FUNCTIONS FOR LOGARITHMIC CFT
\end{tabular}
\end{center}\vskip 2.1em
\begin{center}
   ~J\"urgen Fuchs\,$^{\,a}$,~
   ~Christoph Schweigert\,$^{\,b}$,~
   ~Carl Stigner\,$^{\,c}$
\end{center}

\vskip 9mm

\begin{center}\it$^a$
   Teoretisk fysik, \ Karlstads Universitet\\
   Universitetsgatan 21, \ S\,--\,651\,88\, Karlstad
\end{center}
\begin{center}\it$^b$
   Organisationseinheit Mathematik, \ Universit\"at Hamburg\\
   Bereich Algebra und Zahlentheorie\\
   Bundesstra\ss e 55, \ D\,--\,20\,146\, Hamburg
\end{center}
\begin{center}\it$^c$
   Camatec Industriteknik AB\\
   Box 5134, \ S\,--\,650\,05\, Karlstad
\end{center}
\vskip 4.3em

\noindent{\sc Abstract}
\\[3pt]
We use factorizable finite tensor categories, and specifically the 
representation categories of factorizable ribbon Hopf algebras $H$, as a
laboratory for exploring bulk correlation functions in local logarithmic 
conformal field theories.
For any ribbon Hopf algebra automorphism $\omega$ of $H$ we present a 
candidate for the space of bulk fields and endow it with
a natural structure of a commutative symmetric Frobenius algebra.
We derive an expression for the corresponding bulk partition functions 
as bilinear combinations of irreducible characters; as a
crucial ingredient this involves the Cartan matrix of the category.
We also show how for any candidate bulk state space of the type we consider,
correlation functions of bulk fields for closed oriented world sheets 
of any genus can be constructed that are
invariant under the natural action of the relevant mapping class group.

  \newpage

\section{Introduction}

Understanding a quantum field theory includes in particular having a
full grasp of its correlators on various space-time manifolds, including 
the relation between correlation functions on different space-times. 
This ambitious goal has been reached for different types of
theories to a variable extent. Next to free field  theories and to topological
ones, primarily in two and three dimensions, two-dimensional rational 
conformal field theories are, arguably, best under control.

This has its origin not only in the (chiral) symmetry structures that are 
present in conformal field theory, but also in the fact that for rational CFT
these symmetry structures have particularly strong representation theoretic 
properties: they give rise to modular tensor categories, and are thus
in particular finitely semisimple. In many applications semisimplicity 
is, however, not a physical requirement. Indeed there are physically 
relevant models, like those describing percolation problems, which are not 
semisimple, but still enjoy certain finiteness properties.

It is thus natural to weaken the requirement that the representation category
of the chiral symmetries should be a modular tensor category. A natural
generalization is to consider \emph{factorizable finite ribbon 
categories} (see Remark \ref{rem:factorizable}(i) for a definition of this
class of categories). By Kazhdan-Lusztig type dualities such categories are 
closely related to categories of \findim\ modules over \findim\ complex Hopf 
algebras \cite{fgst}. For this reason, we study in this paper structures in 
representation categories of \findim\ factorizable ribbon Hopf algebras.

\medskip

Let us summarize the main results of this contribution. We concentrate on bulk 
fields. In the semisimple case, the structure of bulk partition functions has 
been clarified long ago; in particular, partition functions of automorphism
type have been identified as a significant subclass \cite{mose4}.
Here we deal with the analogue of such partition functions without imposing 
se\-mi\-simplicity. Specifically, we assume that the representation category of 
the chiral symmetries has been realized as the category of \findim\ left modules 
over a \findim\ factorizable ribbon Hopf algebra. For any ribbon Hopf algebra
automorphism $\omega$ we then construct a candidate for the space of bulk fields 
for the corresponding automorphism invariant and show that it has a natural
structure of a commutative symmetric Frobenius algebra (Theorem \ref{thmfuSs4}).
The proof that the space of bulk fields is a commutative algebra also works for 
arbitrary factorizable finite ribbon categories (Proposition \ref{prop:Calgebra}).

We are able to express the 
bulk partition functions that are associated to these spaces as bilinear 
combinations of irreducible characters -- such a decomposition can still exist
because characters behave additively under short exact sequences. We find
(Theorem \ref{thm:Fomega}) that the crucial ingredient (apart from 
$\omega$) is the Cartan matrix of the underlying category, This result is 
most gratifying, as the Cartan matrix has a natural categorical meaning and 
is stable under Morita equivalence and under Kazhdan-Lusztig correspondences 
of abelian categories. The Cartan matrix enters in particular in 
the analogue of what in the semisimple case is the charge-conjugation 
partition function. As the latter is, for theories with compatible boundary 
conditions, often called the Cardy case, we refer to its generalization 
in the non-semisimple case as the \emph{Cardy-Cartan modular invariant}.

Finally we describe how for any bulk state space of the type we consider,
correlation functions of bulk fields for closed oriented world sheets of any 
genus can be found that are invariant under the natural action of the mapping 
class group on the relevant space of chiral blocks. The construction of these 
correlators is algebraically natural, once one has realized (see Proposition 
\ref{prop:YD}) that the monodromy derived from the braiding furnishes 
a natural action of a canonical Hopf algebra object in
the category of chiral data on any representation of the chiral algebra.

\medskip

The rest of this paper is organized as follows. In Section \ref{sec:F}
we describe the bulk state space of a \cft\ from several different points 
of view. In particular, we show that a \emph{coend}, formalizing the idea
of pairing each representation with their conjugate to obtain a
valid partition function, provides a candidate for a bulk state space. 
Section \ref{sec:hndl} is devoted to the \emph{handle algebra}, a Hopf 
algebra internal to a modular tensor category that crucially enters in
Lyubashenko's construction of representations of mapping class groups.
The topic of Section \ref{sec:pf} is the torus partition function; in 
particular we highlight the fact that in the partition function for the
bulk algebras constructed in Section \ref{sec:F} the Cartan matrix of the
underlying category enters naturally. In Section \ref{sec:corfus} we 
finally extend our results to surfaces of arbitrary genus and provides
invariants of their mapping class groups that are natural candidates
for correlation functions.
 
In an appendix we provide mathematical background material. In
Appendix \ref{Acoend} we review the definition and properties of dinatural
transformations and coends. In Appendix \ref{Acenter}, we recall the notion 
of the full center of an algebra in a fusion category, which is an algebra 
in the Drinfeld double of that categry. Appendix \ref{algchar} supplies
some necessary background from the representation theory of
finite-dimensional associative algebras, including in particular the
definition of the Cartan matrix of an abelian category (which is a
crucial ingredient of the simplest torus partition function in the
non-semisimple case). In Appendices \ref{facHopf} and \ref{app:mpg} we 
collect information about the factorizable Hopf algebras and the 
representations of mapping class groups that are needed in the body of the 
paper.


\section{The bulk state space}\label{sec:F}

\subsection{Holomorphic factorization}

The central ingredient of chiral conformal field theory is a chiral symmetry 
algebra. Different mathematical notions formalizing this physical idea are
available. Any such concept of a chiral algebra \V\ must provide a
suitable notion of representation category \repV, which should have, 
at least, the structure of a \complex-linear abelian category.

For concreteness, we think about \V\ as a vertex algebra with a conformal 
structure. A vertex algebra \V\ and its representation category \repV\ allow 
one to build a system of sheaves on moduli spaces of curves with marked points, 
called conformal blocks, or chiral blocks. These sheaves are endowed with a 
(projectively flat) connection. Their monodromies thus lead to (projective)
representations of the fundamental groups of the moduli 
spaces, i.e.\ of the mapping class groups of surfaces.
This endows the category \repV\ with much additional structure. In particular, 
from the chiral blocks on the three-punctured sphere one extracts a monoidal 
structure on \repV, which formalizes the physical idea of operator product of 
(chiral) fields. Furthermore, from the monodromies one obtains natural
transformations which encode a braiding as well as a twist. In this way one keeps 
enough information to be able to recover the representations of the mapping 
class groups from the category \repV. Moreover, if \repV\ is also endowed with 
left and right dualities, the braiding and twist relate the two dualities to 
each other, and in particular they can fit together to the structure of a 
ribbon category. 
In this paper we assume that \repV\ indeed is a ribbon category; there exist 
classes of vertex algebras which do have such a representation category and 
which are relevant to families of logarithmic conformal field theories.

The category \repV\ -- or any ribbon category \C\ that is ribbon equivalent 
to it -- is called the category of chiral data, or of Moore-Seiberg 
\cite{mose3,baKir} data. For sufficiently nice chiral algebras \V\ the number 
of irreducible representations is finite and \repV\ carries the structure of 
a factorizable finite ribbon category. For the purposes of this paper we 
restrict our attention to the case that \repV\ has this structure.

\medskip

While chiral CFT is of much mathematical interest and also plays a role in 
modeling certain physical systems, like in the description of universality 
classes of quantum Hall fluids, the vast majority of physical applications of 
CFT involves full, local CFT. It is generally expected that a full CFT can be 
obtained from an underlying chiral theory by suitably ``combining'' 
holomorphic and anti-holomorphic chiral degrees of freedom or, in free 
field terminology, left- and right movers. 
Evidence for such a \emph{holomorphic factorization} comes from CFTs that 
possess a Lagrangian description (see e.g.\ \cite{witt72}). Conversely, 
the \emph{postulate} of holomorphic factorization can be phrased in an
elegant geometric way as the requirement that correlation functions
on a surface $\Sigma$ are specific sections in the chiral blocks 
associated to a double cover $\widehat\Sigma$ of $\Sigma$. These particular
sections are demanded to be invariant under the action of the mapping class 
group of $\Sigma$, and to be compatible with sewing of surfaces. 

These conditions only involve properties of the representations of mapping 
class groups that are remembered by the additional structure of the 
category \repV\ of chiral data. Accordingly, to find and characterize 
solutions to these constraints it suffices (and is even appropriate) to 
work at the level of \repV\ as an abstract factorizable ribbon category.
We refer to \cite{BAki} for details on how the vector bundles of chiral 
blocks, which form a complex analytic modular functor (in the terminology 
of \cite{BAki}) can be recovered from representation theoretic data
that correspond to a topological modular functor,
and to \cite[Sects.\,5\,\&\,6.1]{fuRs10} for a more detailed discussion of 
this relationship for the chiral blocks that appear in the simplest
correlation functions, involving few points on a sphere.

In \emph{rational} conformal field theories, for which the category \repV\ 
has the structure of a (semisimple) modular tensor category, the problem
of finding and classifying solutions to all these constraints 
has a very satisfactory solution (see e.g.\ \cite{scfr2} for a review).
In \emph{logarithmic} CFTs, on the other hand, no evidence for holomorphic 
factorization is available from a Lagrangian formulation. Instead, in this 
paper we take holomorphic factorization as a starting point. Until recently, 
only few model-independent results for logarithmic CFTs were available.
Here we will present a whole class of solutions for correlators of bulk 
fields, on orientable surfaces of any genus with any number of insertions.
It is remarkable that, once relevant expressions, like e.g.\ sums over 
isomorphism classes of simple objects, that are suggestive in the semisimple 
case have been substituted with the right categorical constructions, we find 
a whole class of solutions which work very much in the same spirit as for 
semisimple theories.

\medskip

In this contribution we first focus on the \emph{bulk state space} $F$ of the 
theory. Invoking the state-field correspondence, $F$ is also called the space of
bulk fields. A bulk field carries both holomorphic and anti-holomorphic degrees 
of freedom. When the former are expressed in terms of $\C\,{\simeq}\,\repV$, 
then for the latter one has to use the 
\emph{reverse category} \Cb, which is obtained from \C\ by inverting the 
braiding and the twist isomorphisms. As a consequence, a bulk field, and in 
particular the bulk state space $F$, is an object in the \emph{enveloping 
category} $\Cb\boxtimes \C$, the Deligne tensor product of \Cb\ and \C.

It should be appreciated that owing to the opposite braiding and twist in its 
two factors, the enveloping category is in many respects simpler than the 
category \C\ of chiral data. This is a prerequisite for the possibility of 
having correlation functions that are local and invariant under the mapping 
class group of the world sheet. If \C\ is semisimple and factorizable, then 
a mathematical manifestation of this simplicity of $\Cb \boxtimes\C$ is the 
fact that its class
in the Witt group \cite{dmno} of non-degenerate fusion categories vanishes.

The most direct way of joining objects of \Cb\ and \C\ to form bulk fields that 
comes to mind is to combine the `same' objects in each factor, which in view 
of the distinction between \Cb\ and \C\ means that any object $U$ of \C\ is to 
be combined with its (right, say) dual $U^\vee$ in \Cb. When restricting to 
\emph{simple} objects only, this idea results in the familiar expression
  \be
  F = \FoR := \bigoplus_{i\in\I}\, S_i^\vee \boxtimes S_i^{}
  \labl{Fo}
for the bulk state space of a rational CFT; here the index set $\I$ is finite 
and $(S_i)_{i\in\I}$ is a collection of representatives for the 
isomorphism classes of simple objects of \C.

In rational CFT the object \erf{Fo} is 
known as the charge conjugation bulk state space, and its character
  \be
  \chii_\FoR = \sum_{i\in\I}\, \chii_i^*\, \chii_i^{}
  \labl{chiFo}
as the \emph{charge conjugation modular invariant}.
Moreover, it can be shown \cite{fffs3} that for any modular tensor category
\C\ the function \erf{chiFo} is not only invariant under the action 
of the modular group, as befits the torus partition function of 
a CFT, but that it actually appears as part of a consistent full CFT, and 
hence \FoR\ as given by \erf{Fo} is indeed a valid bulk state space.
In contrast, for non-rational CFT \erf{Fo} is no longer a valid bulk state space
and the function \erf{chiFo} is no longer modular invariant, even if the
index set $\I$ is still finite.


\subsection{The bulk state space as a coend}

It is, however, not obvious how the formula \erf{Fo}, which
involves only simple objects, relates to the 
original idea of combining \emph{every} object of \C\ with its conjugate.
Fortunately there is a purely categorical construction by which that idea can 
be made precise, namely via the notion of a \emph{coend}. Basically, the coend 
provides the proper concept of summing over \emph{all objects} of a category, 
namely doing so in such a manner that at the same time
\emph{all relations} that exist between objects are accounted for, meaning that
all morphisms between objects are suitably divided out. 

The notion of a coend can be considered for any functor $G$ from $\C\op \Times \C$ 
to any other category $\mathcal D$. That it embraces also the morphisms of
$\mathcal D$ manifests itself in that the coend of $G$ is not just an object
$D$ of $\mathcal D$, but it also comes with a \emph{dinatural family} of 
morphisms from $G(U,U)$ to $D$ (but still one commonly refers also to the
object $D$ itself as the coend of $G$). For details about coends and dinatural 
families we refer to Appendix \ref{Acoend}. In the case at hand, $\mathcal D$
is the enveloping category \CbC, and the coend of our interest is
  \be
  \Fo := \coendF U \,\in\, \CbC \,.
  \labl{coendCC}

Whether this coend indeed exists as an object of \CbC\ depends on the
category \C, but if it exists, then it is unique.
If \C\ is cocomplete, then the coend exists.  In particular, the coend does 
indeed exist for all finite tensor categories (to be defined at the beginning 
of Subsection \ref{LCFT+frc}).  This includes all modular tensor categories, 
and thus all categories of chiral data that appear in rational CFT. Moreover, 
a modular tensor category \C\ is semisimple, so that accounting for all 
morphisms precisely amounts to disregarding any non-trivial direct sums, and 
thus to restricting the summation to simple objects. Hence for modular \C\ one gets
  \be
  \coendF U = \FoR \labl{coendFo}
with \FoR\ as given by \erf{Fo}. In short, once we realize the physical idea of 
summing over all states in the proper way that is suggested by elementary 
categorical considerations, it does explain the ansatz \erf{Fo} for the bulk 
state space of a rational CFT.

This result directly extends to all bulk state spaces that are of 
\emph{automorphism type}. Namely, for any ribbon automorphism of \C, i.e.\
any autoequivalence $\omega$ of \C\ that is compatible with its ribbon
structure, we can consider the coend
  \be
  \Fomega := \coendFomega U \,.
  \labl{Fomega}
In the rational case this gives
  \be
  \FomegaR = \bigoplus_{i\in\I}\, S_i^\vee \boxtimes \omega(S_i^{}) \,,
  \labl{ssFomega}
with associated torus partition function
  \be
  \chii_\FomegaR = \sum_{i\in\I}\, \chii_i^*\, \chii_{\bar\omega(i)}^{}
  \labl{chiFomega}
where $\bar\omega$ is the permutation of the index set $\I$ for which
$S_{\bar\omega(i)}$ is isomorphic to $ \omega(S_i)$.


\subsection{The bulk state space as a center}

In rational CFT, the object \erf{Fo} of \CbC\ can also be obtained by another 
purely categorical construction from the category \C, and one may hope that 
this again extends beyond the rational case. To explain this construction, we 
need the notions of the \emph{monoidal center} (or Drinfeld center) $\Z\C$ of
a monoidal category \C\ and of the \emph{full center} of an algebra. $\Z\C$ is
a braided monoidal category; its objects are pairs $(U,z)$ consisting of objects
and of so-called half-braidings of \C. The full center $Z(A)$ of an algebra
$A \iN \C$ is a uniquely determined commutative algebra in $\Z\C$ whose
half-braiding is in a suitable manner compatible with its multiplication.
For details about the monoidal center of a category and the full center of an
algebra see Appendix \ref{Acenter}.
If \C\ is modular, then \cite[Thm.\,7.10]{muge9} the monoidal center is 
monoidally equivalent to the enveloping category,
  \be
  \CbC \,\simeq\, \Z\C \,,
  \labl{CbC=ZC}
so that in particular $Z(A)$ is an object in \CbC.

Now any monoidal category \C\ contains a distinguished algebra object, namely 
the tensor unit $\one$, which is even a symmetric Frobenius algebra 
(with all structural morphisms being identity morphisms). We thus know that 
$Z(\one)$ is a commutative algebra in $\Z\C$. If \C\ is modular, then this 
algebra is an object in \CbC, and gives us the bulk state space \erf{Fo},
  \be
  Z(\one) = \FoR \,.
  \labl{Z(1)}

Moreover, let us assume for the moment that a rational CFT can be consistently 
formulated on any world sheet, including world sheets with boundary, with 
non-degenerate two-point functions of bulk fields on the sphere and of 
boundary fields on the disk. It is known \cite{fjfrs2} that \emph{any} bulk 
state space of such a CFT is necessarily of the form
  \be
  F = Z(A)
  \labl{FZA}
for some simple symmetric special Frobenius algebra $A$ in \C. Further,
the object $Z(A)$
decomposes into a direct sum of simple objects of \CbC\ according to \cite{fuRs4}
  \be
  F = Z(A) = \bigoplus_{i,j\in\I}\, \big( S_i^\vee \boxtimes S_j^{} {\big)}
  ^{\oplus Z_{ij}(A)} 
  \labl{F=ZA}
with the multiplicities $Z_{ij}(A)$ given by the dimensions
  \be
  Z_{ij}(A)
  := \dimk \big(\HomAA(S_i^\vee{\otimes^{+\!}}A\,{\otimes^-}S_j^{},A) \big)
  \ee
of morphisms of $A$-bimodules. Here the symbols $\otimes^\pm$ indicate the two 
natural ways of constructing induced $A$-bimodules with the help of the braiding
of \C. (In case $A$ is an Azumaya algebra, this yields \cite[Sect.\,10]{fuRs11}
the {automorphism type bulk state spaces \erf{ssFomega} for which 
$Z_{ij}(A) \eq \delta_{\bar j,\bar\omega(i)}$, see formula \erf{chiFomega}).

The result \erf{FZA} implies further that $F$ is not just an object of \CbC,
but in addition carries natural algebraic structure: 

\begin{prop}\label{prop:rffs}{\rm \cite[Prop.\,3]{rffs}}\\
For $A$ a symmetric special Frobenius algebra in a modular tensor category
\C, the full center $Z(A)$ is a commutative symmetric Frobenius algebra in \CbC.
\end{prop}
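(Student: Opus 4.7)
The plan is to work inside the monoidal center $\Z\C$ and then transport the result to \CbC\ via the braided monoidal equivalence \erf{CbC=ZC}, which manifestly preserves the structure of a commutative symmetric Frobenius algebra. Commutativity of $Z(A)$ as an algebra in $\Z\C$ is part of the defining universal property of the full center (cf.\ Appendix \ref{Acenter}), so commutativity in \CbC\ is automatic from the equivalence being braided monoidal.

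To construct the Frobenius structure I would use the canonical morphism $e\colon Z(A)\To A$ in \C\ that comes with $Z(A)$ as part of its universal property; it realizes $Z(A)$ as the largest commutative half-braided object mapping compatibly to $A$. The counit is the natural candidate $\eps_{Z(A)}\df\eps_A\cir e$, and the unit is inherited from $\eta_A$, which lifts uniquely to $\one\To Z(A)$ in $\Z\C$ since $\one$ carries the trivial half-braiding. The coproduct is obtained by lifting $\Delta_A\cir e\colon Z(A)\To A\Oti A$ through $Z(A)\Oti Z(A)\To A\Oti A$: one inserts a pair of copies of the Frobenius idempotent $m_A\cir\Delta_A$ in each output tensorand and uses specialness ($m_A\cir\Delta_A\eq\id_A$, up to normalization) together with the defining compatibility of the half-braiding $z_{Z(A)}$ with $m_A$ to show that the composite factors through $Z(A)\Oti Z(A)$.

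The main obstacle will be the verification of the Frobenius identity
\be
(\id\Oti m_{Z(A)})\cir(\Delta_{Z(A)}\Oti\id)\eq\Delta_{Z(A)}\cir m_{Z(A)}\eq(m_{Z(A)}\Oti\id)\cir(\id\Oti\Delta_{Z(A)})
\ee
in $\Z\C$, together with the fact that $\Delta_{Z(A)}$ is a morphism of half-braided objects. Both reduce to graphical-calculus manipulations in \C\ that rely crucially on the Frobenius and specialness properties of $A$ and on the half-braiding axioms for $z_{Z(A)}$; one can organize the argument so that after naturality moves of $z_{Z(A)}$ through $m_A$ and $\Delta_A$, the identity to be checked reduces to the Frobenius relation in $A$ with all auxiliary idempotents cancelled by specialness.

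Symmetry of $Z(A)$ then follows from symmetry of $A$: since $e$ is an algebra map that intertwines $\eps_{Z(A)}$ with $\eps_A$, and since the pivotal (and hence ribbon) structure on $\Z\C$ is inherited from that on \C, the invariance of $\eps_A\cir m_A$ under the pivotal rotation of $A$ pulls back through $e\Oti e$ to the corresponding invariance for $\eps_{Z(A)}\cir m_{Z(A)}$. Applying the equivalence \erf{CbC=ZC} then finishes the proof.
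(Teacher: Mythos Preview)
The paper does not supply its own proof of this proposition; it is quoted from \cite{rffs}, and the surrounding text together with Appendix \ref{Acenter} indicates that the original argument proceeds via the concrete realization $Z(A)\eq C_l(\RR(A))$ (formula \erf{ZA=Cl}). In the modular case $\RR(A)$ has an explicit description as an object of \CbC, the left center is obtained as the image of a specific idempotent on $\RR(A)$, and the Frobenius structure is inherited from the one on $\RR(A)$ (which in turn comes from that on $A$) through this idempotent. Symmetry and commutativity are then checked by direct graphical computations with that idempotent; see also \cite[Prop.\,2.37 and Lemma 2.33]{ffrs} for the relevant statements about left centers of symmetric special Frobenius algebras.

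Your route via the abstract universal property is in principle attractive, but the coproduct step contains a real gap. You propose to lift $\Delta_A\cir e\colon Z(A)\to A\oti A$ through $Z(A)\oti Z(A)\to A\oti A$ by ``inserting copies of the Frobenius idempotent $m_A\cir\Delta_A$''. But by specialness $m_A\cir\Delta_A$ is (a scalar multiple of) $\id_A$, so this insertion is vacuous and cannot produce the required factorization. More to the point, the universal property of the full center controls morphisms \emph{into} $Z(A)$, not morphisms \emph{out of} $Z(A)\oti Z(A)$; nothing you have written forces $\Delta_A\cir e$ to land in the subobject $Z(A)\oti Z(A)$ of $A\oti A$. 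In the concrete approach of \cite{rffs} this issue is handled precisely because one has an explicit projector onto $Z(A)$ inside a larger Frobenius algebra, so that the coproduct of the ambient algebra can be sandwiched between projectors and shown to be compatible with them. Without such a projector (or an equivalent adjunction-level argument producing a comonoid structure on $Z(A)$), your construction of $\Delta_{Z(A)}$ does not go through, and the subsequent verification of the Frobenius and symmetry identities has nothing to stand on.
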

 
In field theoretic terms, the multiplication on $F$ describes the
operator product of bulk fields, while the non-degenerate pairing which 
supplies the Frobenius property reflects the non-degeneracy of 
the two-point functions of bulk fields on the sphere.


\subsection{Logarithmic CFT and finite ribbon categories}\label{LCFT+frc}

It is natural to ask whether the statements about rational CFT collected 
above, and specifically Proposition \ref{prop:rffs},
have a counterpart beyond the rational case. 

Of much interest, and particularly tractable, is the class of non-semisimple 
theories that have been 
termed \emph{logarithmic} CFTs. It appears that the categories of chiral data 
of such CFTs, while not being semisimple, still share crucial finiteness 
properties with the rational case. A relevant concept is the one of a 
\emph{finite tensor category}; this is \cite{etos} a \ko-linear
abelian rigid monoidal 
category with fi\-ni\-te-di\-men\-si\-o\-nal morphism spaces and finite set 
$\I$ of isomorphism classes of simple objects, such that each simple object 
has a projective cover and the Jordan-H\"older series of every object has 
finite length.

Unless specified otherwise, in the sequel \C\ will be assumed to be a
(strict) finite tensor category with a ribbon structure or, in short, a
\emph{finite ribbon category}.
For all such categories the coend $\Fo \eq \coendF U$ exists \cite{fuSs3}.

\begin{rem}\label{rem:squig}
Finiteness of $\I$ and existence of projective covers are, for instance, 
manifestly assumed in the conjecture \cite{qusc4,garu2} that the bulk state 
space of charge conjugation type decomposes as a left module over a single 
copy of the chiral algebra \V\ as
  \be
  \Fo \,\rightsquigarrow\,
  \bigoplus_{i\in\I} P_i^\vee \,{\otimes_\complex^{}}\, S_i \,,
  \labl{Fo_squig}
where $P_i$ the projective cover of the simple \V-module $S_i$.
\\
Note, however, that the existence of such a decomposition does by no means 
allow one to deduce the structure of \Fo\ as an object of \CbC. In particular 
there is no reason to expect that simple or projective objects of \CbC\ appear 
as direct summands of \Fo, nor that \Fo\ is a direct sum of 
`$\boxtimes$-factorizable' objects $U\boti V$ of \CbC. We do \emph{not} make 
any such assumption; our working hypothesis is solely that the bulk state space
\Fo\ for a logarithmic CFT can still be described as the coend \erf{coendCC}.
\end{rem}

\smallskip

We are now going to establish an algebra structure on the coend \erf{coendCC},
without assuming semisimplicity.
As already pointed out, the coend is not just an object, but an object
together with a dinatural family. For the bulk state space $\Fo \eq \coendF U$ 
we denote this family of morphisms by $\iF$ and its members by
  \be
  \iF_U :\quad U^\vee \boti U \to \Fo
  \ee
for $U\iN\C$. The braiding of \C\ is denoted by $c \eq (c_{U,V})$. We also need 
the canonical isomorphims that identify, for $U,V\iN\C$, the tensor product of 
the duals of $U$ and $V$ with the dual of $V\oti U$; we denote them by
  \be
  \gamma_{U,V}^{} :\quad
  U^\vee \oti V^\vee \stackrel\cong\longrightarrow (V{\otimes}U)^\vee \,.
  \labl{gammaUV}

Now we introduce a morphism $m_\Fo$ from $\Fo\oti\Fo$ to \Fo\ by setting
  \be
  m_\Fo \circ (\iF_U\oti\iF_V)
  := \iF_{V\otimes U} \circ (\gamma^{}_{U,V} \boti c^{}_{U,V})
  \labl{mFo}
for all $U,V\iN\C$. This family of morphisms from $ (U^\vee {\boxtimes} U) \oti 
(V^\vee {\boxtimes} V) \eq (U^\vee {\otimes} V^\vee)\boti(U{\otimes} V)$ to 
\Fo\ is dinatural both in $U$ and in $V$ and thereby
determines $m_\Fo$ completely, owing to the universal property of coends.

\begin{prop} \label{prop:Calgebra}~\nxl1
{\rm(i)}\, The morphism {\rm \erf{mFo}} endows the object \Fo\ with the
structure of an {\rm(}associative, unital{\rm)} algebra in \CbC.
\nxl1
{\rm(ii)}\, The multiplication $m_\Fo$ of the algebra \Fo\ is commutative.
\end{prop}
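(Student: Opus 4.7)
I would prove both parts simultaneously by exploiting the universal property of the coend: a morphism out of $\Fo^{\otimes n}$ is completely determined by its composites with $\iF_{U_1}^{} \oti \cdots \oti \iF_{U_n}^{}$, which form a dinatural family in each argument. Throughout, the braiding on $\Cb \boti \C$ decomposes on pure tensor products as $c^{\Cb\boxtimes\C}_{X\boxtimes A,\,Y\boxtimes B} = (c_{Y,X}^{})^{-1} \boti c_{A,B}^{}$, reflecting the opposite braiding of $\Cb$.

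For associativity, I would compute both $m_\Fo \cir (m_\Fo \oti \id_\Fo)$ and $m_\Fo \cir (\id_\Fo \oti m_\Fo)$ precomposed with $\iF_U \oti \iF_V \oti \iF_W$. Iterating the defining formula \erf{mFo} twice expresses each side as $\iF_{W\otimes V\otimes U}$ composed with an isomorphism living in $\Cb \boti \C$. In the $\C$-factor the two expressions are $c_{V\otimes U,W} \cir (c_{U,V} \oti \id_W)$ and $c_{U,W\otimes V} \cir (\id_U \oti c_{V,W})$, whose equality is a direct consequence of the two hexagon axioms combined with naturality of $c$ (this is the usual three-strand braid identity). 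In the $\Cb$-factor the two expressions are built from iterated $\gamma$'s; their equality is the standard coherence of the duality iso $\gamma$ with associativity (i.e.\ $\gamma_{V\otimes U,W} \cir (\gamma_{U,V}\oti\id) = \gamma_{U,V\otimes W} \cir (\id\oti \gamma_{V,W})$ when composed with the appropriate associators). For unitality I would take $U\eq\one$ (respectively $V\eq\one$) and use that $\iF_\one$ provides the unit morphism, since $\gamma_{\one,V}$ and $c_{\one,V}$ reduce to canonical unit constraints, so that \erf{mFo} specialises to $\iF_V$ itself up to the unit isomorphism.

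For commutativity, I would begin by moving the braiding $c_{\Fo,\Fo}$ past $\iF_U \oti \iF_V$ by naturality, obtaining
\[
  m_\Fo \cir c^{}_{\Fo,\Fo} \cir (\iF_U^{}\oti\iF_V^{})
  = \iF_{U\otimes V} \cir \bigl( \gamma^{}_{V,U} \cir (c^{}_{V^\vee,U^\vee})^{-1} \boti c^{}_{V,U} \cir c^{}_{U,V} \bigr) \,,
\]
which needs to be compared with $\iF_{V\otimes U} \cir (\gamma^{}_{U,V} \boti c^{}_{U,V})$. The comparison is achieved by applying the dinaturality of $\iF$ to the morphism $c_{U,V}:U\oti V \To V \oti U$, which gives $\iF_{U\otimes V}\cir ((c_{U,V})^\vee \boti \id) = \iF_{V\otimes U}\cir (\id\boti c_{U,V})$. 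The residual $\Cb$-factor identity is the standard compatibility of right duality with braiding, namely $(c_{U,V})^\vee \cir \gamma_{V,U} \cir (c_{V^\vee,U^\vee})^{-1} = \gamma_{U,V}$, which holds in any braided rigid category.

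The main obstacle is the meticulous bookkeeping of the $\gamma$'s and the inverse braidings produced by the $\Cb$ factor: it is easy to get the variance of $c^{-1}$ wrong or to misidentify which dual of a tensor product appears. Once these isomorphisms are systematically tracked and the required compatibility of $\gamma$ with $c$ is identified as the standard braided-duality coherence, both statements follow from the universal property of the coend without any further input.
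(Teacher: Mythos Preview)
Your proposal is correct and follows essentially the same route as the paper's proof: reduce via the universal property of the coend to the images of $\iF_U\oti\iF_V$ (and $\iF_U\oti\iF_V\oti\iF_W$), use the three-strand braid identity together with the coherence of the $\gamma$'s for associativity, and invoke dinaturalness of $\iF$ with respect to the braiding morphism $c_{U,V}$ (plus the compatibility $(c_{U,V})^\vee \eq c_{U^\vee,V^\vee}$) for commutativity. Two small bookkeeping slips to fix: in the $\gamma$-coherence the second composite should read $\gamma_{U,W\otimes V}\cir(\id\oti\gamma_{V,W})$ rather than $\gamma_{U,V\otimes W}$, and your final ``residual $\Cb$-factor identity'' does not type-check as written (the domain of $(c_{U,V})^\vee$ is $(V{\otimes}U)^\vee$, not $(U{\otimes}V)^\vee$); the identity you need is $\gamma_{V,U}\cir c^{-1}_{V^\vee,U^\vee} = (c_{V,U}^{-1})^\vee \cir \gamma_{U,V}$, which is precisely the content of $(c_{U,V})^\vee \eq c_{U^\vee,V^\vee}$ after conjugating by the $\gamma$'s.
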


\begin{proof}
(i)\, For $U,V,W\iN\C$ we have
  \be
  \bearl
  m_\Fo \circ (\id_\Fo \oti m_\Fo) \circ (\iF_U\oti\iF_V\oti\iF_W) 
  \Nxl2 \hsp5
  = \iF_{W\otimes V\otimes U}
  \circ (\gamma^{}_{V\otimes U,W} \boti c^{}_{V\otimes U,W})
  \circ [ (\gamma^{}_{U,V} \boti c^{}_{U,V}) \oti \id_{W^\vee\boxtimes W} ]
  \qquand \Nxl4
  m_\Fo \circ (m_\Fo \oti \id_\Fo) \circ (\iF_U\oti\iF_V\oti\iF_W) 
  \Nxl2 \hsp5
  = \iF_{W\otimes V\otimes U}
  \circ (\gamma^{}_{U,W\otimes V} \boti c^{}_{U,W\otimes V})
  \circ [ \id_{U^\vee\boxtimes U} \oti (\gamma^{}_{V,W} \boti c^{}_{V,W}) ] \,.
  \eear
  \labl{mFo-ass12}
Using the braid relation
  \be
  c^{}_{V\otimes U,W} \circ (c^{}_{U,V} \oti \id_W)
  = c^{}_{U,W\otimes V} \circ (\id_U \oti c^{}_{V,W})
  \ee
and the obvious identity
  \be
  \gamma^{}_{V\otimes U,W} \circ (\gamma^{}_{U,V} \oti \id_{W^\vee})
  = \gamma^{}_{U,W\otimes V} \circ (\id_{U^\vee} \oti \gamma^{}_{V,W})
  \ee
in $\HomC(U^\vee\otimes V^\vee\otimes W^\vee,(W\otimes V\otimes U)^\vee)$, it 
follows that for any triple $U,V,W$ the two morphisms in \erf{mFo-ass12} 
coincide, and thus
  \be
  m_\Fo \circ (\id_\Fo \oti m_\Fo) = m_\Fo \circ (m_\Fo \oti \id_\Fo) \,.
  \ee
This shows associativity. Unitality is easy; the unit morphism is given by
  \be
  \eta_\Fo^{} = \iF_\one ~\in \HomCC(\one\boti\one,\Fo) \,.
  \ee
(ii)\, Denote by $c^\CbC$ the braiding in \CbC. We have
  \be
  \bearll
  m_\Fo \circ c^\CbC_{\Fo,\Fo} \circ (\iF_U\oti\iF_V) \!\!&
  = m_\Fo \circ (\iF_V\oti\iF_U) \circ (c^{-1}_{V^\vee,U^\vee} \boti c^{}_{U,V})
  \Nxl3 &
  = \iF_{U\otimes V} \circ (\gamma^{}_{V,U} \boti c^{}_{V,U})
  \circ (c^{-1}_{V^\vee,U^\vee} \boti c^{}_{U,V})
  \Nxl3 &
  = \iF_{U\otimes V} \circ [ (\gamma^{}_{V,U} \cir
  c^{-1}_{V^\vee,U^\vee}) \boti (c^{}_{V,U} \cir c^{}_{U,V}) ]
  \Nxl3 &
  = \iF_{V\otimes U} \circ [ \gamma^{}_{U,V} \boti (c_{V,U}^{-1}
  \cir c^{}_{V,U} \cir c^{}_{U,V}) ]
  \,= m_\Fo \circ (\iF_U\oti\iF_V) \,.
  \eear
  \ee
Here the crucial step is the fourth equality, in which the dinaturalness 
property of $\iF$ (together with $(c_{U,V})^{\vee} \eq c_{U^\vee,V^\vee}$) is 
used. We conclude that $m_\Fo \cir c^\CbC_{\Fo,\Fo} \eq m_\Fo$, i.e.\ the 
algebra $(\Fo,m_\Fo,\eta_\Fo)$ is commutative.
\nxl1
Let us also point out that working with a non-strict monoidal structure would make 
the formulas appearing here more lengthy, but the proof would carry over easily.
\end{proof}


\subsection{The coend as a bimodule of a factorizable Hopf algebra}\label{ssec:coendFH}

Proposition \ref{prop:Calgebra} is about as far as we can get, for now, for
general finite ribbon categories. To obtain a stronger result we
specialize to a particular subclass, consisting of categories \C\ that are
equivalent to the category \HMod\ of (\findim\ left) modules over
a \emph{factorizable Hopf algebra} $H$. Such an algebra is, in short, a \findim\
Hopf algebra $(H,m,\eta,\Delta,\eps,\apo)$ (over an algebraically closed field
\ko\ of characteristic zero) that is endowed with an R-matrix $R \iN H\otik H$ 
and a ribbon element $v \iN H$ and for which the monodromy matrix
$Q \eq R_{21}\,{\cdot}\, R$ is non-degenerate. Some more details about 
this class of algebras are supplied in Appendix \ref{facHopf}.

It is worth pointing out that categories belonging to this subclass which are 
relevant to CFT models are are well known, namely \cite{dipR,cogR,ffss} the 
semisimple representation categories of Drinfeld doubles of finite groups.
Logarithmic CFTs for which the category of chiral data is believed
to be at least very close to the type of category considered here are
the $(1,p)$ triplet models, see e.g.\ \cite{fgst,fgst4,naTs2,tswo,rugw}.

\medskip

For $H$ a \findim\ ribbon Hopf algebra, the category \HMod\ carries a natural 
structure of finite ribbon category. The monoidal structure (which again we 
tacitly take to be strictified) and dualities precisely require the algebra 
$H$ to be Hopf: the tensor product is obtained by pull-back of the $H$-action 
along the coproduct $\Delta$, the tensor unit is $\one \eq (\ko,\eps)$, and 
left and right dualities are obtained from the duality for \Vectk\ with the 
help of the antipode. The braiding $c$ on \HMod\ is given by the action of the 
R-matrix composed with the flip map $\tau$, while the twist $\theta$ is 
provided by acting with the inverse $v^{-1}$ of the ribbon element.

In a fully analogous manner one can equip the category \HBimod\ of 
finite-dimensional $H$-bi\-mo\-dules with the structure of a finite ribbon 
category as well: Pulling back both the left and right $H$-actions along 
$\Delta$ gives again a tensor product. Explicitly, the tensor product
of $H$-bi\-mo\-dules $(X,\rho_X,\ohr_X)$ and $(Y,\rho_Y,\ohr_Y)$ 
is the tensor product over \ko\ of the underlying \ko-vector spaces $X$ 
and $Y$ together with left and right actions of $H$ given by
   \be
   \bearl
   \rho_{X\otimes Y}^{} := (\rho_X \oti \rho_Y) \circ (\id_H \oti \tau_{H,X} \oti
                        \id_Y) \circ (\Delta \oti \id_X \oti \id_Y)  \qquand
   \Nxl3
   \ohr_{X\otimes Y}^{} := (\ohr_X \oti \ohr_Y) \circ (\id_X \oti \tau_{Y,H} \oti
                        \id_H) \circ (\id_X \oti \id_Y \oti \Delta) \,.
   \eear
   \labl{def-tp}
The tensor unit is the one-dimensional vector space \ko\ with both left and 
right $H$-action given by the counit, $\one_{H\text{-Bimod}} \eq (\ko,\eps,\eps)$. 
A braiding $c$ on the so obtained monoidal category is obtained by composing the
flip map with the action of the R-matrix $R$ from the right and the action of 
its inverse $R^{-1}$ from the left, and a twist $\theta$ is provided by 
  \be
  \theta_X = \rho \circ (\id_H \oti \ohr) \circ (v \oti \id_X \oti v^{-1}) \,,
  \labl{deftwist}
i.e.\ by acting with the ribbon element $v$ from the left and with its inverse
from the right.
(For a visualization of the braiding and the twist isomorphisms
in terms of the graphical calculus for the symmetric monoidal category \Vectk\
see formulas (3.3) and (4.19), respectively, of \cite{fuSs3}.)

The category \HBimod\ with this structure of ribbon category 
is of interest to us because it can be shown
to be braided equivalent to $(H{\otimes_\ko} H\op_{})$\Mod\ and thus to
the enveloping category $\CbC \eq \HMod^{\rm rev} \,{\boxtimes}\, \HMod$. 
The functors furnishing the equivalence are the identity on morphisms
and act on objects $(M,\rho^{H\Oti H\op_{}})\iN (H{\otimes_\ko} H\op_{})$\Mod,
respectively $(M,\rho^H,\ohr^{\!H})$, as \cite[Lemma\,A.4]{fuSs3}
   \eqpic{HHbimiso} {410} {36} {
   \put(0,0)   {\Includepichtft{65a}
   \put(-4,-8.5) {\sse$ H $}
   \put(6,-8)    {\sse$ H $}
   \put(24.1,-8.5) {\sse$ M $}
   \put(24.8,86) {\sse$ M $}
   \put(-5.5,48.5) {\sse$ \rho^{H\Oti H\op_{}} $}
   }
   \put(58,34)   {$ \mapsto$}
   \put(100,0) { \Includepichtft{65b}
   \put(-4,-8.5) {\sse$ H $}
   \put(12.3,-8.5) {\sse$ M $}
   \put(12.9,86) {\sse$ M $}
   \put(29,-8.5) {\sse$ H $}
   \put(36,25)   {\sse$ \apo^{-1} $}
   \put(-17.2,27.9){\sse$ \rho^{H\Oti H\op_{}} $}
   \put(-17.2,67.9){\sse$ \rho^{H\Oti H\op_{}} $}
   }
   \put(189,34)  {and}
   \put(248,0) {\Includepichtft{65c}
   \put(-4,-8.5) {\sse$H $}
   \put(5.2,55.4){\sse$ \rho^H $}
   \put(13,-8.5) {\sse$ M $}
   \put(13.8,86) {\sse$ M $}
   \put(22.2,43.4){\sse$ \ohr^{\!H} $}
   \put(32,-8.5) {\sse$ H $}
   }
   \put(308,34)   {$ \mapsto$}
   \put(340,0) { \Includepichtft{65d}
   \put(-4,-8.5) {\sse$ H $}
   \put(7,-8.5)  {\sse$ H $}
   \put(22,-8.5) {\sse$ M $}
   \put(22.8,86) {\sse$ M $}
   \put(36,29.5) {\sse$ \apo $}
   } }
We will henceforth identify the enveloping category with \HBimod\ and
present our results in the language of $H$-bimodules. In particular we
think of the coend $\Fo \eq \coendFx U$ as an $H$-bimodule; we find

\begin{thm}\label{thm:Fo} {\rm \cite[Prop.\,A.3]{fuSs3}}\\
The coend $\coendFx U$ in the category \HBimod\ is the \emph{coregular bimodule},
that is, the vector space $\Hs\eq \Homk(H,\ko)$ dual to $H$ endowed with 
the duals of the left and right regular $H$-actions, i.e.\
   \be
   \bearl
   \rho_\Fo = (d^\ko_H\oti\idHs) \circ (\idHs\oti m\oti\idHs)
   \cir (\idHs\oti\apo\oti b^\ko_H) \circ \tauHHv \qquand
   \nxl3
   \ohr_\Fo = (d^\ko_H\oti\idHs) \circ (\idHs\oti m\oti\idHs)
   \circ (\idHs\oti\id_H\oti\tauHvH) \circ (\idHs\oti b^\ko_H \oti\apoi) \,.
   \eear
   \labl{rhorho}
together with the dinatural family of morphisms given by
  \be
  \iF_U := \big[ ( d^\ko_U \cir (\idUs \oti \rho_U) ) \oti \idHs \big] \circ
  \big[\idUs\oti ( (\tau_{U,\Hss}\oti \idHs) \cir (\id_U\oti b^\ko_\Hss) ) \big]
  \labl{iFHMod}
for any $H$-module $(U,\rho_U)$.
\end{thm}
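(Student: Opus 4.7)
My approach would be to work entirely within $\HBimod$, using the braided equivalence of \erf{HHbimiso} to identify each object $U^\vee \boxtimes U$ with a concrete $H$-bimodule on the vector space $U^* \otik U$, whose left and right $H$-actions are built from $\rho_U$, the (co)adjoint action on $U^*$, and antipode insertions dictated by \erf{HHbimiso}. I would then show that the coregular bimodule $\Hs$ equipped with the family $\iF_U$ satisfies the defining universal property of the coend $\coendFx U$.

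First, I would check that \erf{rhorho} defines commuting left and right $H$-actions on $\Hs$; these are the duals of the left and right regular actions, with antipodes inserted so as to flip sides on the dual, and both associativity and mutual commutativity reduce to coassociativity of $\Delta$ together with the fact that $\apo$ is an anti-algebra homomorphism, each a short graphical computation. Next I would verify that $\iF_U$ is a bimodule map intertwining the transported bimodule structure on $U^\vee \boxtimes U$ with \erf{rhorho}. Concretely, $\iF_U$ sends $\phi \oti u \iN U^* \otik U$ to the linear functional $h \mapsto \phi(\rho_U(h \oti u))$ on $H$; with this description, equivariance under both $H$-actions is immediate from the module axioms and the defining property of $\apo$. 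Dinaturalness, $\iF_V \cir (f^\vee \boxtimes \id_V) \eq \iF_U \cir (\id_{U^\vee} \boxtimes f)$ for morphisms $f : U \to V$ in $\C$, then follows from naturality of $d^\ko_U$ together with the $H$-linearity of $f$.

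The main step is the universal property: given any $H$-bimodule $X$ and any dinatural family $j_U : U^\vee \boxtimes U \to X$ of bimodule morphisms, to produce a unique bimodule map $\phi : \Hs \to X$ with $\phi \cir \iF_U \eq j_U$ for all $U$. The idea is to pin $\phi$ down by evaluating the family at $U \eq H$, the left regular $H$-module: for every $u \iN U$ the map $\hat u : H \to U$, $h \mapsto \rho_U(h \oti u)$ is $H$-linear, and dinaturalness along $\hat u$ forces each $j_U$ to be determined by $j_H$. The component $\iF_H : \Hs \otik H \to \Hs$ is split by a map built from the counit on the $H$-factor, so $\phi$ can be defined as the composite of $j_H$ with a suitable section $\Hs \to \Hs \otik H$ coming from $b^\ko_H$. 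The identity $\phi \cir \iF_U \eq j_U$ for every $U$ then follows from applying dinaturalness to $\hat u$, and uniqueness is automatic from the dinatural factorization. The delicate part, and the main obstacle, is verifying that $\phi$ is genuinely a bimodule morphism: this requires carefully tracking how the antipode insertions in the equivalence \erf{HHbimiso}, in the coregular actions \erf{rhorho}, and in the formula \erf{iFHMod} cancel against each other, ultimately reducing to the defining Hopf algebra axioms.
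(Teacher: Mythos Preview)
The paper does not supply its own proof of this statement; it is quoted from \cite[Prop.\,A.3]{fuSs3}, and the surrounding text only adds the remark (citing \cite[Lemma\,A.2]{fuSs3}) that $\iF_U$ is indeed an $H$-bimodule morphism once $U^*\otik U$ is given the bimodule structure induced by the equivalence \erf{HHbimiso}. So your outline is not being compared against a proof in this paper; on its own merits it is the standard and correct argument for identifying such a coend, via the regular module $H$ as a generator.

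Two small points. First, the section of $\iF_H$ you need is $\psi \,{\mapsto}\, \psi \oti 1_H$, i.e.\ $\id_{\Hs}\oti\eta$, built from the \emph{unit} of $H$ rather than from $b^\ko_H$ or the counit; one then has $\iF_H(\psi\oti 1_H)(h) \eq \psi(h\cdot 1_H) \eq \psi(h)$. Second, you overstate the difficulty of the last step. Once you know that $\iF_H$ is a surjective bimodule map and that $\Phi\cir\iF_H \eq j_H$ with $j_H$ a bimodule map, $H$-bilinearity of $\Phi$ follows formally: precompose both $\Phi\cir\rho_{\Fo}$ and $\rho_X\cir(\id_H\oti\Phi)$ with the epimorphism $\id_H\oti\iF_H$, use equivariance of $\iF_H$ and $j_H$, and cancel. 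No further antipode bookkeeping is required at that stage; the antipodes have already done their job in making each $\iF_U$ a bimodule morphism.
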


Here $d^\ko$ and $b^\ko$ are the evaluation and coevaluation maps for the 
duality in \Vectk, respectively. Thus in particular \erf{iFHMod} describes 
$\iF_U$ in the first place only as a linear map from $U^*\otik U$ to \Hs. But it
can be checked \cite[Lemma\,A.2]{fuSs3} that if $\Hs\ \eq \Homk(H,\ko)$ is given
the structure of the coregular $H$-bimodule and $U^*\otik U$ the $H$-bimodule 
structure 
   \be
   (U^*\otik U,\rho,\ohr) :=
   \big( U^*\otik U\,,\, \rho_{U^\vee_{}}\oti\id_U\,,\, \idUs
   \oti(\rho_U\cir \tau_{U,H}^{} \cir (\id_U\oti\apo^{-1})) \big)
   \ee
that is implied by the equivalence between $\HMod^{\rm rev}\,{\boxtimes}\,\HMod$
and \HBimod, then $\iF_U$ is actually a morphism in \HBimod.

Now due to our finiteness assumptions, $H$ has an integral and cointegral. 
Denote by $\Lambda \iN H$ and $\lambda \iN \Hs$ the integral and cointegral,
respectively, normalized according to the convention \erf{Lambdalambda}.
We henceforth denote the coend $\coendFx U$ again by \Fo\ and set
  \be
  \begin{array}{ll}
  m_\Fo^{} := {\Delta^{\phantom:}}^{\!\!\!*_{}} :\quad \Fo\oti \Fo \To \Fo\,,
  \qquad\quad
  \eta_\Fo := \eps^* :\quad \one \To \Fo\,, 
  \Nxl4
  \Delta_\Fo := {[ (\mbox{\sl id}_H \,{\otimes}\, (\lambda\,{\circ}\, m))
  \cir (\mbox{\sl id}_H\,{\otimes}\,\mbox{\sc s}\,{\otimes}\,\mbox{\sl id}_H)
  \cir (\Delta\,{\otimes}\,\mbox{\sl id}_H) ]}^* :\quad \Fo \To \Fo\oti \Fo
  \qquad~{\rm and}
  \Nxl4
  \eps_\Fo := {\Lambda^{\phantom:}}^{\!\!\!*} :\quad \Fo \To \one \,.
  \end{array}
  \labl{Ffro}
Again these are introduced as linear maps between the respective underlying 
vector spaces, but are actually morphisms of $H$-bimodules, as indicated.
This way \Fo\ is endowed with a Frobenius algebra structure, as befits 
the bulk state space of a conformal field theory:

\begin{thm} {\rm \cite[Thm.\,2]{fuSs4}}\label{thmfuSs4}\\
For $H$ a factorizable Hopf algebra, the bimodule morphisms {\rm \erf{Ffro}} 
endow the coend \Fo\ with a natural structure of a commutative symmetric 
Frobenius algebra with trivial twist in the ribbon category \HBimod.
\end{thm}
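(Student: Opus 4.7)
\emph{Proof outline.} The strategy is to reduce the claim to Proposition~\ref{prop:Calgebra} and to identities satisfied by the integral $\Lambda$ and cointegral $\lambda$ of $H$. Throughout, I use the identification $\Fo \eq \Hs$ and the explicit dinatural family from Theorem~\ref{thm:Fo} to translate categorical statements into statements internal to $H$.

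First I would check that the linear map $m_\Fo \eq \Delta^*$ prescribed in \erf{Ffro} coincides with the abstract coend multiplication of \erf{mFo}. Composing \erf{mFo} with the explicit morphisms $\iF_U \oti \iF_V$ built from \erf{iFHMod}, and using that the braiding on \HMod\ is the R-matrix composed with the flip while $\gamma^{}_{U,V}$ is the canonical duality isomorphism, the composite reduces to the defining adjunction between $\Delta$ and $\Delta^*$. Granted this identification, associativity and commutativity of $m_\Fo$, as well as its being a morphism in \HBimod, are inherited from Proposition~\ref{prop:Calgebra}, and unitality of $\eta_\Fo \eq \eps^*$ is dual to the counit axiom of $H$.

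The heart of the proof is the coalgebra and Frobenius structure. I would verify, in order, that (a) $\Delta_\Fo$ and $\eps_\Fo$ are morphisms of $H$-bimodules for the coregular structure \erf{rhorho}; (b) they satisfy counit and coassociativity; and (c) they satisfy the Frobenius compatibility
$(\id_\Fo \oti m_\Fo) \cir (\Delta_\Fo \oti \id_\Fo) \eq \Delta_\Fo \cir m_\Fo \eq (m_\Fo \oti \id_\Fo) \cir (\id_\Fo \oti \Delta_\Fo)$.
Item (a) follows from the invariance identity $h\,\Lambda \eq \eps(h)\,\Lambda$ for the integral together with its dual for $\lambda$ and the normalization \erf{Lambdalambda}. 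Item (b) follows by dualization from standard identities for finite-dimensional Hopf algebras mixing $\Delta$, $m$, $\apo$ and $\lambda$. Item (c), after dualization to $H$, amounts to the non-degeneracy of the pairing $x\oti y \mapsto \lambda(xy)$ on $H$ with copairing expressed in terms of $\Lambda$ and $\apo$ — the standard $H$-level Frobenius structure. The main obstacle lies in promoting this $H$-level Frobenius structure to a compatibility with the coregular bimodule action: this demands a careful combination of the (co)integral identities with the antipode relation, and it is precisely here that factorizability of $H$ (which guarantees unimodularity and hence a well-behaved interaction between $\Lambda$, $\lambda$ and $\apo$) enters essentially.

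The remaining properties then follow cleanly. The \emph{symmetric} property $\eps_\Fo \cir m_\Fo \cir c^\CbC_{\Fo,\Fo} \eq \eps_\Fo \cir m_\Fo$ rests on unimodularity of factorizable Hopf algebras, which makes $\Lambda$ a two-sided integral, together with the Radford-type identity relating $\lambda(xy)$ to $\lambda(yx)$ through the ribbon element, so that the induced pairing on $\Hs$ is symmetric in \HBimod. For the \emph{trivial twist}, substituting the coregular actions \erf{rhorho} into the definition \erf{deftwist} of $\theta_\Fo$ and invoking centrality of the ribbon element $v$ in $H$, the left and right $v$-actions on $\Hs$ cancel and $\theta_\Fo \eq \id_\Fo$.
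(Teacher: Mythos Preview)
The paper does not itself prove this theorem; it is quoted with attribution to \cite[Thm.\,2]{fuSs4}, and the detailed verification lives in that reference (and, for the underlying bimodule and algebra structure, in \cite{fuSs3}). The only argument the present paper supplies is the remark immediately following the theorem statement, which confirms exactly the first step of your outline: inserting the explicit dinatural family \erf{iFHMod} into the categorical product \erf{mFo} reproduces $m_\Fo \eq \Delta^*$, and likewise for $\eta_\Fo$, so that the commutative associative algebra structure is the one already established in Proposition~\ref{prop:Calgebra}.

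Your sketch is in the right spirit and its architecture --- reducing the algebra part to Proposition~\ref{prop:Calgebra} and then checking bimodule compatibility, the Frobenius relations, symmetry, and trivial twist via integral/cointegral identities and unimodularity --- is how the cited proofs proceed. Two small cautions. First, the symmetry you need is symmetry \emph{in \HBimod}, i.e.\ with respect to the braiding built from $R$ on the right and $R^{-1}$ on the left, not the vector-space flip; so the relevant computation is not a Radford-type trace identity for $\lambda(xy)$ versus $\lambda(yx)$, but rather an explicit check that $\eps_\Fo \cir m_\Fo$ is invariant under that bimodule braiding, using unimodularity and the quasitriangular axioms. Second, for trivial twist you should use not only centrality of $v$ but also $\apo(v) \eq v$ (part of the ribbon axioms \erf{def-ribbon}), since the coregular actions \erf{rhorho} involve the antipode; with both facts in hand your cancellation argument goes through.
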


We note that this assertion is in full agreement with the result of Proposition
\ref{prop:Calgebra} which holds for general finite ribbon categories. Indeed, 
by implementing the explicit form \erf{iFHMod} of the dinaturality morphisms, 
the product $m_\Fo$ that we defined in \erf{mFo} for any finite ribbon category
\C\ reproduces the expression for $m_\Fo$
in \erf{Ffro}, and likewise for $\eta_\Fo$.

It will be convenient to work with the pictorial expressions for the maps
\erf{Ffro} that are furnished by the graphical calculus for tensor categories.
They are
   \Eqpic{pic-Hb-Frobalgebra} {440} {47} {
   \put(0,45)      {$ m_\Fo ~= $}
     \put(50,0)  {\Includepichtft{79a}
   \put(-5.9,-8.8) {\sse$ \Hss $}
   \put(6.5,-8.8)  {\sse$ \Hss $}
   \put(31.5,34.7) {\sse$ \Delta $}
   \put(49.7,106.8){\sse$ \Hss $}
   }
   \put(144,45)    {$ \eta_\Fo ~= $}
     \put(192,24) {\Includepichtft{81j}
   \put(-5.4,23.6) {\sse$ \eps $}
   \put(10.7,44.1) {\sse$ \Hss $}
   }
   \put(250,45)    {$ \Delta_\Fo ~= $}
     \put(300,0) {\Includepichtft{82a}
   \put(-4.3,-8.8) {\sse$ \Hss $}
   \put(11.1,26.8) {\sse$ \Delta $}
   \put(17.7,43.8) {\sse$ \apo $}
   \put(21.5,71)   {\sse$ \lambda $}
   \put(32.8,59.3) {\sse$ m $}
   \put(48.2,89.4) {\sse$ \Hss $}
   \put(61.1,89.4) {\sse$ \Hss $}
   }
   \put(403,45)    {$ \eps_\Fo ~= $}
     \put(447,24) {\Includepichtft{82b}
   \put(-4.3,-8.8) {\sse$ \Hss $}
   \put(15.8,16.3) {\sse$ \Lambda $}
   } }
(Such pictures are to be read from bottom to top.)

The result just described generalizes easily from \Fo\ to the automorphism-twisted
versions $\Fomega$ as defined in \erf{Fomega}. Namely \cite[Prop.\,6.1]{fuSs3},
for any Hopf algebra automorphism $\omega$ of $H$ the $H$-bimodule that is obtained
from the coregular bimodule by twisting the right $H$-action by $\omega$, i.e.\ 
  \be
  \Fomega = \big( \Hs,\rho_\Fo^{},\,\ohr^{}_\Fo\cir(\idHs\oti\omega) \big) \,,
  \labl{def:Fomega}
carries the structure of a Frobenius algebra, which is commutative, symmetric 
and has trivial twist. The structural morphisms for the Frobenius structure are
again given by \erf{Ffro}, i.e.\ as linear maps they are identical with those 
for \Fo.

We also note that if $H$ is semisimple, then $\Fomega$ carries the structure 
of a Lagrangian algebra in the sense of \cite[Def.\,4.6]{dmno}.


\section{Handle algebras}\label{sec:hndl}

Besides in the description of the bulk state space, there is another issue in 
CFT in which one needs to perform a sum over all states for the full local 
theory, namely when one wants to specify the relation between correlators on 
world sheets that are obtained from each other by \cite{sono2}
\emph{sewing} (respectively, looking at the process from the other end,
by \emph{factorization}) as `summing over all intermediate states'.

Let us first consider this relationship for rational CFT, and at the level 
of spaces of chiral blocks. A rational CFT furnishes a modular functor, and this
functor is representable \cite[Lemma\,5.3.1]{BAki}. Accordingly the space 
$V(E)$ of chiral blocks for a Riemann surface $E$ is isomorphic to the 
morphism space $\HomC(U_E,\one)$ for a suitable object $U_E \iN \C$; in 
particular, if $E$ has genus zero and $n$ ingoing (say) chiral insertions 
$U_1,U_2,...\,,U_n$, then $U_E \,{\cong}\, U_1 \oti U_2 \oti{\cdots}\oti U_n$.
Let now the Riemann surface $E^1$ be obtained from the connected 
Riemann surface $E^\circ$ by removing two disjoint open disks $D_\pm$ and gluing 
the resulting boundary circles to each other, whereby the genus increases by 1.
Then there is an isomorphism 
  \be
  \bigoplus_{i\in\I}\, V(E^\circ_{i\bar i}) \stackrel{\cong}{\longrightarrow} V(E^1)
  \ee
between the space $V(E^1)$ and the direct sum of all
spaces $V(E^\circ_{i\bar i})$, where the surface $E^\circ_{i\bar i}$ is obtained
from $E^\circ$ by introducing chiral insertions $S_i$ and $S_{\bar i} \,{\cong}\,
S_i^\vee$, respectively, in the disks $D_\pm$ (see e.g.\ 
\cite[Def.\,5.1.13(iv)]{BAki}). In terms of morphism spaces of \C\ this amounts to
  \be
  V(E^1) \,\cong\,
  \bigoplus_{i\in\I} \HomC( S_i^\vee \oti S_i^{} \oti U_{E^\circ_{}} ,\one)
  \,\cong\, \HomC(\LR \oti U_{E^\circ_{}} ,\one) \,,
  \ee
where on the right hand side we introduced the object 
  \be
  \LR := \bigoplus_{i\in\I} S_i^\vee \oti S_i^{} ~\in \C \,.
  \labl{ssiL}
By induction, the space of chiral blocks for a genus-$g$ 
surface with ingoing field insertions $U_1$, $U_2,,...\,,U_n$ is then
  \be
  V(E) \,\cong\,
  \HomC( \LR^{\otimes g} \oti U_1 \oti U_2 \oti{\cdots}\oti U_n ,\one) \,,
  \labl{VE}
i.e.\ the object $\LR$ appears to a tensor power given by the number of handles.
Also, as we will point out soon, the object $\LR$ of \C\ carries a natural 
structure of a Hopf algebra internal to \C; it is therefore called the 
(chiral) \emph{handle Hopf algebra}.

Invoking holomorphic factorization, the correlation function $\mathrm{Cor}(\Sigma)$
for a world sheet
$\Sigma$ is an element in the space of chiral blocks for the \emph{complex double}
$\widehat\Sigma$ of the world sheet. Taking $\Sigma$ to be orientable and
with empty boundary and all field insertions on $\Sigma$ to
be the whole bulk state space $F$, one has
  \be
  \mathrm{Cor}(\Sigma) \,\in\, V(\widehat\Sigma)
  \,\cong\, \HomCC(\KR^{\otimes g} \oti F^{\otimes m},\one)
  \labl{CorSigma}
with
  \be
  \KR := \LR \boxtimes \LR = \bigoplus_{i,j\in\I}\, (S_i^\vee \boti S_j^\vee) \otimes
  (S_i^{} \boti S_j^{}) ~\in \CbC \,.
  \labl{ssiK}
$\KR$ is called the \emph{bulk handle Hopf algebra} of the rational CFT based 
on the modular tensor category \C.
Note that each of the objects $S_i^{} \boti S_j^{}$ with $i,j\iN\I$ is simple
and together they exhaust the set of simple objects of \CbC, up to isomorphism.

\medskip

It is tempting to think of the gluing of a handle to a Riemann surface as a
means for inserting a complete set of intermediate states. 
In view of the isomorphism \erf{VE} then immediately the question arises why 
it is precisely the object $\LR$ that does this job. And again the categorical 
notion of a coend turns out to provide the proper answer. Indeed, just like 
the bulk state space $\Fo\eq\FoR$, the objects $L \eq \LR \iN\C$ and $K \eq 
\KR \iN\CbC$ can be recognized as the coends of suitable functors, namely as
  \be
  L = \int^{U\in\C}\! U^\vee \oti U \qquand
  K = \int^{X\in\CbC}\!\! X^\vee \oti X \,,
  \labl{defLK}
respectively (together with corresponding families of dinatural 
transformations, whose explicit form we do not need at this point).  

Moreover, just like in the discussion of \Fo\
the description of $L$ and $K$ as the coends \erf{defLK} remains valid 
beyond the semisimple setting: These coends exist not only when \C\ is a modular
tensor category, i.e.\ for rational CFT, but also for more general categories,
and in particular for any finite tensor category \C. Note that here the 
statement for $K$ is redundant, as it is just obtained from the from the one 
for $L$ by replacing \C\ with \CbC, and \CbC\ inherits all relevant
structure and properties from \C. This applies likewise to other issues, like 
e.g.\ the Hopf algebra structure on these objects, and accordingly we will 
usually refrain from spelling them out for $L$ and $K$ separately.

As already announced, we have

\begin{thm} {\rm \cite{lyub6,kerl5}} \\
For \C\ a finite ribbon category, the coend $L\eq \int^{U\in\C} U^\vee \oti U$
carries a natural structure of a Hopf algebra in $\mathcal C$. It has an 
integral $\Lambda_L \iN \HomC(\one,L)$ and a Hopf pairing 
$\varpi_L \iN \HomC(L\oti L,\one)$.
\end{thm}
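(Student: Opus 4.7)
The plan is to mirror the argument used for $\Fo$ in Proposition \ref{prop:Calgebra}: every piece of Hopf structure on $L$ is specified first in terms of its precomposition with the dinatural family $i_U: U^\vee \oti U \To L$, and then upgraded to a well-defined morphism on $L$ by invoking the universal property of the coend. Every axiom is subsequently verified by showing that the two sides agree after precomposition with a suitable tensor product of dinatural morphisms, whereupon uniqueness in the coend's universal property finishes the job.

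Concretely, the multiplication $m_L$ is defined by the same prescription as in \erf{mFo}, but now internal to $\C$, namely $m_L \cir (i_U \oti i_V) := i_{V\otimes U} \cir (\gamma^{}_{U,V} \oti c^{}_{U,V})$. The unit is $\eta_L := i_\one$. The counit $\eps_L: L \To \one$ is determined by the dinatural family of evaluations $d_U: U^\vee \oti U \To \one$. The comultiplication is given on dinatural generators by $\Delta_L \cir i_U := (i_U \oti i_U) \cir (\id_{U^\vee} \oti b_U \oti \id_U)$, i.e.\ by inserting a coevaluation of $U$. The antipode $S_L$ is assembled from the canonical identification $U^\vee \oti U \cong (U^{\vee\vee})^\vee \oti U^\vee$ composed with the twist, producing an anti-algebra and anti-coalgebra morphism in the standard Lyubashenko fashion.

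Next, the Hopf-algebra axioms are verified exactly as in the proof of Proposition \ref{prop:Calgebra}(i): one precomposes both sides with the appropriate tensor power of $i$'s and reduces to a combinatorial identity of braidings, twists, dualities and $\gamma$'s. Associativity repeats the argument given there verbatim; coassociativity is dual, using naturality of $b$ together with the hexagon axiom; the compatibility of $\Delta_L$ with $m_L$ reduces to the braid relations together with the snake identities for the duality; and the antipode identities follow from the ribbon compatibility between twist and duality.

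The main obstacle is the production of the integral $\Lambda_L \iN \HomC(\one,L)$ and of the Hopf pairing $\varpi_L \iN \HomC(L\oti L,\one)$. The pairing is morally the categorical version of the monodromy matrix $Q \eq R_{21}R$: on dinatural generators, $\varpi_L \cir (i_U \oti i_V)$ is the Hopf-link morphism built from a double braiding together with two evaluations, and its Hopf-pairing property reduces to the hexagon axioms. The integral is the genuinely delicate point: in the semisimple modular case it is simply $\bigoplus_{i\in\I} i_{S_i} \cir b_{S_i}$, but in the non-semisimple case no such decomposition is available and one must invoke the finiteness of $\C$ together with the existence and uniqueness (up to scalar) of a two-sided integral in any \findim\ Hopf algebra, transported to the categorical setting via the equivalences discussed in the introduction and Appendix \ref{facHopf}.
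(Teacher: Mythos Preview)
The paper does not actually prove this theorem; it is quoted from \cite{lyub6,kerl5} and only the explicit structural morphisms are recorded afterwards. So there is no ``paper's own proof'' to compare against. That said, your proposal has genuine errors that would prevent it from going through even as a standalone argument.

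First, your formula for the product is wrong. You write $m_L \cir (\iota_U \oti \iota_V) = \iota_{V\otimes U} \cir (\gamma_{U,V} \oti c_{U,V})$, which is the formula \erf{mFo} for the product on $\Fo$ in $\CbC$. That works there because the source $(U^\vee\,{\boxtimes}\,U)\oti(V^\vee\,{\boxtimes}\,V)$ equals $(U^\vee{\otimes}V^\vee)\,{\boxtimes}\,(U{\otimes}V)$, so the two $\boxtimes$-factors can be treated independently. In $\C$ the source of $m_L\cir(\iota_U\oti\iota_V)$ is $U^\vee\oti U\oti V^\vee\oti V$, and to reach $(V{\otimes}U)^\vee\oti(V{\otimes}U)$ you must first braid $U$ past $V^\vee{\otimes}V$, as in the paper's formula
\[
  m_L \circ (\iota_U\oti \iota_V) = \iota_{V\otimes U} \circ (\gamma_{U,V}
  \oti\id_{V\otimes U}) \circ (\id_{U^\vee} \oti c_{U,V^\vee\otimes V}) \,.
\]
With your formula the domains simply do not match, so the associativity computation you refer to cannot be carried over ``verbatim'' from Proposition~\ref{prop:Calgebra}.

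Second, your treatment of the integral is not a proof. The existence of a (two-sided) integral for a Hopf algebra object in a braided finite tensor category is precisely the hard content of \cite{lyub6,kerl5}; it does not follow by ``transporting'' the classical Larson--Sweedler theorem via Appendix~\ref{facHopf}, because that appendix concerns Hopf algebras in $\Vectk$, not Hopf algebras internal to $\C$. The actual arguments in the cited references use either the finiteness of $\C$ to realize $L$ concretely (e.g.\ as a coadjoint module as in \erf{def_lads} when $\C\simeq\HMod$) or a categorical analogue of the Radford--Larson--Sweedler machinery developed specifically for this purpose. Your last paragraph gestures at this but does not supply an argument.
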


The structural morphisms of $L$ as a Hopf algebra are given by
  \be
  \bearl
  m_L \circ (\iota_U\oti \iota_V) := \iota_{V\otimes U} \circ (\gamma_{U,V}
  \oti\id_{V\otimes U}) \circ (\id_{U^\vee} \oti c_{U,V^\vee\otimes V}) \,,
  \Nxl3
  \eta_L := \iota_\one \,,
  \Nxl3
  \Delta_L \circ \iota_U := (\iota_U\oti \iota_U)
  \circ (\id_{U^\vee} \oti b_U \oti \id_U)  \,,
  \Nxl3
  \eps_L \circ \iota_U := d_U \,,
  \Nxl3
  \apo_L \circ \iota_U := (d_U \oti \iota_{U^\vee})
  \circ (\id_{U^\vee} \oti c_{U^{\vee\!\vee}_{}\!,U}
  \oti \id_{U^\vee}) \circ (b_{U^\vee} \oti c_{U^\vee\!,U}) \,,
  \eear
  \ee
and the Hopf pairing is
  \be
  \varpi_L \circ (\iota_U\oti \iota_V) := (d_U \oti d_V) \circ \big[
  \id_{U^\vee} \oti (c^{}_{V^\vee,U} \cir c^{}_{U,V^\vee} \oti \id_V) \big] \,.
  \ee
Here $d$ and $b$ are the evaluation and coevaluation morphisms for the (right)
duality in \HMod\ and $c$ is the braiding of \HMod, while $\iota$ is the 
dinatural family of the coend $L$, and the isomorphisms $\gamma_{U,V}$ are 
the ones defined in \erf{gammaUV}.

In terms of graphical calculus in \C,
  \eqpic{p7} {395}{102} {
    \put(13,105) {
  \put(0,0)   {\Includepichtft{01a}
  \put(-6,-9)   {$\ssg U^{\!\vee} $}
  \put(7,-9)    {$\ssg U $}
  \put(16.9,115){$\ssg L $}
  \put(22,69)   {$\ssg m_L $}
  \put(23,-9)   {$\ssg V^{\!\vee} $}
  \put(36,-9)   {$\ssg V $}
  \put(-9.6,23) {\sse $\iota_{\!U}^{} $}
  \put(41,23)   {\sse $\iota_{\!V}^{} $}
  \put(60,50)   {$ = $}
   }
  \put(82,0) { {\Includepichtft{01b}}
  \put(-10.8,66){$\ssg \gamma_{U,V} $}
  \put(34,65.2) {$\ssg \idsm_{V\otimes U} $}
  \put(-7,-8)   {$\ssg U^{\!\vee} $}
  \put(6,-8)    {$\ssg U $}
  \put(18,115)  {$\ssg L $}
  \put(19.4,23) {$\ssg c $}
  \put(25.6,-9) {$\ssg V^{\!\vee} $}
  \put(35.3,36.3) {$\ssg c $}
  \put(38.3,-8) {$\ssg V $}
  \put(-11.8,79){$\ssg (V{\otimes}U)^{\!\vee}_{} $}
  \put(25.5,79) {$\ssg V{\otimes}U $}
   }
    \put(213,8) {
  \put(-1,0) { {\Includepichtft{02a}}
  \put(-2.5,107){$\ssg L $}
  \put(24.3,107){$\ssg L $}
  \put(2.6,-9)  {$\ssg U^{\!\vee} $}
  \put(15.3,-9) {$\ssg U $}
  \put(15.9,53.5) {$\ssg \Delta_L $}
  \put(55,50)   {$ = $}
  \put(84,0) { {\Includepichtft{02b}}
  \put(2.8,107) {$\ssg L $}
  \put(30.2,107){$\ssg L $}
  \put(-1,-9)   {$\ssg U^{\!\vee} $}
  \put(30.5,-9) {$\ssg U $}
   } } }
   }
    \put(13,2) {
  \put(0,17) { {\Includepichtft{05a}}
  \put(-.5,42) {$\ssg L $}
  \put(5.7,1)   {$\ssg \eta_L^{} $}
  \put(25,23)   {$ = $}
  \put(52,-18) {\Includepichtft{05b}}
  \put(56,42)   {$\ssg L $}
   }
  \put(114,4) { {\Includepichtft{06a}}
  \put(13.7,54.8) {\sse$ \eps_L^{} $}
  \put(-5.2,-9) {$\ssg U^{\!\vee} $}
  \put(15,-9)   {$\ssg U $}
  \put(38,30)   {$ = $}
  \put(67,10)   {\Includepichtft{06b}}
  \put(62.4,1)  {$\ssg U^{\!\vee} $}
  \put(82.3,1)  {$\ssg U $}
   }
  \put(260,-13) { {\Includepichtft{07a}}
  \put(11.9,64.5) {$\sse \apo_L $}
  \put(4.4,98)  {$\ssg L $}
  \put(-5.2,-9) {$\ssg U^{\!\vee} $}
  \put(11.5,-9) {$\ssg U $}
  \put(34,44)   {$ = $}
  \put(63,0)   {\Includepichtft{07b}}
  \put(91.3,98) {$\ssg L $}
  \put(76.5,62) {$\ssg U^{\!\vee\!\vee} $}
  \put(84.9,38.4) {$\ssg c $}
  \put(91.6,16) {$\ssg c $}
  \put(99.9,64) {$\ssg U^{\!\vee} $}
  \put(80.9,-9) {$\ssg U^{\!\vee} $}
  \put(97.7,-9) {$\ssg U $}
   } } }
and
  \eqpic{def.hopa} {120}{31} {
  \put(0,4) { {\Includepichtft{03a}}
  \put(-5.5,-9) {$\ssg U^{\!\vee} $}
  \put(7.5,-9)  {$\ssg U $}
  \put(24.5,-9) {$\ssg V^{\!\vee} $}
  \put(37.5,-9) {$\ssg V $}
  \put(14.7,74.9) {$\sse \omega_L$}
  \put(61,38)   {$ = $}
  \put(95,0)  {\Includepichtft{03b}}
    \put(91,0){
  \put(-5.3,-9) {$\ssg U^{\!\vee} $}
  \put(7.7,-9)  {$\ssg U $}
  \put(18.8,33.1) {$\ssg c $}
  \put(18.8,62.9) {$\ssg c $}
  \put(24,-9)   {$\ssg V^{\!\vee} $}
  \put(37,-9)   {$\ssg V $}
    } } }

\begin{rem}\label{rem:Cobetc}~\\[2pt]
(i)\, That the object $L$ that is associated with the creation of handles 
carries a Hopf algebra structure is by no means a coincidence. Indeed, Hopf 
algebras are ubiquitous in constructions with three-dimensional cobordisms,
and specifically the \emph{handle}, i.e.\ a torus with an open disk removed, 
is a Hopf algebra 1-morphism in the bicategory \Cob\ of three-dimensional 
cobordisms with corners \cite{crye2,yett6}. Moreover, there exists a 
surjective functor from the braided monoidal category freely generated by 
a Hopf algebra object to \Cob\ \cite{kerl9}.
\\[2pt]
(ii)\, The Hopf algebra $L \eq L(\C)$ is directly associated with the
category \C, and thereby with the CFT having \C\ as its category of chiral data,
and analogously for $K$: for given \C\ there is a uniquely (up to isomorphism) 
determined chiral handle Hopf algebra, and likewise a unique bulk handle Hopf 
algebra. This is in contrast to the bulk state space: for given \C\ there is 
typically more than one possibility. Specifically, in rational CFT the 
different bulk state spaces are in bijection with Morita classes of simple 
symmetric special Frobenius algebras in \C, see formula \erf{FZA} above.
\\[2pt]
(iii)\, Via its integrals and Hopf pairing, the Hopf algebra $L$ gives rise to 
three-manifold invariants as well as to representations of mapping class groups 
(see \cite{lyub6,lyub8,kerl5,vire4}, and \cite[Sects.\,4.4\,\&\,4.5]{fuSc17} 
for an elementary introduction). Even though for non-semisimple \C\ these 
cannot be normalized in such a way that they fit together to furnish a 
three-dimensional topological field theory, one may still hope that this 
hints at a close relationship with three-dimensional topology even in the 
non-semisimple case.
\\[2pt]
(iv)\, Obviously, the object $L$ of \C\ is obtained from the object \Fo\ of 
\CbC\ by applying the functor that on $\boxtimes$-factorizable objects of 
\CbC\ acts as $U \boti V \,{\mapsto}\, U \oti V$. This functor is called the 
\emph{diagonal restriction functor} in \cite{lyub10}.
\end{rem}
 
It will be relevant to us that the algebra $K$ acts (and coacts) on the object 
\Fo\ of \CbC, and in fact on any object of \CbC. We formulate the relevant 
statements directly for objects in \CbC; with $K$ replaced by $L$, 
they apply analogously in \C. 

For $Y\iN \CbC$ set
  \be
  \delta^K_Y := (\id_Y \oti \iK_Y) \circ (b_Y \oti \id_Y) ~\in \HomCC(Y,Y\oti K)
  \labl{deltaK}
as well as
  \be
  \req_Y^K := (\eps_K\oti \id_Y) \cir \QB_Y ~\in \HomCC(K\oti Y,Y) \,,
  \labl{req}
where in the latter formula the morphism $\QB_Y$ is the \emph{partial monodromy}
between $K$ and $Y$, defined as
   \eqpic{QHX} {105} {44} { \put(0,1){
   \put(0,0)  {\begin{picture}(0,0)(0,0)
         \scalebox{.38}{\includegraphics{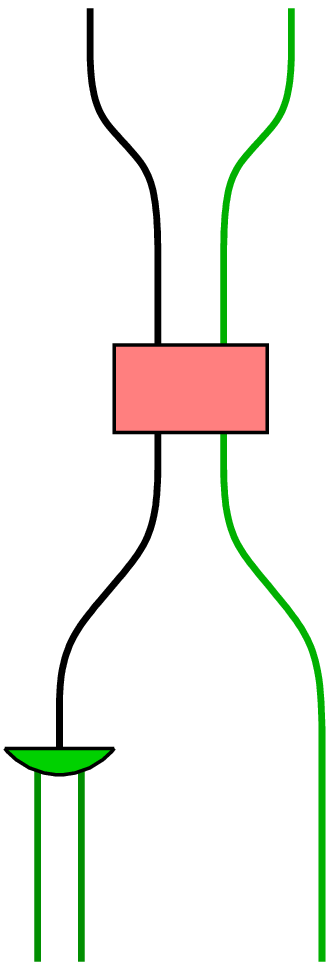}}\end{picture}
   \put(-6.6,60.5){$ \QB_Y $}
   \put(-4,-9.2)  {\sse$ X^{\!\vee} $}
   \put(6.8,109)  {\sse$ K $}
   \put(7,-9.2)   {\sse$ X $}
   \put(31.3,-9.2){\sse$ Y $}
   \put(30.4,109) {\sse$ Y $}
   \put(14.3,21.5){\sse$ \iK_X $}
   }
   \put(60,50)    {$ := $}
   \put(94,0) {\begin{picture}(0,0)(0,0)
         \scalebox{.38}{\includegraphics{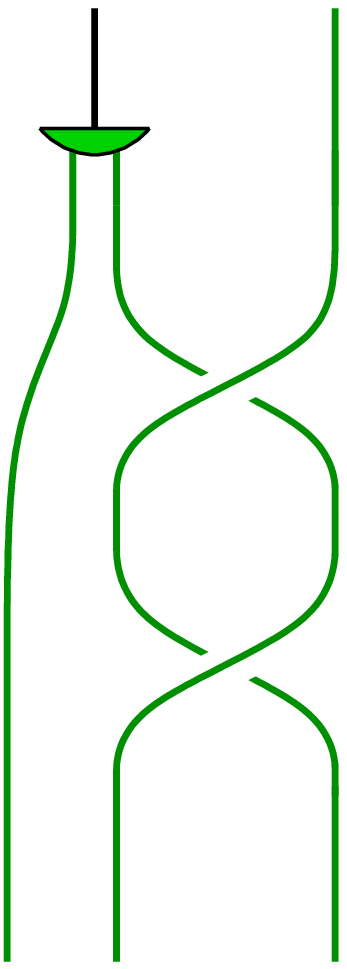}}\end{picture}
   \put(-6,-9.2)  {\sse$ X^{\!\vee} $}
   \put(7.4,109)  {\sse$ K $}
   \put(8,-9.2)   {\sse$ X $}
   \put(22.2,26.9){\sse$ c $}
   \put(22.2,58)  {\sse$ c $}
   \put(34,-9.2)  {\sse$ Y $}
   \put(34.3,109) {\sse$ Y $}
   \put(17.8,90)  {\sse$ \iK_X $}
   } } }

\begin{prop} {\rm \cite[Fig.\,7]{lyub8}}\\
For any object $Y\!$ of a finite ribbon category \C, the morphism 
{\rm \erf{deltaK}} endows $Y$ with the structure of a right $K$-comodule.
\end{prop}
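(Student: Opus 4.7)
My plan is to verify the two defining axioms of a right $K$-comodule directly from the formula $\delta^K_Y \eq (\id_Y\oti \iK_Y) \cir (b_Y\oti \id_Y)$, using only the explicit descriptions of $\eps_K$ and $\Delta_K$ inherited from the Hopf algebra structure on $K$ (which are the \CbC-analogues of the formulas for $\eps_L$ and $\Delta_L$ displayed earlier). Because of the symmetry between the situations $L\iN\C$ and $K\iN\CbC$, I would carry out the argument in \CbC\ and note that it carries over verbatim to \C.

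For the counit axiom, $(\id_Y\oti \eps_K) \cir \delta^K_Y \eq \id_Y$, I would substitute the definition of $\delta^K_Y$ and use the relation $\eps_K \cir \iK_Y \eq d_Y$ (the analogue of the formula $\eps_L \cir \iota_U \eq d_U$ recorded in the diagrammatic list above). This reduces the claim to $(\id_Y\oti d_Y) \cir (b_Y\oti \id_Y) \eq \id_Y$, which is the right duality zigzag; this step is immediate.

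For coassociativity, $(\id_Y\oti \Delta_K) \cir \delta^K_Y \eq (\delta^K_Y\oti \id_K) \cir \delta^K_Y$, the plan is to expand both sides using $\Delta_K \cir \iK_Y \eq (\iK_Y\oti \iK_Y) \cir (\id_{Y^\vee}\oti b_Y\oti \id_Y)$. Each side then becomes a morphism $Y \To Y\oti K\oti K$ obtained by inserting two $b_Y$-cups into the single input $Y$-strand and folding each resulting $Y^\vee\oti Y$ pair into $K$ via $\iK_Y$. Drawing the two string diagrams side by side and relabelling the strands created by the successive coevaluations makes their equality visually manifest: on both sides the output $Y$-strand is the leftmost strand of the outer coev, the outer $K$ arises by pairing the inner strands of the two coevs, and the inner $K$ arises by pairing the rightmost strand of the inner coev with the original input. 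Only the interchange law for morphisms acting on disjoint tensor factors is needed; the universal property of the coend is not invoked at this stage, since both sides have already been expressed in terms of $\iK_Y$.

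The only real obstacle is bookkeeping: tracking which $Y^\vee$-strand pairs with which $Y$-strand to form each of the two factors of $K$. Once drawn out graphically, neither axiom requires any nontrivial categorical input beyond duality and the explicit formulas for $\eps_K$ and $\Delta_K$.
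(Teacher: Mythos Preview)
Your proposal is correct. The paper does not supply its own proof of this proposition; it merely cites Lyubashenko \cite[Fig.\,7]{lyub8}. Your direct verification of the two comodule axioms---reducing the counit axiom to a zigzag identity via $\eps_K\cir\iK_Y \eq d_Y$, and reducing coassociativity to a comparison of two side-by-side coevaluation diagrams via $\Delta_K\cir\iK_Y \eq (\iK_Y\oti\iK_Y)\cir(\id_{Y^\vee}\oti b_Y\oti\id_Y)$---is exactly the standard argument and is the one implicit in Lyubashenko's graphical presentation. There is nothing further to compare.
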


\begin{prop} {\rm \cite[Rem.\,2.3]{fuSs5}}\label{prop:YD}\nxl1
{\rm (i)}\, For any object $Y\!$ of a finite ribbon category \C, the morphism 
{\rm \erf{req}} endows $Y$ with the structure of a left $K$-module.
\nxl1
{\rm (ii)}\, The module and comodule structures {\rm \erf{req}} and 
{\rm \erf{deltaK}} fit together to the one of a left-right Yetter-Drinfeld 
module over $K$. This affords a fully faithful embedding of \CbC\ into the 
ca\-te\-gory of left-right Yetter-Drinfeld modules over $K$ internal to \CbC.
\end{prop}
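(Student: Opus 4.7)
The plan is to verify the axioms by pre-composing with the dinatural family and invoking the universal property of the coend $K$. For part (i), I would first dispatch the unit axiom: using $\eta_K \eq \iK_\one$ and $\eps_K \cir \iK_U \eq d_U$ (which transfer to $K$ from the formulas stated for $L$), the composition $\req_Y^K \cir (\eta_K \oti \id_Y)$ unfolds through the dinatural description of $\QB_Y$ in \erf{QHX} at $X \eq \one$, where the partial monodromy degenerates to the identity, and then collapses to $\id_Y$. For associativity I would pre-compose with $\iK_U \oti \iK_V \oti \id_Y$ for arbitrary $U,V \iN \CbC$; the side with $m_K$ reduces, via its explicit formula, to a $d_{V\otimes U}$ together with a single partial monodromy of $V \oti U$ against $Y$, while the side with iterated $\req$ contributes two partial monodromies, of $V$ with $Y$ and of $U$ with $Y$, sandwiched with $d_V$ and $d_U$. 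The hexagon identity together with naturality of the braiding then convert the double monodromy into the single one, matching the two expressions.

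For part (ii), with the module and comodule structures in place, the Yetter-Drinfeld compatibility is again checked after composition with $\iK_X \oti \id_Y$: each side becomes an explicit diagram in \CbC\ involving the braiding $c$, the coevaluation $b_Y$, the partial monodromy from \erf{QHX}, and the dinatural family $\iK$. The key manipulation is to slide $b_Y$ past the partial monodromy using the dinatural description of $\iK$, and then an application of the hexagon and of naturality of $c$ identifies the two sides.

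For the embedding claim, the functor is taken to be the identity on objects and on morphisms, decorating each $Y \iN \CbC$ with the structures \erf{deltaK} and \erf{req}. Faithfulness is then automatic, and the content is fullness: any $f \iN \HomCC(Y,Y')$ must be automatically a morphism of $K$-modules and of $K$-comodules. For the coaction this follows from naturality of the coevaluation $b_Y$ in $Y$ combined with dinaturality of $\iK$ in its argument; for the action the analogous argument uses naturality of the braiding $c_{X,Y}$ in the $Y$-slot together with dinaturality. The step I expect to be the most delicate is the Yetter-Drinfeld axiom in (ii), where one must simultaneously keep track of the self-braiding of $K$ (which is itself defined via $\iK$) and of the antipode entering through the $K$-action on duals; graphical calculus in \CbC\ together with the universal property of the coend should, however, reduce the verification to a short sequence of local moves.
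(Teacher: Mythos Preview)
The paper does not supply its own proof of this proposition; it is stated with a citation to \cite[Rem.\,2.3]{fuSs5} and left at that. There is therefore nothing in the paper to compare your argument against directly.

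That said, your outline is the natural direct verification and is essentially correct. For part (i), pre-composing with the dinatural family, using $\eps_K\cir\iK_X \eq d_X$ and $\eta_K \eq \iK_\one$, and invoking the hexagon identity to identify the monodromy of $V\oti U$ against $Y$ with the composite of the two separate monodromies is exactly the right reduction; the universal property of the coend then finishes the job. For part (ii), the same strategy applies, and the fully faithful embedding follows at once from the observation that both \erf{deltaK} and \erf{req} are natural in $Y$ (via naturality of $b$ and $c$ together with dinaturality of $\iK$), so that every morphism of \CbC\ is automatically a morphism of Yetter-Drinfeld modules. One small correction: in the standard formulation of the left-module/right-comodule Yetter-Drinfeld compatibility the antipode of $K$ does not actually appear -- only $m_K$, $\Delta_K$, and the self-braiding of $K$ are involved -- so the step you flagged as most delicate is in fact somewhat less intricate than you anticipate.
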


Since the crucial ingredient of $\req_Y^K$ is a double braiding, we refer to
$\req_Y^K$ as the \emph{partial mono\-dro\-my action} of $K$ on $Y$.
 
\begin{rem}
(i)\, The second part of Proposition \ref{prop:YD} fits nicely with the 
result \cite[Thm.\,3.9]{yett6} that every 1-morphism of the cobordism 
bicategory \Cob\ carries a structure of left-right Yetter-Drinfeld module 
over the one-holed torus (compare Remark \ref{rem:Cobetc}(i)).
\\[2pt]
(ii)\, The category $\C_L$ of $L$-modules in \C\ is braided equivalent to the 
monoidal center $\Z\C$, see Theorem 8.13 of \cite{brVi5}.
\\[2pt]
(iii)\, The full subcategory $\C_L^{\mathcal Q}$ of $\C_L$ consisting of the 
modules $(U,\req_U^L)$ for $U\iN\C$, with $\req_U^L$ defined analogously as 
in \erf{req}, is a monoidal subcategory: by the definition of $\Delta_L$ and
the functoriality of the braiding, one has
$\req_{U\otimes V}^L \eq (\req_U^L \oti \req_V^L) \cir \Delta_L $.
\end{rem}

\medskip

Next let us specialize to the situation considered in Section 
\ref{ssec:coendFH}, i.e.\ that the finite ribbon category \C\ is equivalent to 
the category \HMod\ for $H$ a factorizable Hopf algebra. Then the chiral
handle Hopf algebra $L \iN \HMod$ is the vector space \Hs\ dual to $H$ endowed 
with the coadjoint $H$-action
  \eqpic{def_lads} {120} {41} {
  \put(0,39)    {$ \rho\coa ~:= $}
  \put(53,-1) { {\Includepichopfsm{48}}
  \put(3.6,-9.2){$\ssg \Hss $}
  \put(20.1,-9.2){$\ssg \Hss $}
  \put(25,55.5) {$\aposm $}
  \put(27,63)   {$\ssg m $}
  \put(37.8,74) {$\ssg m $}
  \put(55,100)  {$\ssg \Hss $}
  } }
and the members of the dinatural family $\iL$ are the linear maps
(see \cite[Lemma\,3]{kerl5} and \cite[Sect.\,4.5]{vire4})
  \eqpic{p8} {165}{34} {
  \put(30,36)   {$ \iL_U = $}
  \put(80,0)  {\Includepichtft{11b}}
  \put(111,43)  {$\sss \rho_{\!U}^{} $}
  \put(75,-9)   {$\ssg U^* $}
  \put(104.6,-9){$\ssg U $}
  \put(122.5,89){$\ssg \Hss $}
  }
When expressed in terms of vector space elements, these morphisms
are nothing but the matrix elements of left multiplication in $H$.

The unit, counit and coproduct of the Hopf algebra $L$, as given by \erf{p7}
for the case of general finite ribbon categories, now read
  \be
  \bearl
  \eta_L = (\eps_H)^* \equiv (\eps_H \oti \idHs) \circ b_H^\ko \,,
  \Nxl4
  \eps_L = (\eta_H)^* \equiv d_H^\ko \circ (\idHs \oti \eta_H) \qquand
  \Delta_L = (m_H\op)^*_{} \,,
  \eear
  \ee
while two equivalent descriptions of the product are
  \eqpic{Hmod_prod_res} {300} {67} {
  \put(0,62)    {$ m_L ~= $}
  \put(60,0) { \Includepichtft{72b}
  \put(-7,-8)   {\sse$ \Hss $}
  \put(7,-8)    {\sse$ \Hss $}
  \put(26.9,73){\sse$ \apo $}
  \put(31.8,9.8){\sse$ R $}
  \put(75,148)  {\sse$ \Hss $}
  }
  \put(174,62)  {$ =$}
  \put(211,0) { \Includepichtft{105}
  \put(-7,-8)   {\sse$ \Hss $}
  \put(7,-8)    {\sse$ \Hss $}
  \put(42.9,8.6){\sse$ R $}
  \put(52.8,27) {\sse$ \apo $}
  \put(58.7,66.4){\sse$ \apo $}
  \put(82,139)  {\sse$ \Hss $}
  } }
and the antipode is
  \eqpic{apo_Haa_A} {105} {42} {
  \put(0,43)    {$ \apo_\bicoaa ~= $}
  \put(52,0) {\Includepichtft{122A}
  \put(-5,-8.5) {\sse$ \Hss $}
  \put(11.1,39) {\sse$ \apoi $}
  \put(12.7,-.9){\sse$ R $}
  \put(59,100.4){\sse$ \Hss $}
  } }

Further, our finiteness assumptions imply that $L$ now comes with an integral 
and a cointegral, given by
  \be
  \Lambda_L = \lambda^* \qquand \lambda_L = \Lambda^* ,
  \ee
respectively. Both of them are two-sided, even though the cointegral $\lambda$
of $H$ in general is only a right cointegral.

Similarly, identifying, as in Section \ref{ssec:coendFH}, the enveloping 
category $\HMod^{\rm rev} \,{\boxtimes}\, \HMod$ with the category \HBimod\ of 
bimodules (with the ribbon structure presented there), the coend $K \iN \HBimod$
is the \emph{coadjoint bimodule}, that is, the tensor product $\Hs \otik \Hs$ 
of two copies of the dual space \Hs\ endowed with the coadjoint left $H$-action 
\erf{def_lads} on the first tensor factor and with the coadjoint right 
$H$-action on the second factor, with dinatural family
   \eqpic{def_iHaa_X} {150} {45} {
   \put(18,44)    {$ \iK_X ~:= $}
   \put(75,0) { \Includepichtft{11c}
   \put(-4.1,-8.5){\sse$ \Xs $}
   \put(19.5,28)  {\sse$ \ohr_{\!X}^{} $}
   \put(25.3,-8.5){\sse$ X $}
   \put(31.2,62)  {\sse$ \rho_{\!X}^{} $}
   \put(42.6,104) {\sse$ \Hss $}
   \put(59.4,104) {\sse$ \Hss $}
   } }
for any $H$-bimodule $X \eq (X,\rho_{\!X}^{},\ohr_{\!X}^{})$. The structural 
morphisms of $K$ as a Hopf algebra and its integral and cointegral are 
straightforward analogues of the expression given for $L$ above; for 
explicit formulas we refer to (A.32)\,--\,(A.36) of \cite{fuSs3}.

The partial monodromy action \erf{req} of $K$ on an $H$-bimodule $(Y,\rho_Y,
\ohr_Y)$ is given in terms of the monodromy matrix $Q$ and its inverse by
  \eqpic{Lyubact_HKH} {120} {46} {
  \put(0,47)       {$ \req^K_Y=~ $}
    \put(50,0) { \Includepichtft{133f}
  \put(-5.4,-8.5)  {\sse$ \Hss $}
  \put(10,-8.5)    {\sse$ \Hss $}
  \put(46.4,-8.5)  {\sse$ Y $}
  \put(31.6,7.3)   {\sse$ Q $}
  \put(13,46.8)    {\sse$ Q^{-1} $}
  \put(54.7,68.8)  {\sse$ \ohr_Y^{} $}
  \put(35.2,90.5)  {\sse$ \rho_Y^{} $}
  \put(47.7,108.9) {\sse$ Y $}
  } }
i.e.\ the natural $K$-action is nothing but the $H$-bimodule action composed
with variants of the Drinfeld map \erf{def-drin}.

\medskip

\begin{rem}\label{rem:factorizable} ~\\[2pt]
(i)\,
For $H$ a ribbon Hopf algebra, the Hopf pairing \erf{def.hopa} of the handle 
Hopf algebra $L$ is non-degenerate iff $H$ is factorizable. It is thus natural 
to call more generally a finite ribbon category \C\ \emph{factorizable} iff 
the Hopf pairing \erf{def.hopa} of $L(\C)$ is non-degenerate. 
\\[2pt]
(ii)\,
Factorizability implies e.g.\ that the integral of $L(\C)$ is two-sided and 
that $L(\C)$ also has a two-sided cointegral (Prop.\ 5.2.10 and Cor.\ 5.2.11 
of \cite{KEly}). If \C\ is semisimple, then being factorizable is equivalent 
to being modular. Thus factorizability may be seen as a generalization of
modularity to non-semisimple categories (the authors of \cite{KEly} 
even use the qualification `modular' in place of `factorizable').
\\[2pt]
(iii)\,
A quasitriangular Hopf algebra $H$ is factorizable iff its Drinfeld double 
$D(H)$ is isomorphic, in a particular manner, to a two-cocycle twist of 
the tensor product Hopf algebra $H \oti H$ \cite[Thm.\,4.3]{schne8}, and 
thus \cite[Rem.\,4.3]{etno2} iff the functor
that acts on objects $U\boti V$ of the enveloping category of \HMod\ as
  \be
  U\boti V \,\longmapsto\, (U\oti V,z_{U\otimes V}^{})
  \qquad{\rm with} \qquad
  z_{U\otimes V}^{}(W)
  := (c_{U,W}^{} \oti \id_V^{}) \circ (\id_U^{} \oti c_{W,V}^{-1})
  \labl{funHbH2ZH}
furnishes a monoidal equivalence
  \be
  \HMod^{\rm rev} \,{\boxtimes}\, \HMod \,\stackrel\simeq\longrightarrow\,
  \Z\HMod 
  \labl{ZH=HbH}
between the enveloping category and the monoidal center of \HMod.
\\[2pt]
(iv)\,
Now the bulk state space in conformal field theory is an object in \CbC. Thus 
if we want to be able to describe the bulk state space, in line with the 
semisimple case \erf{F=ZA}, as a full center, we should better be allowed to 
regard the full center $Z(A)$ of an algebra $A$ in a factorizable finite ribbon 
category \C, which by definition is an object in $\Z\C$, also as an object in 
\CbC, and thus want $\Z\C$ and \CbC\ to be monoidally equivalent.\,%
 \footnote{~In fact, in \cite{etno2} this property is used to \emph{define}
  factorizability for braided monoidal categories that are not necessarily
  ribbon.}
\\
We do not know whether this requirement is satisfied
for \emph{all} factorizable finite ribbon categories.
On the other hand, for the condition to be satisfied it is certainly not 
required that \C\ is ribbon equivalent to \HMod\ for a ribbon Hopf algebra $H$.
Specifically, the notion of factorizability can be 
extended from Hopf algebras to weak Hopf algebras \cite[Def.\,5.11]{nitv}, and 
again a weak Hopf algebra $H$ is factorizable iff the functor \erf{funHbH2ZH} 
is a monoidal equivalence \cite[Rem.\,4.3]{etno2}.
This covers in particular the case of all semisimple \C, because every semisimple 
finite tensor category is equivalent to the representation category of some 
semisimple \findim\ weak Hopf algebra \cite[Thm.\,4.1\,\&\,Rem.\,4.1(iv)]{ostr}.
\end{rem}


\section{The torus partition function}\label{sec:pf}

\subsection{The partition function as a character}

By definition, the torus partition function $Z$ of a CFT, whether rational or 
not, is the character of the bulk state space $F$. Here the term character 
refers to $F$ as a module over the tensor product of the left and right copies 
of the chiral algebra \V. That is, the character is a real-analytic function of
the modulus $\tau$ of the torus, which takes values in the complex upper half 
plane, and it is the generating function for dimensions of homogeneous 
subspaces of $\V{\otimes_\complex}\V$-modules. As such, $Z$ is a sum of 
characters of simple $\V{\otimes_\complex}\V$-modules, even though $F$ is, 
in general, not fully reducible. 

Referring to the chiral algebra \V\ is not necessary, though. Rather, as for 
our purposes we are allowed to work at the level of \repV\ as an abstract 
factorizable ribbon category, we can regard $F$ just as an object of 
$\CbC \,{\simeq}\, \mathcal{R}ep(\V{\otimes_\complex}\V)$. Indeed, we know 
from \erf{CorSigma} that the torus partition function -- the zero-point 
correlator on the torus $\mathrm T$ -- is an element of the space
  \be
  V(\mathrm T\,{\sqcup}\,{-}\mathrm T)
  \,\cong\, \HomC(L,\one) \otic \HomC(L,\one) \,\cong\, \HomCC(K,\one) 
  \ee
of chiral blocks. Now the morphism space $\HomCC(K,\one)$ contains in particular
the \emph{characters} of the algebra $K$. An immediate conjecture for the torus 
partition function is thus the character $\chii^K_F$ of the bulk state space as 
a module (with action $\req_F^K$ as defined in \erf{req}) over the bulk handle 
Hopf algebra $K$. As we will see in Section \ref{sec:corfus} below, $\chii^K_F$ 
is in fact just the particular member $(g,n) \eq (1,0)$ of a family of morphisms 
that are natural candidates for correlation functions at any genus $g$ and with 
any number $n$ of bulk insertions.

In this description the term character now refers to $F$ as a $K$-module.
The notion of the character of a module over an associative \ko-algebra
is standard and is explained in detail in Appendix \ref{algchar}.
For an algebra $A$ in a monoidal category \C\ one can set up 
representation theory in much the same way as for a \ko-algebra, i.e.\ for
an algebra in \Vectk. The notion of character then still makes sense provided 
that \C\ is sovereign, which for the categories of our interest is the case.\,%
 \footnote{Any ribbon category is sovereign, i.e.\ (see e.g.\
 Def.\,2.7 of \cite{drab3}) the left and right dualities are connected by a
 monoidal natural isomorphism.}
Concretely, the formula \erf{chiiMA} for the character of a module $M$ over an
algebra $A$ in \Vectk\ gets modified to
  \be
  \chii_M^A = \mathrm{tr}_M(\rho) = \tilde d_M \circ
  (\rho \oti \pi_M) \circ (\id_A\oti b_M) ~\in \HomC(A,\one) \,,
  \labl{chiiMA-C}
with $\pi_M\colon M^\vee \To {}^{\vee}\!M$ the sovereignty isomorphism between 
the right and left duals of $M$.

In the case at hand the relevant algebra is the bulk handle Hopf algebra $K$, 
and its action is given by \erf{req}. We thus have
  \eqpic{Qrep2} {100} {43} {
  \put(-70,51) {$ \chii^K_Y \circ \iK_X ~= $}
  \put(0,0)  {\Includepichtft{129e}
  \put(-5,-8.5){\sse $X^{\!\vee}$}
  \put(7,-8.5) {\sse $X$}
  \put(13,30)  {\sse $\iK_X$}
  \put(17.5,73){\sse $\req^K_Y$}
  \put(33,58)  {\sse $Y$}
  \put(63,73)  {\sse $\pi_Y$}
  }
  \put(81,51)  {$ = $}
  \put(110,0) {\Includepichtft{129f}
  \put(-6,-8.5){\sse $X^{\!\vee}$}
  \put(7,-8.5) {\sse $X$}
  \put(30,44)  {\sse $Y$}
  \put(21,59)  {\sse $c$}
  \put(30,80)  {\sse $c$}
  \put(60,68)  {\sse $\pi_Y$}
   } }
for any $K$-module $(Y,\req^K_Y)$ in \CbC, and analogously for the character 
$\chii^L_U$ of an $L$-module $(U,\req^L_U)$ in \C.

\begin{rem}\label{rem:dual}
Since $L$ and $K$ are Hopf algebras, there are natural notions of left and 
right dual modules. The character of the $L$-module $U^\vee$ right dual to 
$U \eq (U,\req^L_U)$ is given by the same morphism as the one for $\chii^L_U$, 
except that the braidings in \erf{Qrep2} get replaced by inverse braidings.
\end{rem}

\medskip

It is worth being aware that so far the coend $F \eq \Fo$ \erf{coendCC}
(respectively, $F \eq \Fomega$ \erf{Fomega}) is only conjecturally the bulk 
state space of a conformal field theory, and similarly the morphism $\chii^K_F$ 
is merely a candidate for the torus partition function $Z$ of that CFT. But 
just like we could verify that, for the case $\C \,{\simeq}\, \HMod$ (and 
$F \eq \Fomega$ for any ribbon automorphism $\omega$ of $H$) the coend has 
the desired properties of being a commutative symmetric Frobenius algebra, 
we will see below that in this case $\chii^K_F$ has the desired property of 
being a bilinear combination of suitable chiral characters with non-negative 
integral coefficients. Moreover, these coefficients turn out to be quantities 
naturally associated with the category \C.

The status of $\chii^K_F$ can be corroborated further by
establishing modular invariance. Indeed, this follows as a corollary from the
mapping class group invariance of general correlation functions that we
will present in Section \ref{sec:corfus} below.
The partition function should in addition be compatible with sewing. At this 
point we have no handle on this property yet. Thus, while we can prove modular 
invariance at any genus, as far as sewing is concerned the state of affairs 
bears some similarity with the situation in rational CFT prior to the 
development of the TFT construction \cite{fuRs} of correlators: While modular 
invariance is a crucial property of the partition function, it is only 
necessary, but in general not sufficient, and indeed there are plenty of 
modular invariants which are incompatible with sewing. On the other hand, 
for all rational CFTs the charge conjugation 
modular invariant \emph{is} compatible with sewing \cite{fffs2}, and
accordingly we do expect that, for any factorizable finite ribbon category
\C, at least for $F \eq \Fo$ the character $\chii^K_F$ does provide the 
torus partition function of a CFT with $\C \,{\simeq}\, \repV$.

\begin{rem}
As already mentioned, the categories \C\ of chiral data for the logarithmic
$(1,p)$ triplet models should be closely related to finite tensor categories 
that are representation categories of a finite-dimensional complex Hopf algebra.
Indeed a finite-dimensional Hopf algebra $H_p$, the restricted quantum group 
$\overline U_q(\mathfrak{sl}_2)$ with a primitive $2p$th root $q$ of unity,
has been proposed \cite{fgst} for the $(1,p)$-model; this Hopf algebra is not 
quasitriangular.
It has been established \cite{naTs2} that $H_p$\Mod\ is equivalent to the category 
of chiral data for the $(1,p)$ triplet model at the level of abelian categories.
The situation becomes more subtle once braidings and monodromies are involved, but
still the monoidal category $H_p$\Mod\ admits a monodromy matrix. Accordingly,
typical aspects of these models for which only monodromies, but not the 
braiding morphisms themselves are involved, should
be covered by our analysis. Specifically, the modular transformations of
characters of $H$-modules considered here should reproduce the modular group
\rep\ on the center of $\overline U_q(\mathfrak{sl}_2)$.
Explicit calculation \cite{fgst} shows that the latter representation coincides
with the modular transformations of the characters of the chiral algebra
of the logarithmic $(1,p)$ triplet models.
\end{rem}


\subsection{Chiral decomposition}
  
The simple modules of $\V\otic\V$, i.e.\ the simple objects of 
$\CbC \,{\simeq}\, \mathcal{R}ep(\V\otic\V)$, are of the form $S_i \boti S_j$
with $S_i$, for $i\iN\I$, the simple \V-modules. For rational CFT, i.e.\ for
semisimple \C, the category \CbC\ is semisimple, too, so that in particular
the bulk state space $F$ decomposes as in formula \erf{F=ZA} into a 
direct sum of simple objects $S_i \boti S_j$ for appropriate $i,j\iN\I$.
When \C\ is non-semisimple, this is no longer the case. Moreover, for
non-semisimple \C\ one even cannot, in general, write $F$ as a direct sum of
$\boxtimes$-factorizable objects, i.e.\ of objects of the form $U\boti V$.

Nevertheless, since characters split over exact sequences \cite[Sect.\,1.5]{loren},
a chiral decomposition analogous to the one in rational CFT does exist for 
the torus partition function. Specifically,
if \C\ is a finite tensor category, for which the index set $\I$ is finite, 
the torus partition function can be written as a finite sum
  \be
  Z = \sum_{i,j\in\I} Z_{ij}\, \chii^\V_i \otic \chii^\V_j
  \labl{Z=sumij}
with $Z_{ij} \iN \zet_{\ge0}$.

For non-rational CFT the space of zero-point chiral blocks for the torus is 
\emph{not} exhausted by the characters -- that is, the characters of \V-modules
in the vertex algebra description, respectively by the characters $\chii^L_U$, 
for $U\iN\C$, of the $L$-modules $(U,\req^L_U)$. Rather, this space also 
includes linear combinations of so-called \emph{pseudo-characters} 
\cite{fgst,flga,gaTi,arNa}. Specifically, for any $C_2$-cofinite vertex algebra 
these functions can be constructed with the help of symmetric linear functions 
on the endomorphism spaces of suitable decomposable projective modules 
\cite{miya8,arik3}. The existence of an expression of the form \erf{Z=sumij} 
thus means in particular that the pseudo-characters do not contribute to the 
torus partition function. This certainly fits nicely with the physical idea of 
counting states; mathematically it is a non-trivial statement
that a decomposition of the form \erf{Z=sumij} exists, even without 
requiring integrality of the coefficients.

In a purely categorical setting, the analogue of the space of zero-point 
blocks for the torus is the space $\HomC(L,\one)$,   
One should expect that in analogy with \erf{Z=sumij} the character $\chii^K_F$ 
satisfies
  \be
  \chii^K_F \,\in\, 
  \HomC(L,\one) \otic \HomC(L,\one) \cong \HomCC(K,\one)
  \ee
and thus decomposes into products of simple $L$-characters $\chii^L_k$ as
  \be
  \chii^K_F = \sum_{k,l} x_{kl}^{}(F)\, \chii^L_k \otik \chii^L_l
  \labl{X=sumij}
with $x_{kl} \iN \zet_{\ge0}$.
We will now establish that this is indeed true in the case that
$\C \eq \HMod$ and $F \eq \Fomega$.


\subsection{The Cardy-Cartan modular invariant and its relatives}

Let us thus specialize again to the case that $\C \eq \HMod$ 
for some factorizable Hopf algebra $H$.

The sovereign structure for the ribbon categories \HMod\ and \HBimod\ is given by 
  \eqpic{pic-piV} {320} {36} {
  \put(0,39)    {$ \pi_U^\HModsm ~= $}
  \put(70,0)  {\Includepichtft{97h}
  \put(-4.5,-8) {\sse$ U^*_{} $}
  \put(-3,88)   {\sse$ U^*_{} $}
  \put(28,28)   {\sse$ t $}
  \put(45.4,53) {\sse$ \rho_{\!U}^{} $}
  }
  \put(155,39) {and $\qquad~ \pi_X^\HBimodsm ~= $}
  \put(281,0)  {\Includepichtft{97a}
  \put(-4.5,-8) {\sse$ \Xs $}
  \put(-3,88)   {\sse$ \Xs $}
  \put(28,28)   {\sse$ t $}
  \put(45,53.5) {\sse$ \rho_X^{} $}
  \put(48,42)   {\sse$ \ohr_X^{} $}
  \put(58,20)   {\sse$ t $}
  } }
respectively, with $t$ an invertible group-like element of $H$ obtained as
the product of the Drinfeld element $u$
\erf{def:uDrinfeld} and the inverse of the ribbon element of $H$,
  \be
  t = u\,v^{-1} .
  \ee
Using the formulas \erf{p8} and \erf{def_iHaa_X} for the dinatural families
$\iL$ and $\iK$ of the coends $L$ and $K$, the characters of $L$-modules 
$(U,\req_U^L)$ with $U \eq (U,\rho^H_{U}) \iN \HMod$ and those of $K$-modules 
$(X,\req_X^K)$ with $X \eq (X,\rho^H_{\!X},\ohr^H_{X}) \iN \HBimod$ -- as 
described, for the case of $K$, in \erf{Qrep2} -- can then be written as
  \eqpic{char-drin} {370} {33} {
  \put(0,41)     {$ \chii^L_U ~= $}
  \put(50,0) { \Includepichtft{130a} 
  \put(-4.4,-8.5){\sse$ \Hss $}
  \put(10.8,40.5){\sse $t$ }
  \put(16.2,3.8) {\sse $Q$ }
  \put(31,42.5)  {\sse$ \rho^H_{\!U} $}
  \put(31,56.5)  {\sse$ \rho^H_{\!U} $}
  \put(58,76)    {\sse$ U$ }
  \put(83,41)    {$ =~ \chii^H_{U} \circ m \circ (t \oti f_Q)
                   ~=~ \chii^H_{U} \circ m \circ (f_Q \oti t) $}
  } }
and as
  \eqpic{KX_char} {290} {37} {
  \put(-52,40)   {$ \chii^K_{\!X} ~= $}
  \put(0,2)  {\Includepichtft{123m_1}
  \put(12.3,40.7){\sse$Q^{-1}$}
  \put(26.7,13.3){\sse$Q$}
  \put(-6,-8.5)  {\sse$H^*$}
  \put(7,-8.5)   {\sse$H^*$}
  \put(29.3,75.2){\sse$ \rho^H_{\!X} $}
  \put(45.6,62.2){\sse$ \ohr^H_{\!X} $}
  \put(74,61.5)  {\sse $\pi_X^\HBimodsm $}
  }
  \put(111,40)   {$ =~ \chii^{H\otimes H\op}_X \circ (m \oti m) \circ
                       ( t \oti f_{Q^{-1}} \oti f_Q \oti t) \,, $}
  }
respectively, with $f_Q$ the Drinfeld map \erf{def-drin} and $f_{Q^{-1}}$ the 
analogous morphism in which the monodromy matrix $Q$ is replaced by its 
inverse. (In \erf{KX_char}, each of the two occurrences of the element $t$ in
$\pi_X^\HBimodsm$ can be treated analogously as the single $t$ in 
\erf{char-drin}; for details see Lemmas 6 and 8 of \cite{fuSs4}.)

The result \erf{KX_char} is actually a rather direct corollary of
\erf{char-drin}: the categories of $H\,{\otimes}\,H$-modules and of 
$H$-bimodules are ribbon equivalent (an equivalence functor has been given
explicitly in equation \erf{HHbimiso} above), and this equivalence maps the
$H\,{\otimes}\,H$-module $L \otik L$ and the $H$-bimodule $K$ to one another. 

\begin{rem}\label{rem:simple}
Since $H$ is by assumption factorizable, the Drinfeld map $f_Q$ is invertible. 
The group-like element $t$ is invertible as well. As a consequence the 
result \erf{char-drin} implies that the set $\mathcal X \eq 
\{ \chii_{S_i}^L \,|\, i\iN\I \}$ of characters is linearly independent and 
that the character of any $L$-mo\-du\-le of the form $(U,\req_U^L)$ is an 
integral linear combination of the characters in $\mathcal X$. It 
follows that the simple objects, up to isomorphisms, of the full monoidal 
subcategory $\HMod_L^{\mathcal Q}$ of $\HMod_L$ that consists of the modules 
$(U,\req_U^L)$ are precisely the modules $(S_i,\req_{S_i}^L)$ with 
$\{S_i\,|\,i\iN\I\}$ the simple $H$-modules. 
As a consequence, in the chiral decomposition \erf{X=sumij} the simple 
$L$-characters are $ \chii^L_k \eq \chii_{S_k}^L$ and the summation 
extends over the same index set $\I$ as the summation in e.g.\ \erf{Z=sumij}.
\end{rem}

Next we note that a \findim\ Hopf algebra $H$ in \Vectk\ carries a natural 
structure of a Frobenius algebra
and thus is in particular self-injective. According to \erf{A-bimod-char-2} the 
character of $H$ as the \emph{regular bimodule} (i.e., with regular left and 
right actions) over itself can thus be written as
  \be
  \chii_H^{H\otimes H\op}
  = \sum_{i,j\in\I} c_{i,j}^{} \, \chii_i^H \oti \chii_j^H 
  \label{H-bimod-char}
  \ee
with $c_{i,j}^{}$ the entries of the Cartan matrix of the category \HMod,
i.e.\ $c_{i,j}^{} \eq [\,P_i \,{:}\, S_j \,]$ is the multiplicity of the simple
$H$-module $S_j$ in the Jordan-H\"older series of the projective cover $P_i$
of the simple $H$-module $S_i$. Equivalently, $c_{i,j}^{}$ is the dimension of
the space of intertwiners between the projective covers,
$c_{i,j}^{} \eq \dimk(\HomH(P_i,P_j))$.
If $H$ is factorizable, then the \emph{coregular} bimodule \Fo\ (see Theorem 
\ref{thm:Fo}) is isomorphic to the regular bimodule, with an intertwiner
given by the \emph{Frobenius map}
  \be
  \Phi := ((\lambda\,{\circ}\, m) \oti \idHs) \circ (\apo \oti b_H^\ko) \,,
  \ee
and hence the character of \Fo\ decomposes like in \erf{H-bimod-char},
  \be
  \chii_F^{H\otimes H\op}
  = \sum_{i,j\in\I} c_{i,j}^{} \, \chii_i^H \oti \chii_j^H \,.
  \label{F-H-bimod-char}
  \ee

Now compose the equality \erf{F-H-bimod-char} with 
$(m \oti m) \cir (t \oti f_{Q^{-1}} \oti f_Q \oti t)$. Then by comparison with 
\erf{KX_char} we learn that
  \be
  \chii^K_X = \sum_{i,j\in\I} c_{i,j}\,
  \big[ \chii^H_i \,{\circ}\, m \,{\circ}\, (t \oti f_{Q^{-1}}) \big] \otimes
  \big[ \chii^H_j \,{\circ}\, m \,{\circ}\, (f_Q \oti t) \big] \,.
  \label{chiKF=cij..}
  \ee
Here the second tensor factor equals $\chii^L_j$ as given in \erf{char-drin}.
For the first factor, the presence of $f_{Q^{-1}}$ instead of $f_Q$ amounts
to replacing the braiding in $\req^L_{S_i}$ by its inverse, and thus 
according to Remark \ref{rem:dual} we deal with the $L$-character the dual 
module. We conclude that \cite[Thm.\,3]{fuSs4}
  \be
  \chii^K_F = \sum_{i,j\in\I} c_{i,j}\, \chii^L_{\overline i} \oti \chii^L_j
  = \sum_{i,j\in\I} c_{\overline i,j}^{}\, \chii^L_i \otimes \chii^L_j \,,
  \label{chiKX-chiL-chiL}
  \ee
where $\chii^L_i \eq \chii_{S_i}^L$ is the character of the simple 
$L$-module $(S_i,\req_{S_i}^L)$.
This is the desired chiral decomposition, of the form \erf{Z=sumij}.

\begin{rem}
(i)\,
By definition (see \erf{cij}) the numbers $c_{i,j}$ are non-negative integers. 
And they are naturally associated with the category $\C \,{\simeq}\, \HMod$ 
-- they depend only on \C\ as an abelian category.
\\[2pt]
(ii)\,
Among the simple objects of \C\ is in particular the tensor unit 
$\one \,{\cong}\, S_0$. In general, the corresponding diagonal coefficient
$c_{0,0}$ in \erf{chiKX-chiL-chiL} is larger than 1.
This is \emph{not} in conflict with the uniqueness of the vacuum -- it
just accounts for the fact that for non-semisimple \C\ the tensor unit
has non-trivial extensions and is in particular not projective. 
\\[2pt]
(iii)\,
The result \erf{chiKX-chiL-chiL} fits well with predictions for the bulk 
state space of concrete classes of logarithmic CFTs, namely \cite{garu2} 
the $(1,p)$ triplet models and \cite{qusc4} WZW models with
supergroup target spaces, compare Remark \ref{rem:squig}.
\end{rem}

\medskip

We refer to the character \erf{F-H-bimod-char} as the \emph{Cardy-Cartan modular
invariant}, because in the semisimple case, for which $c_{i,j} \eq \delta_{i,j}$,
the expression \erf{F-H-bimod-char} reduces to the charge conjugation modular 
invariant, which in the context of studying compatible conformally invariant 
boundary conditions of the CFT is also known as the ???Cardy case???.

Next we generalize the Cardy-Cartan modular invariant to the situation that
we perform a twist by a ribbon Hopf algebra automorphism of $H$. This is
achieved as follows.
First note that an automorphism $\omega$ of $H$ induces an endofunctor 
$G_\omega\colon \HMod \To \HMod$. If $\omega$ is a \emph{Hopf algebra 
automorphism}, i.e.\ both an algebra and a coalgebra automorphism and commuting
with the antipode, then the functor $G_\omega$ is rigid monoidal, and if 
$\omega$ is a \emph{ribbon Hopf algebra automorphism}, i.e.\ in addition 
satisfies
  \be
  (\omega\oti\omega)(R) = R \qquand \omega(v) = v \,, 
  \labl{omegaRv}
then $G_\omega$ is is even a ribbon functor. Given two automorphisms $\omega$ 
and $\omega'$, one has $ G_\omega\cir G_{\omega'} \eq G_{\omega\omega'}$, as a 
strict equality of functors. It follows that $G_{\omega}$ has 
$G_{\omega^{-1}_{}}$ as an inverse and is thus an equivalence of categories.
In particular, $\omega$ induces a bijection $\overline\omega$ from the index set
$\I$ to itself, in such a way that 
$ \{ S_{\overline\omega(i)} \,|\, i\,{\in}\,\I \} $ is again a full set of 
representatives of the isomorphism classes of simple $H$-modules.

With this information we are in a position to establish

\begin{thm}\label{thm:Fomega}
For $\omega$ a ribbon Hopf algebra automorphism of a factorizable Hopf algebra
$H$, the character of the automorphism-twisted coregular bimodule 
$ \Fomega\eq (\Hs,\rho_\Fo^{},\ohr^{}_\Fo\cir(\idHs\oti\omega))$
{\rm (}see {\rm \erf{def:Fomega}}{\rm)} has the chiral decomposition 
  \be
  \chii^K_ \Fomega = \sum_{i,j\in\I} c_{\overline i,\overline\omega(j)}^{}\,
  \chii^L_i \otimes \chii^L_j \,.
  \labl{pic5csp}
\end{thm}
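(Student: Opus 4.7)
The plan is to reduce the twisted computation to the untwisted one \erf{chiKX-chiL-chiL}, by carefully tracking how the ribbon Hopf algebra automorphism $\omega$ enters. The starting observation is that, as a linear map on $H\otik H\op$, the bimodule character of $\Fomega$ differs from that of $\Fo$ only by precomposing the right tensor factor with $\omega$:
\be
\chii_{\Fomega}^{H\otimes H\op} = \chii_{\Fo}^{H\otimes H\op} \circ (\id_H \oti \omega) \,,
\ee
since the twist of $\ohr_\Fo$ by $\omega$ changes the trace $\mathrm{tr}(\rho_\Fo(h)\,\ohr_\Fo(\omega(h')))$ in exactly this way. This reduces the problem to controlling two ingredients: the chiral decomposition of $\chii_{\Fo}^{H\otimes H\op}$, and the interaction of $\omega$ with the $K$-character formula \erf{KX_char}.

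For the first ingredient, I would invoke the Frobenius-map isomorphism $\Phi$ that identifies \Fo\ with the regular bimodule $H$, so that \erf{F-H-bimod-char} applies. Combining with the relation above and using that $\omega$ is a coalgebra automorphism (so $\chii^H_j \circ \omega$ is again the character of a simple $H$-module), one gets
\be
\chii_{\Fomega}^{H\otimes H\op}
= \sum_{i,j\in\I} c_{i,j}^{}\, \chii^H_i \oti (\chii^H_j \circ \omega) \,.
\ee
By the definition of the bijection $\overline\omega$ (simple modules are permuted by the equivalence $G_\omega$), one has $\chii^H_j \circ \omega \eq \chii^H_{\overline\omega^{-1}(j)}$. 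Reindexing the sum via $k \eq \overline\omega^{-1}(j)$ then yields the Cartan decomposition
\be
\chii_{\Fomega}^{H\otimes H\op}
= \sum_{i,k\in\I} c_{i,\overline\omega(k)}^{}\, \chii^H_i \oti \chii^H_k \,.
\ee

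For the second ingredient, I would post-compose with $(m\oti m)\cir(t\oti f_{Q^{-1}}\oti f_Q\oti t)$ to convert bimodule characters into $K$-characters, in line with \erf{KX_char}. This is the step where factorizability and the ribbon-automorphism hypothesis are essential: the requirements \erf{omegaRv} ensure that $\omega$ preserves the R-matrix, the ribbon element, and hence also the monodromy matrix $Q$, the Drinfeld element $u$, and the group-like $t \eq u v^{-1}$, so the maps $f_Q$, $f_{Q^{-1}}$, and the element $t$ survive the $\omega$-twist without mixing in corrections. Identifying the resulting tensor factors as $\chii^L_j$ via \erf{char-drin}, and the first factor as $\chii^L_{\overline i}$ via Remark \ref{rem:dual}, gives
\be
\chii^K_{\Fomega}
= \sum_{i,k\in\I} c_{i,\overline\omega(k)}^{}\, \chii^L_{\overline i} \oti \chii^L_k \,,
\ee
and relabeling $i \mapsto \overline i$ in the outer sum yields \erf{pic5csp}.

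The main obstacle I anticipate is bookkeeping the direction of $\overline\omega$ (i.e.\ whether the permutation enters as $\overline\omega$ or $\overline\omega^{-1}$), which depends on the precise convention for how $G_\omega$ acts on simple modules; this is a matter of careful convention-tracking rather than a genuine mathematical difficulty. Everything else is essentially a rerun of the argument leading to \erf{chiKX-chiL-chiL}, with the key structural input being that ribbon Hopf algebra automorphisms preserve every datum that appears in the $K$-character formula.
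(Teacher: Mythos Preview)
Your proposal is correct and follows essentially the same route as the paper: reduce to the untwisted Cartan decomposition and track how the $\omega$-twist reindexes the simple characters. The only organizational difference is that the paper composes the already-established $K$-character identity \erf{Z_graph} with $\id_{\Hs}\oti(\omega^{-1})^*$ and then pushes $\omega^{-1}$ through $f_Q$ using $(\omega{\otimes}\omega)(Q)=Q$, whereas you perform the twist one level earlier, on the $H{\otimes}H\op$-bimodule character, and only afterwards apply the general conversion formula \erf{KX_char}; in your ordering the preservation of $Q$ and $t$ by $\omega$ is in fact not needed for the character computation itself (since \erf{KX_char} holds for an arbitrary bimodule), so your cautionary remark about it is harmless but superfluous.
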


\begin{proof}
Pictorially, \erf{chiKF=cij..} reads
  \eqpic{Z_graph} {330} {35} {
  \put(0,-2) {\Includepichtft{142b}
  \put(-4.5,-8.8) {\sse $H^*$}
  \put(15,-8.8) {\sse $H^*$}
  \put(34,8.8)  {\sse $f_Q$}
  \put(10,35.7) {\sse $f_{Q^{-1}}$}
  \put(46,78)   {\sse $F$}
  \put(29,65)   {\sse $\rho_{F}^H$}
  \put(49,52)   {\sse $\ohr_{F}^H$}
  \put(76,64)   {\sse $\pi_F^\HBimodsm$}
  }
  \put(100,39)    {$\dsty = \quad \sum_{i,j\in\I}~ c_{\bar i,j}$}
  \put(180,-2) {\Includepichtft{142a}
  \put(-4,-8.8) {\sse $H^*$}
  \put(15.6,12) {\sse $f_Q$}
  \put(33,81)   {\sse $S_i$}
  \put(16,68)   {\sse $\rho_{S_i}^H$}
  \put(61,68.8) {\sse $\pi_{S_i}^\HModsm$}
  }
  \put(260,-2) {\Includepichtft{142a}
  \put(-4,-8.8) {\sse $H^*$}
  \put(15.6,12) {\sse $f_Q$}
  \put(33,81)   {\sse $S_j$}
  \put(16,68)   {\sse $\rho_{S_j}^H$}
  \put(61,68.8) {\sse $\pi_{S_j}^\HModsm$}
  } } 
Now compose this equality with $\id_\Hs \oti (\omega^{-1})^*$ and use that,
by the first equality in \erf{omegaRv}, $(\omega\oti\omega)(Q) \eq Q$, so that
the automorphism $\omega^{-1}$ can be pushed through the Drinfeld map on both
sides of the equality. This yields
  \be
  \chii^K_ \Fomega = \sum_{i,j\in\I} c_{\overline i,j}^{}\,
  \chii^L_i \otimes \chii^L_{S_j^{\omega^{-1}}}
  = \sum_{i,j\in\I} c_{\overline i,j}^{}\,
  \chii^L_i \otimes \chii^L_{S_{\overline\omega(j)^{-1}}} \,.
  \ee
A relabeling of the summation index $j$ then gives \erf{pic5csp}.
\end{proof}


\section{Correlation functions} \label{sec:corfus}

As already pointed out, the conjecture that the character $\chii^K_F$ gives 
the torus partition function of a full CFT with bulk state space $F \eq \Fo$ 
constitutes a special case of a proposal for general correlation functions 
$\Coro gn$ of bulk fields, for orientable world sheets of arbitrary genus $g$ 
and with an arbitrary number $n$ of insertions of the bulk state space. This 
proposal \cite{fuSs3,fuSs5} is based on the idea that it should be possible to 
express correlators entirely and very directly through the basic structures of 
their ingredients -- that is, the topology of the world sheet and the structure
of the bulk state space as a symmetric Frobenius algebra and as a module 
$(F,\req^K_F)$ over the bulk handle Hopf algebra.

Let us first see how this works in the case of the torus partition 
function. To this end we note the equalities
  \eqpic{piF-epsFetaF} {335} {38} { \setulen 70
    \put(0,0) {\includepichtftsm{132s}
  \put(-4.4,-12.2) {\sse$ F $}
  \put(38.8,-12.2) {\sse$ F $}
  \put(48.1,80.2)  {$ \pi_F^\HBimodsm $}
  }
  \put(119,61)      {$ = $}
    \put(165,0) {\includepichtftsm{132q}
  \put(-4.4,-12.2) {\sse$ F $}
  \put(17.6,-12.2) {\sse$ F $}
  \put(50.3,44.1)  {\sse$ \Delta_F $}
  \put(61.9,16.1)  {\sse$ \eta_F^{} $}
  \put(66.9,96.5)  {\sse$ \eps_F^{} $}
  \put(73.9,72.7)  {\sse$ m_F^{} $}
  \put(125,104.2)  {$ \pi_F^\HBimodsm $}
  }
  \put(358,61)      {$ = $}
    \put(400,0) {\includepichtftsm{132r}
  \put(-4.4,-12.2) {\sse$ F $}
  \put(17.7,-12.2) {\sse$ F $}
  } } 
where we first use the Frobenius property and then the symmetry of $F$.
The equality of the left and right hand sides of \erf{piF-epsFetaF} allows
us to rewrite the expression \erf{Qrep2} for $\chii^K_F$ as
  \eqpic{htft132p} {110} {44} { \setlength\unitlength{1.5pt} \put(0,-5){
  \put(0,36)      {$ \chii^K_F=~ $}
    \put(30,0) {\INcludepichtft{132p_1}{38}
  \put(-2.9,-5.2)  {\sse$ K $}
  \put(38.3,42)    {\sse$ F $}
  \put(11.4,41)    {\sse$ \req^K_F $}
  \put(25.5,64.4)  {\sse$ m_F^{} $}
  \put(25.4,19.4)  {\sse$ \Delta_F $}
  \put(26.5,11.1)  {\sse$ \eta_F^{} $}
  \put(27.3,78.3)  {\sse$ \eps_F^{} $}
  } } }

Thus, basically the morphism $\chii^K_F$ consists of an $F$-`loop' combined with
the action of the handle Hopf algebra $K$. This can be seen as a manifestation 
of the fact that we deal with a world sheet having one handle.  
In a similar vein, for the correlator $\Coro g0$, i.e.\ the partition function 
of an orientable world sheet $\Sigma$ of arbitrary genus $g$, we are lead to the
following construction:

\def\leftmargini{1.49em}~\\[-2.52em]\begin{itemize}\addtolength{\itemsep}{-6pt}
    \item[\nx]
Select a skeleton $\Gamma$ for $\Sigma$ and label each edge of the skeleton
by the Frobenius algebra $F$. 
    \item[\nx]
Orient the edges of $\Gamma$ in such a manner that each vertex of $\Gamma$
has either one incoming and two outgoing edges or vice versa.
Label each of these three-valent vertices either with the coproduct 
$\Delta_F$ of $F$ or with the product $m_F$,
depending on whether one or two of its three incident edges are incoming.
    \item[\nx]
To avoid having to introduce any duality morphisms (analogously as in the
description \erf{Qrep2} of $\Cor 10 \eq \chii^K_F$), when implementing the 
previous part of the construction allow for adding further edges that 
connect one three-valent and one uni-valent vertex, the latter being 
labeled by the unit $\eta_F$ or counit $\eps_F$ of $F$.
    \item[\nx]
For each handle of $\Sigma$ attach one further edge,
labeled by the handle Hopf algebra $K$, to the corresponding
loop of the skeleton, and label the resulting new trivalent vertex
by the \rep\ morphism $\req^K_F$.
    \item[\nx]
The so obtained graph defines a morphism in $\HomCC(K^{\otimes g},\one)$. 
\end{itemize}

At genus $g\eq1$ this prescription precisely reproduces the morphisms
\erf{htft132p} in  $\HomCC(K,\one)$.
At higher genus several different choices for the skeleton $\Gamma$ are 
possible, but with the help of the symmetry and Frobenius property of $F$ 
one sees that they all yield one and the same morphism in \CbC, namely
  \eqpic{htft132o} {180} {108} { \put(0,20){
  \put(0,101)      {$\Cor g0 =~$}
    \put(60,0) {\Includepichtft{132o_1}
  \put(-8.7,-12)   {\sse$ \underbrace{\hspace*{7.1em}}_{g~ \rm factors} $}
  \put(-5,-9.2)    {\sse$ K $}
  \put(8.8,-9.2)   {\sse$ K $}
  \put(27,-9.2)    {\sse$ \cdots $}
  \put(49.2,-9.2)  {\sse$ K $}
  \put(91.6,120)   {\sse$ F $}
  \put(65,176)     {\sse$ \req^K_F $}
  \put(78.9,198.3) {\sse$ m_F^{} $}
  \put(79.8,151)   {\sse$ \Delta_F $}
  \put(80.1,11.3)  {\sse$ \eta_F^{} $}
  \put(80.9,214)   {\sse$ \eps_F^{} $}
  } } }

Our ansatz generalizes easily to world sheets with bulk field insertions: For 
$n$ outgoing (say) insertions of the bulk state space, just replace the counit 
$\eps_F \iN \HomCC(F,\one)$ in \erf{htft132o} with an $n$-fold coproduct 
$\Delta_F^{(n)} \iN \HomCC(F,F^{\otimes n})$.
When doing so, the order of taking coproducts is immaterial owing to 
coassociativity of $\Delta_F$, and the order of factors in $F^{\otimes n}$
does not matter due to cocommutativity of $\Delta_F$; with any choice of
ordering, the resulting morphism in $\HomCC(K^{\otimes g},F^{\otimes n})$ equals
  \eqpic{Sk_morph} {180} {142} { \put(0,13){
  \put(-19,133)      {$\Cor gn(F) ~=$}
    \put(60,0) {\Includepichtft{132aA}
  \put(-8.7,-12)   {\sse$ \underbrace{\hspace*{7.1em}}_{g~ \rm factors} $}
  \put(28.3,288)   {\sse$ \overbrace{\hspace*{7.6em}}^{n~ \rm factors} $}
  \put(-5,-9.2)    {\sse$ K $}
  \put(8.8,-9.2)   {\sse$ K $}
  \put(27,-9.2)    {\sse$ \cdots $}
  \put(49.2,-9.2)  {\sse$ K $}
  \put(27.2,278)   {\sse$ F $}
  \put(49.5,278)   {\sse$ F $}
  \put(66,278)     {\sse$ \cdots $}
  \put(86,278)     {\sse$ F $}
  \put(65,176)     {\sse$ \req^K_F $}
  \put(78.9,198.3) {\sse$ m_F^{} $}
  \put(71.1,219.3) {\sse$ \Delta_F $}
  \put(79.8,12.2)  {\sse$ \eta_F^{} $}
  } } }
for $n \,{>}\, 0$.

Likewise one can generalize the ansatz to correlators $\Corro gpq$ with any 
numbers $q$ of incoming and $p$ of outgoing insertions. The incoming insertions 
are incorporated by replacing the unit $\eta_F \iN \HomCC(\one,F)$ in 
\erf{htft132o} with a $q$-fold 
product $m_F^{(q)} \iN \HomCC(F^{\otimes q},F)$.
Furthermore, the case of genus zero is included by just omitting the 
$F$-`loop'. Altogether the prescription can be summarized as follows:
  \be
  \begin{array}{l}
  \Corr 011 := \mbox{\sl id}_F \,, 
  \Nxl4
  \Corr 111 := m_F \circ (\rho^K_F \,{\otimes}\, \mbox{\sl id}_F)
    \circ (\mbox{\sl id}_K \,{\otimes}\, \Delta_F) \,,
  \Nxl4
  \Corro g11 := \Corr 111 \circ (\mbox{\sl id}_K \,{\otimes}\,
    \mathrm{Cor}_{g-1;1,1}) \quad~ {\rm for}~~ g\,{>}\,1 \,,
  \Nxl4
  \Corro gpq := \Delta_F^{(p)} \circ \Corr g11 \circ
  \big( \mbox{\sl id}^{}_{K^{\otimes g}_{}} \,{\otimes}\, m_F^{(q)} \big) \,.
  \end{array}
  \labl{corrs}

\begin{rem} ~\\[2pt]
(i)\, One may be tempted to work with ribbons instead of with edges. But since
$F$ has trivial twist, $\theta_F \eq \id_F$,
the framing does not matter and can be neglected in our discussion.
\\[2pt]
(ii)\, Our ansatz results from the description \erf{htft132p} of the torus 
partition function $\Corro 100$ and the knowledge that $\Corro gpq$ must be 
an element of the morphism space $\HomCC(K^{\otimes g}\oti F^{\otimes q},
F^{\otimes p})$. It would be much more elegant to derive the prescription 
from a three-dimensional approach, which in the case of rational CFT should be 
related by a kind of folding trick to the TFT construction of \cite{fuRs}.
\end{rem}

What enters in the expressions for correlation functions above is only the 
structure of \C\ as a factorizable finite tensor category and of $F$ as a bulk 
state space, carrying the structure of a Frobenius algebra that is commutative 
and symmetric and has trivial twist. Again we can be more explicit for the case
that \C\ is equivalent to the category \HMod\ of \findim\ modules over
some factorizable Hopf algebra $H$ and that $F \eq \Fomega$ for any ribbon 
automorphism $\omega$ of $H$. 
Let us present the correlator $\Corro gpq$ for the case that $p \eq q \eq 1$, 
the extension to $p,q\,{>}\,1$ being easy, and first take $F$ to be the 
coregular $H$-bimodule \Fo. Then by inserting the expressions 
\ref{pic-Hb-Frobalgebra} for the structural morphisms of the Frobenius algebra 
\Fo\ and writing out the braiding of the category 
$\HMod^{\rm rev} \,{\boxtimes}\, \HMod \,{\simeq}\, \HBimod$ (which appears in 
the representation morphism $\req^K_F$), after a few rearrangements one obtains
  \eqpic{CorrgnH} {300} {118} { \put(0,18){ \setulen90
  \put(-16,114)    {$\Corr g11(\Fo) ~=$}
  \put(97,0) {\INcludepichtft{131c}{342}
  \put(-3.9,-13)   {$ \underbrace{\hspace*{10.5em}}
                    _{g~ {\rm factors~of}~\Hs{\otimes}\Hs} $}
  \put(-5,-9.6)    {\sse$ \Hss $}
  \put(8.2,-9.6)   {\sse$ \Hss $}
  \put(44.5,-9.6)  {\sse$ \Hss $}
  \put(56.5,-9.6)  {\sse$ \Hss $}
  \put(76.5,-9.6)  {\sse$ \dots\dots$}
  \put(111.5,-9.6) {\sse$ \Hss $}
  \put(123.5,-9.6) {\sse$ \Hss $}
  \put(152.1,-9.6) {\sse$ \Hss $}
  \put(233,260)    {\sse$ \Hss $}
  \put(36,180.5)   {\sse$ \Lambda$}
  \put(70,130.5)   {\sse$ \Lambda$}
  \put(137,69.5)   {\sse$ \Lambda$}
  \put(43,220)     {\sse$ \ohrad $}
  \put(91.4,169.4) {\sse$ \ohrad $}
  \put(158.2,108.5){\sse$ \ohrad $}
  } } }
with $\ohrad$ the right-adjoint action of $H$ on itself.

For general $\Fomega$ the result differs from \erf{CorrgnH} only by a few
occurences of the automorphism $\omega$ (recall formula \erf{def:Fomega} 
and that the structural morphisms of the
Frobenius algebra $\Fomega$ coincide with those of \Fo\ as linear maps):
  \eqpic{CorrgnHw} {300} {110} {\setulen90
  \put(-16,120) {$\Corr g11(\Fomega) ~= $}
  \put(90,0) { \INcludepichtft{131c_1}{342}
  \put(-5,-9.6)    {\sse$ \Hss $}
  \put(8.2,-9.6)   {\sse$ \Hss $}
  \put(44.5,-9.6)  {\sse$ \Hss $}
  \put(56.5,-9.6)  {\sse$ \Hss $}
  \put(76.5,-9.6)  {\sse$ \dots\dots$}
  \put(111.5,-9.6) {\sse$ \Hss $}
  \put(123.5,-9.6) {\sse$ \Hss $}
  \put(152.5,-9.6) {\sse$ \Hss $}
  \put(234,260)    {\sse$ \Hss $}
  \put(36,180.5)   {\sse$ \Lambda$}
  \put(70,130.5)   {\sse$ \Lambda$}
  \put(137,69.5)   {\sse$ \Lambda$}
  \put(19.5,208.5) {\sse$ \omega^{\!\!-1}$}
  \put(68.7,158.5) {\sse$ \omega^{\!\!-1}$}
  \put(135.3,97.5) {\sse$ \omega^{\!\!-1}$}
  \put(43,220)     {\sse$ \ohrad $}
  \put(91.4,169.4) {\sse$ \ohrad $}
  \put(158.2,108.5){\sse$ \ohrad $}
  } }

\medskip

The correlators of a rational conformal field theory must be invariant under an 
action of the mapping class group \Mapgn\ of closed oriented surfaces of genus 
$g$ with $n$ boundary components, where $g$ is the genus of the world sheet and 
$n \eq p\,{+}\,q$ is the number of (incoming plus outgoing) field insertions. 
One expects that this can still be consistently imposed for logarithmic CFTs.
And indeed we are able to establish mapping class group invariance of the
ansatz for correlators that we presented above, i.e. of the morphisms 
\erf{corrs} for the case that $\C \,{\simeq}\, \HMod$ and $F \eq \Fomega$.

Recall that the morphism $\Corro gpq$ is an element of the space
$\HomCC(K^{\otimes g}\oti F^{\otimes q},F^{\otimes p})$. A natural action 
\pigppq\ of \Mapgppq\ on this morphism space has been found in 
\cite{lyub8,lyub11}. \pigppq\ is described in some detail in Appendix 
\ref{app:mpg}; here we just note that \Mapgn\ is generated by suitable Dehn 
twists, that most of them are represented by pre-composing with an endomorphim 
of $K^{\otimes g}$ or $F^{\otimes q}$ or post-composing with an
endomorphim of $F^{\otimes p}$, and that at genus 1 the relevant 
endomorphims of $K \eq \Hs \otik \Hs$ are 
  \eqpic{S_KH-TKH} {410}{45} { \put(0,4){
  \put(0,39)    {$ \SK ~= $}
  \put(50,-6) { \Includepichtft{121cA}
  \put(-4.5,-9.2) {\sse$ \Hss $}
  \put(13,51.4)   {\sse$ Q^{-1} $}
  \put(36.1,-9.2) {\sse$ \Hss $}
  \put(32.5,87)   {\sse$ \lambda $}
  \put(44.6,106)  {\sse$ \Hss $}
  \put(57.1,11.8) {\sse$ Q $}
  \put(73.5,56)   {\sse$ \lambda $}
  \put(85.6,106)  {\sse$ \Hss $}
  }
  \put(180,39)    {and}
  \put(230,39)    {$ \TK ~= $}
  \put(278,0)  { \Includepichtft{121a}
  \put(-4.4,-9.2) {\sse$ \Hss $}
  \put(12,15)     {\sse$ v $}
  \put(39.6,94)   {\sse$ \Hss $}
  \put(61.2,-9.2) {\sse$ \Hss $}
  \put(89.7,34.5) {\sse$ v^{-1} $}
  \put(105.3,94)  {\sse$ \Hss $}
  } } }
which amount to an S- and T-transformation, respectively.

\begin{rem}
In general, the mapping class group action considered in \cite{lyub8,lyub11} 
is only projective. But owing to the fact that the category relevant to us is 
an enveloping category \CbC, with a ribbon structure in which the two factors 
are treated in an opposite fashion, in the situation at hand the action is in 
fact a genuine linear representation (compare Remark 5.5 of \cite{fuSs3}).
\end{rem}

Denote by \Mapgpq\ the subgroup of \Mapgppq\ that leaves the subsets of 
incoming and outgoing insertions separately invariant, and by \pigpq\ the 
\rep\ of \Mapgpq\ that is obtained (compare \erf{piXYgpq}) from \pigppq.
The following is the main result of \cite{fuSs5} (Theorem 3.2,
Remark 3.3 and Theorem 6.7):

\begin{thm}\label{thm:Mapgppq}
For $H$ a factorizable ribbon Hopf algebra and $\omega$ a ribbon automorphism 
of $H$, and for any triple of integers $g,p,q \,{\ge}\,0$, the morphism 
$\Corro gpq(\Fomega)$ is invariant under the action \pigpq\ of the group \Mapgpq.
\end{thm}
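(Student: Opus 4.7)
The plan is to reduce the statement to invariance under a finite set of generating Dehn twists and then to check each of these using (i) the Frobenius/commutative/symmetric structure on $\Fomega$, (ii) the partial monodromy action $\req^K_\Fomega$, and (iii) the fact that a ribbon Hopf algebra automorphism $\omega$ preserves both the R-matrix $R$ and the ribbon element $v$, hence the monodromy $Q$, the Drinfeld map $f_Q$, the cointegral $\lambda$ and the integral $\Lambda$.

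First, I would strip off the insertions. Since by \erf{corrs} one has $\Corro gpq(\Fomega) = \Delta_\Fomega^{(p)} \circ \Corr g11(\Fomega) \circ (\id_{K^{\otimes g}} \oti m_\Fomega^{(q)})$, and since $\Delta_\Fomega$ is cocommutative and coassociative while $m_\Fomega$ is commutative and associative (by Theorem \ref{thmfuSs4}), the subgroup $\mathfrak S_p \times \mathfrak S_q \subset \Mapgpq$ permuting the incoming and outgoing insertions acts trivially on $\Corro gpq(\Fomega)$. The Dehn twists encircling an individual insertion point translate, under the embedding \HBimod{} into the mapping class group representation, to post- or pre-composing the respective $F$-leg with the ribbon twist $\theta_\Fomega$; by Theorem \ref{thmfuSs4} the twist of $\Fomega$ is trivial, so these Dehn twists act as the identity. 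It therefore suffices to prove invariance of $\Corr g11(\Fomega)$ under the mapping class group of the genus-$g$ surface with two punctures that do not get permuted.

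Second, I would reduce to $g=1$ by induction, using the recursive definition $\Corro g11 = \Corr 111 \circ (\id_K \oti \mathrm{Cor}_{g-1;1,1})$. A Humphries-style generating set for the relevant mapping class group splits into (a) the S- and T-Dehn twists on each handle individually, which by Lyubashenko's description \pigppq\ act via pre-composition of the corresponding tensor factor of $K^{\otimes g}$ with $\SK$ or $\TK$ respectively, and (b) ``chain'' Dehn twists mixing neighbouring handles, which by the same description act via morphisms built from the handle-algebra multiplication $m_K$ and partial monodromies between different $K$-legs. The chain generators in case (b) are absorbed, along the lines of Lyubashenko's proof of \rep-invariance for the Lyubashenko invariant, by the combined use of $K$-associativity, the module property of $\req^K_\Fomega$, and the Frobenius and commutativity axioms on $\Fomega$ used to slide the $K$-strands along the single $\Fomega$-loop of \erf{htft132o}. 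Hence everything reduces to verifying the genus-$1$, one-handle invariance $\Corr 111(\Fomega) \circ (\SK \oti \id_\Fomega) = \Corr 111(\Fomega)$ and $\Corr 111(\Fomega) \circ (\TK \oti \id_\Fomega) = \Corr 111(\Fomega)$.

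Third, I would verify these two identities explicitly in \HBimod. The T-invariance is straightforward: by \erf{S_KH-TKH} the map $\TK$ acts on $K = \Hss \otik \Hss$ by left and right multiplication with $v^{\pm 1}$, so that $\req^K_\Fomega \circ (\TK \oti \id) = \theta_\Fomega \circ \req^K_\Fomega$; since $\theta_\Fomega = \id$ this leaves the correlator invariant, and the twisting by $\omega$ is harmless because $\omega(v)=v$. The main obstacle is S-invariance. Here the plan is to use the explicit form \erf{KX_char}--\erf{chiKX-chiL-chiL}: the morphism $\SK$ in \erf{S_KH-TKH} is a double Drinfeld map sandwiched between two cointegrals $\lambda$, so $\Corr 111(\Fomega) \cir (\SK \oti \id)$ reorganizes into an $F$-loop whose $K$-strands have been replaced by a $(\lambda,Q^{\pm 1})$-expression; non-degeneracy of the Hopf pairing of $L$ (guaranteed by factorizability of $H$, see Remark \ref{rem:factorizable}(i)) allows one to identify this expression with the original partial-monodromy action. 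What carries this reorganization through is the identification of $\Fomega$ with an $\omega$-twist of the regular bimodule via the Frobenius map $\Phi$, together with the compatibility $(\omega \oti \omega)(R) = R$ which guarantees that all $\omega$-insertions can be moved past the $Q^{\pm 1}$-morphisms. The hard technical step is the graphical manipulation showing that, after inserting \erf{Lyubact_HKH} and the pictures \erf{pic-Hb-Frobalgebra} into $\Corr 111(\Fomega)$, the cointegral and integral properties collapse the S-twisted handle to the original one; this is essentially the genus-$1$ core of the argument in \cite{fuSs5}, and once it is in place the inductive step together with the reductions above yields the full theorem.
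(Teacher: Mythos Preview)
The paper does not actually prove this theorem; it only records it as the main result of \cite{fuSs5} (Theorem 3.2, Remark 3.3 and Theorem 6.7 there). So there is no in-paper argument to match against, only the overall strategy of that reference: reduce to the explicit list of generators in Appendix \ref{app:mpg} and verify invariance generator by generator, using the Frobenius, commutativity, symmetry and trivial-twist properties of $\Fomega$ together with the explicit Hopf-algebraic form \erf{CorrgnHw} of the correlator.

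Your outline follows this same strategy, and the first reductions (permutations of insertions via (co)commutativity and (co)associativity; Dehn twists about boundary circles via $\theta_\Fomega=\id$) are correct. But several of the remaining steps are either missing or not right as stated:
\begin{itemize}
\item You never address the generators $t_{j,m}$ (item 6 in Appendix \ref{app:mpg}), which link the insertions to the handles and act via the rather intricate formula \erf{LyubactC3}. These do not reduce to permutations or boundary twists, and in \cite{fuSs5} their invariance is a substantial separate computation.
\item Your claimed identity $\req^K_\Fomega \circ (\TK \oti \id) = \theta_\Fomega \circ \req^K_\Fomega$ is not correct. From $T^K\circ\iK_X=\iK_X\circ(\theta_{X^\vee}\oti\id_X)$ and \erf{QHX} one does not get a bare $\theta_\Fomega$ on the output; what actually happens is that the twist migrates along the $F$-loop in \erf{htft132p}, and cancelling it requires the Frobenius and symmetry relations in addition to $\theta_\Fomega=\id$. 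So the $d_m$-invariance is not as immediate as you suggest.
\item The treatment of the inter-handle generators $a_m,e_m$ (``chain'' twists) is only a sketch; these act via $\OL$ and $\QL$ (item 4), and showing that they are absorbed by the recursive structure $\Corro g{1}{1}=\Corr 1{1}{1}\circ(\id_K\oti\Corro{g-1}{1}{1})$ needs the specific interplay between the $K$-module structure of $\Fomega$ and its Frobenius structure, which you do not spell out.
\item For $S$-invariance you ultimately write ``this is essentially the genus-$1$ core of the argument in \cite{fuSs5}'', which is the very thing the theorem is citing; that step is where the integral/cointegral identities and factorizability are genuinely used, and it cannot be deferred.
\end{itemize}
In short, the scaffolding is right, but the load-bearing verifications (the $t_{j,m}$ generators, the correct mechanism for $T^K$-invariance, and the explicit $S^K$-computation) are either absent or misstated.
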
 

\begin{rem}
Besides invariance under the action of \pigpq, the other decisive property
of correlation functions is compatibility with sewing. 
That is, there are sewing relations at the level of chiral blocks, and 
the correlators must be such that the image of a correlator,
as a specific vector in a space of chiral blocks, under these given chiral 
relations, is again a correlator. Compatibility with sewing allows one to 
construct all correlation functions by starting from a small set of 
fundamental corrleators and thereby amounts to a kind of locality property.
 \\
Like mapping class group invariance, compatibility with sewing is a 
requirement in rational CFT, and again one expects that one can consistently 
demand it for logarithmic CFTs as well. For now, checking compatibility of our 
ansatz with sewing is still open.
It is in fact fair to say that already for rational CFT the study of
sewing \cite{fjfrs,fjfs} still involves some brute-force arguments.
Understanding sewing in a way suitable for logarithmic CFT may require
(or, amount to) deeper insight into the nature of sewing.
\end{rem}

\newpage

\appendix \section{Appendix}

\subsection{Coends} \label{Acoend}

For a category \C, the \emph{opposite} category $\C\op$ is the one with
the same objects, but reversed morphisms, i.e.\ a morphism $f\colon U\To V$
in \C\ is taken to be a morphism $V\To U$ in $\C\op$. Given \ko-li\-ne\-ar 
abelian categories \C\ and \D\ and a functor $G$ from \CopC\
to \D, a \emph{dinatural transformation} from $G$ to an object $B\,{\in}\,\D$ 
is a family $\varphi \,{=}\, \{ \varphi_U\colon G(U,U)\To B \}_{\!U\in\C}^{}$ 
of morphisms with the property that the square
  \bee4010{
  \xymatrix @R+8pt{ & G(V,U) \ar^{G(\idsm_V,f)}[dr]\ar_{G(f,\idsm_U)}[dl]\\
  ~~~~G(U,U) \ar_{\varphi_U^{}}[dr] && G(V,V)~~~~ \ar^{\varphi_V^{}}[dl] \\ & B\,
  } }
of morphisms commutes for all $f\,{\in}\,\Hom(U,V)$.

A \emph{coend} $(D,\iota)$ for the functor $G$ is an initial object among
all such dinatural transformations, that is, it is an object $D\,{\in}\,\D$ 
together with a dinatural transformation $\iota$ such that for any dinatural 
transformation $\varphi$ from $G$ to any $B\iN\D$ there exists a unique morphism
$\kappa \,{\in}\, \Hom_\D(D,B)$ such that $\varphi_U^{} \,{=}\, \kappa \cir 
\iota_U^{}$ for every object $U$ of \C. In other words, given a diagram
  \bee4010{
  \xymatrix @R+8pt{ & G(V,U) \ar^{G(\idsm_V,f)}[dr]\ar_{G(f,\idsm_U)}[dl]\\
  ~~G(U,U) \ar@/_2pc/_{\varphi_U^{}}[ddr] \ar_{\iota_U^{}}[dr] &&
  G(V,V)~~ \ar@/^2pc/^{\varphi_V^{}}[ddl] \ar^{\iota_V^{}}[dl]
  \\ & D\, \ar@{-->}^\kappa[d] \\ & B\,
  } }
with commuting inner and outer squares for any morphism $f \iN \HomC(U,V)$, 
there exists a unique morphism $\kappa$
such that also the triangles in the diagram commute for all $U,V \iN \C$.

If the coend exists, then it is unique up to unique isomorphism. The 
underlying object, which by abuse of terminology is referred to as the coend of
$G$ as well, is denoted by an integral sign,
  \be
  D = \coen {U\in\C} G(U,U) \,.
  \ee
The finiteness properties of the categories we are working with in this paper 
guarantee the existence of all coends we need. Specifically, the bulk state 
space \Fo\ is the coend \erf{coendCC} of the functor 
$G^\C_\boxtimes\colon \C\op\Times\C \To \CbC$ that acts on objects as 
$(U,V)\,{\mapsto}\,U^\vee\boti V$, while the chiral and full handle Hopf 
algebras $L$ and $K$ are the coends \erf{defLK} of the functors 
$G^\C_\otimes\colon \C\op\Times\C \To \C$ and
$G^\CbC_\otimes\colon (\CbC)\op\Times(\CbC) \To \CbC$ that act on objects as
$(U,V)\,{\mapsto}\,U^\vee\oti V \iN \C$ and as
$(X,Y)\,{\mapsto}\,X^\vee\oti Y \iN \CbC$, respectively.

If the category \C\ is cocomplete, then an equivalent description of the coend of
$G$ (see e.g.\ section V.1 of \cite{MAy4}) is as the coequalizer of the morphisms
  \be
  \xymatrix{
  {\dsty\coprod_{f\colon V\to W}} G(V,W)\,\, \ar@<3.5pt>[r]^{~~s}
  \ar@<-2.5pt>[r]_{~~t} & \,{\dsty\coprod_{U\in\,\C} G(U,U)}
  } \labl{coendalizer}
whose restrictions to the `$f$th summand' are $s_f \eq F(f,\id)$ and 
$t_f \eq F(\id,f)$, respectively. Thus, morally, the coend of $G$ is the
universal quotient of $\coprod_U G(U,U)$ that enforces the two possible
actions of $G$ on any morphism $f$ in \C\ to coincide.


\subsection{The full center of an algebra} \label{Acenter}

Given an object $U$ of a monoidal category \C, a \emph{half-braiding}
$z \eq z(U)$ on $U$ is a natural family of isomorphisms $z_V\colon U \oti V 
\To V \oti U$, for all $V \iN \C$, such that (assuming \C\ to be strict) 
$z_\one \eq \id_U$ and
  \be
  (\id_V \oti z_W) \circ (z_V \oti \id_W) = z_{V\otimes W}
  \ee
for all $V,W \iN \C$. The \emph{monoidal center} $\Z\C$ is the category which
has as objects pairs $(U,z)$ consisting of objects of \C\ and of 
half-braidings, while its morphisms are morphisms $f$ of \C\ that are naturally 
compatible with the half-braidings of the source and target of $f$. The 
category $\Z\C$ is again monoidal, with tensor product
  \be
  (U,z) \otimes (U',z')
  := (U\oti U', (z_V{\otimes}\id_{U'})\cir(\id_U{\otimes}z'_V) )
  \ee
and tensor unit $(\one,\id)$, and with respect to this tensor product it is 
braided, with braiding isomorphisms $c_{(U,z),(U',z')} \,{:=}\, z_{U'}$.
The forgetful functor $\FF$ from $\Z\C$ to \C, acting on objects as
  \be
  \FF: \quad (U,z)\,\longmapsto\,U \,,
  \ee
is faithful (but in general neither full nor essentially surjective) and 
monoidal.

For $A \eq (A,m,\eta)$ a (unital, associative) algebra in \C, we say that an 
object $(U,z)$ of $\Z\C$ together with a morphism $r\iN \HomC(U,A)$ is 
\emph{compatible} with the product of $A$ iff
  \be
  m \circ (\id_A \oti r) \circ z_A^{} = m \circ (r \oti \id_A) 
  \labl{compatiblepair}
in $\HomC(U\oti A,A)$.
Given the algebra $A$ in \C, the \emph{full center} $Z(A)$ of $A$ is 
\cite{davy20} a pair consisting of an object in $\Z\C$ -- by abuse of notation
denoted by $Z(A)$ as well -- and a morphism $\zeta_A \iN \HomC(\FF(Z(A)),A)$
that is terminal among all pairs $((U,z),r)$ in $\Z\C$ that are compatible with 
the product of $A$. That $Z(A)$ is terminal among compatible pairs means that 
for any such pair $((U,z),r)$ there exists a unique morphism 
$\kappa \iN \Hom_{\Z\C}((U,z),Z(A))$ such that the equality
  \be
  \zeta_A \circ \FF(\kappa) = r
  \ee
holds in $\HomC(U,A)$.

For the categories \C\ relevant to us in this paper, the full center of any 
algebra in \C\ exists. Being defined by a universal property, $Z(A)$ unique 
up to unique isomorphism. Further, $Z(A)$ has a unique structure of a (unital, 
associative) algebra in $\Z\C$ such that $\zeta_A$ is an algebra morphism in 
\C, and this algebra structure is commutative \cite[Prop.\,4.1]{davy20}.
Furthermore, if $A$ and $B$ are Morita equivalent algebras in \C, then the 
algebras $Z(A)$ and $Z(B)$ in $\Z\C$ are isomorphic \cite[Cor.\,6.3]{davy20}.

\medskip

If the category \C\ is braided, then there is also a more familiar notion of 
center of an algebra inside \C\ itself, albeit there are two variants (unless 
the braiding is symmetric), the left center and the right center. The 
\emph{left center} of an algebra $A$ in \C\ is obtained with the help of an 
ana\-logue of the compatibility condition \erf{compatiblepair} in which the 
half-braiding is replaced by the braiding $c$ of \C, according to
  \be
  m \circ (\id_A \oti q) \circ c_{U,A}^{} = m \circ (q \oti \id_A) \,.
  \labl{compaleft}
Again one considers pairs of objects $U$ in \C\ together with morphisms 
$q \iN\HomC(U,A)$ obeying \erf{compaleft}, and defines the left center 
$C_l(A) \,{\equiv}\, (C_l(A),\zeta^l_A)$ to be terminal among such compatible 
pairs. The right center $C_r(A)$ is defined analogously. 
$C_l(A)$ has a unique structure of an algebra in \C\ such that the morphism
$\zeta^l_A \iN \HomC(C_l(A),A)$ is an algebra morphism. This algebra structure 
is commutative (Prop.\,2.37(i) of \cite{ffrs} and Prop.\,5.1 of \cite{davy20});
clearly, if already $A$ is commutative, then $C_l(A) \eq A \eq C_r(A)$.
If \C\ is ribbon and $A$ is Frobenius,
then $C_l(A)$ has trivial twist \cite[Lemma\,2.33]{ffrs}.

We mention the left center here because it was instrumental for the construction
by which the full center $Z(A)$ was introduced originally, for modular tensor 
categories \cite[Eq.\,(A.1)]{rffs}.
For the more general categories of our interest, there is the following variant,
which also makes use of the functor $\RR\colon \C\To\Z\C$ that is right adjoint 
to the forgetful functor $\FF$. Let us assume that \C\ is a factorizable finite
ribbon category for which $\Z\C$ is monoidally equivalent 
to the enveloping category \CbC\ (compare Remark \ref{rem:factorizable}(iv)).
Then the right adjoint functor $\RR$ exists \cite[Thm.\,3.20]{rugw}. Moreover,
$\RR$ is lax monoidal, implying that for any algebra $A$ in \C,
the object $\RR(A)\iN\Z\C$ is again an algebra. Also, the natural 
transformations $\eps\colon \FF\cir\RR \,{\Rightarrow}\,\Id_\C$ and
$\eta\colon \Id_\CbC \,{\Rightarrow}\, \RR\cir \FF$ of the adjunction are
monoidal \cite[Lemma.\,5.3]{davy20}. And further, provided that the natural 
transformation $\eps$ is epi, the full center of an algebra $A\iN\C$ 
can be expressed as ($\!$\cite[Thm.\,5.4]{davy20} and \cite[Thm.\,3.24]{rugw})
  \be
  Z(A) = C_l(\RR(A)) \qquad{\rm with}\qquad
  \zeta_A = \eps_A \,\circ\, \zeta^l_{\RR(A)} \,.
  \labl{ZA=Cl}
The functor $\RR$ can be given explicitly; once we identify $\Z\C$ with \CbC,
it acts on objects $U\iN\C$ as \cite[Eq.\,(3.43)]{rugw}
  \be
  \RR(U) = (U \boti \one) \otimes \RR(\one) \,.
  \labl{RRU=}
Moreover, the algebra $\RR(\one)$ in $\Z\C$ is commutative
\cite[Lemma\,3.25]{rugw} and hence equals $Z(\one)$. Together with the
formulas \erf{Z(1)} and \erf{coendFo} for the bulk state space \Fo\
this shows that $\RR(\one)$ can be obtained as a  coend,
  \be
  \RR(\one) \,\cong \coendF U \,.
  \labl{RR1=coendF}
Hereby for modular tensor categories, for which \Fo\ is given by the finite 
direct sum \erf{Fo}, the formula \erf{ZA=Cl} for the full center reduces to 
(A.1) of \cite{rffs}.
As will be described elsewhere, an isomorphism between the objects
on the left and right hand sides of \erf{RR1=coendF} in fact exists
for any factorizable finite tensor category \C.


\subsection{Algebras and characters} \label{algchar}

Let $A \,{=}\, (A,m,\eta)$ be a (unital, associative) \findim\ algebra over a 
field \ko, and let $M \,{=}\, (M,\rho)$ be a \findim\ left $A$-module. The 
\emph{character} 
$\chii_M^A$ of the module $M$ is defined to be the partial trace of the \rep\
 morphism $\rho$, with the trace taken in the sense of linear maps. This means
  \be
  \chii_M^A = \mathrm{tr}_M(\rho) = \tilde d^{\,\ko}_M \circ
  (\rho \oti \id_{M^\vee}) \circ (\id_A\oti b^{\,\ko}_M) ~\in \Hom(A,\ko) \,,
  \labl{chiiMA}
with $b^{\,\ko}_M$ the (right) coevaluation and $\tilde d^{\,\ko}_M$ the (left) 
evaluation map of \Vectk. Now the map $\tilde d^{\,\ko}_M \,{\in} 
         $\linebreak[0]$
\Homk(M \otik M^*,\ko)$ can be expressed 
through the right evaluation map $d^{\,\ko}_M \,{\in}\, \Homk(M^* \otik M,\ko)$ 
as $\tilde d^{\,\ko}_M \eq d^{\,\ko}_M \cir \tau_{M,M^*_{\phantom|}}$ with 
$\tau$ the flip map (and similarly for the left and right coevaluations). Thus
two equivalent descriptions of the character are, pictorially,
  \eqpic{def_char} {200} {27} { \put(0,3){
  \put(0,36)    {$ \chii_M^A ~= $}
  \put(45,3){ \begin{picture}(0,0)(0,0)
      \scalebox{.38}{\includegraphics{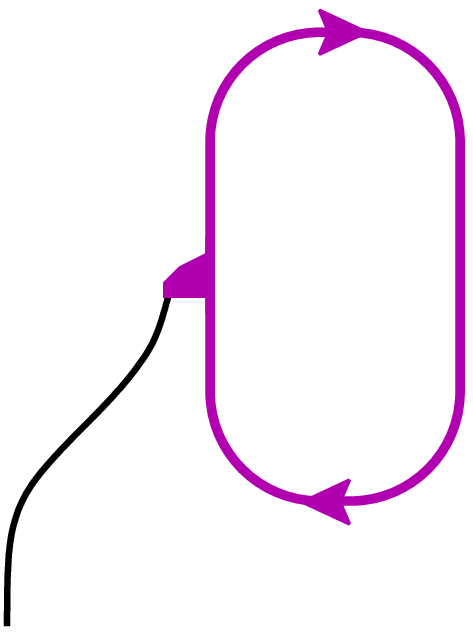}}\end{picture}
  \put(-3.9,-9) {\sse$ A $}
  \put(14.6,59) {\sse$ M $}
  }
  \put(125,36)  {$ = $}
  \put(150,3){ \begin{picture}(0,0)(0,0)
      \scalebox{.38}{\includegraphics{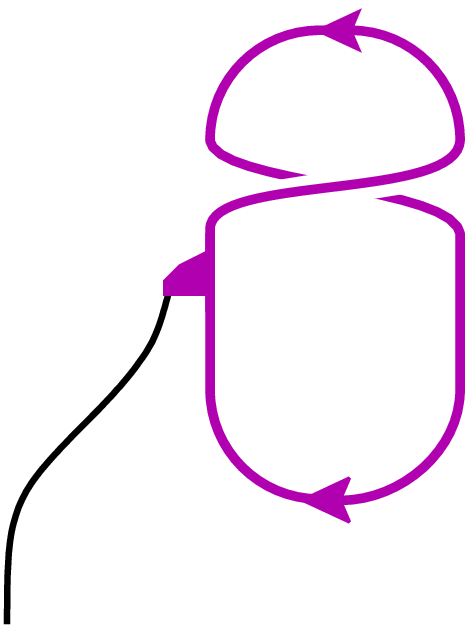}}\end{picture}
  \put(-3.9,-9) {\sse$ A $}
  \put(51,60)   {\sse$ M $}
  } } }

Characters are class functions, i.e.\ satisfy
$\chii_M^A \,{\circ}\, m \,{=}\, \chii_M^A \,{\circ}\, m\,{\circ}\, \tau_{A,A}$.
$A$ is semisimple iff the space of class functions is already exhausted by 
linear combinations of characters of $A$-modules \cite[Cor.\,2.3]{coWe6}.
Furthermore, the characters of the non-isomorphic simple $A$-modules are
linearly independent over \ko\ \cite[Thm.\,1.6(b)]{loren}, and
characters behave additively under short exact sequences. As a 
consequence, taking $ \{ S_i \,|\, i\,{\in}\,\I \} $ to be a full set of 
representatives of the isomorphism classes of simple $A$-mo\-du\-les, with 
characters $\chii_i^A \,{\equiv}\, \chii_{S_i^{}}^A$, and writing 
$[\,M\,{:}\, S_i\,]$ for the multiplicity of $S_i$ in the Jordan-H\"older 
series of $M$, one has
  \be
  \chii_M^A = \sum_{i\in\I}\, [\,M \,{:}\, S_i\,] \, \chii_i^A \,.
  \label{chiAV}
  \ee

The simple modules $S_i$ have projective covers $P_i$, from which they can be
recovered as the quotients $S_i \eq P_i/J(A)\,P_i$ with $J(A)$ the Jacobson 
radical of $A$. The modules $ \{ P_i \,|\, i\,{\in}\,\I \} $ constitute a full 
set of representatives of the isomorphism classes of indecomposable projective 
left $A$-modules. There is a (non-unique) collection $ \{ e_i \,{\in}\, A 
\,|\, i\,{\in}\,\I \} $ of primitive orthogonal idempotents such that 
$P_j \,{=}\, A\,e_j$ for all $j\iN\I$, as well as $Q_j \,{=}\, e_j\,A$ for 
a full set of representatives of the isomorphism classes of indecomposable 
projective right $A$-modules.
The algebra $A$ decomposes as a left module over itself (with the
\emph{regular} action, given by the product $m$) as
  \be
  _AA \,\cong\, \bigoplus_{i\in\I}\, P_i \otimes_\ko \ko^{\dim(S_i)}_{} .
  \labl{AA=}

\smallskip

Of particular interest to us is $A$ regarded as a \emph{bi}module over itself,
with regular left and right actions. The decomposition of this bimodule into 
indecomposables is considerably more involved than the decomposition as a module
and cannot be expressed in a `model-independent' manner analogous to \erf{AA=}. 
But we can use that the structure of an $A$-bimodule is equivalent to the one of
a left $A{\otimes}A\op$-module. Accordingly, by the \emph{character of $A$ as an
$A$-bimodule} we mean its character as an $A{\otimes}A\op$-module.

Now if \ko\ has characteristic zero, then for any two \findim\ \ko-algebras 
$A$ and $B$ a complete set of simple modules over the tensor product algebra 
$A{\otimes}B$ is \cite[Thm.\,(10.38)]{CUre1} given by 
$\{S_i^A \otik S_j^B \,|\, i\,{\in}\,\I_A\,,\, j\,{\in}\,\I_B\}$.
In view of \eqref{chiAV}, the character of any $A{\otimes}B$-module $X$ can 
therefore be written as the bilinear combination
  \be
  \chii_X^{A\otimes B} = \sum_{i\in\I_A,\,j\in\I_B}\!
  [\,X \,{:}\, S_i^A \otik S_j^B\,] \,\, \chii_i^A \otik \chii_j^B \,.
  \label{chiABX}
  \ee
For the case of our interest, i.e.\ $B \,{=}\, A\op$ and $X \,{=}\, A$ ,
this decomposition reads
  \be 
  \chii_A^{A\otimes A\op_{}} = \sum_{i,j\in\I}\, [\,A \,{:}\,S_i\otik T_j\,]
  \, \chii_{S_i\otimes_\ko^{}T_j}^{A\otimes A\op_{}}
  \labl{A-bimod-char}
with $T_k \eq Q_k/J(A)\,Q_k$ the simple quotients of the projective right 
$A$-modules $Q_k$.

Next we use that (for details see \cite[App.\,A]{fuSs4})
  \be
  [\,A \,{:}\,S_i\otik T_j\,] = \dimk\big( \HomA(P_i,P_j) \big) 
  = c_{i,j} \,,
  \labl{ASiTj=cij}
where the non-negative integers $c_{i,j}$ are defined by
  \be
  c_{i,j} := [\,P_i \,{:}\, S_j \,] \,.
  \labl{cij}
The matrix $C \,{=}\, \big(\,c_{i,j}\,\big)$ is called the 
\emph{Cartan matrix} of the algebra $A$, or of the category $A\Mod$.
It obviously depends only on $A\Mod$ as an abelian category. 

Assume now that $A$ is \emph{self-injective}, i.e.\ injective as a left module 
over itself. Then $T_k^{} \,{\cong}\, S_k^*$ as right $A$-modules, so that in 
view of \erf{ASiTj=cij} we can rewrite the character \erf{A-bimod-char} as
  \be
  \chii_A^{A\otimes A\op_{}}
  = \sum_{i,j\in\I} c_{i,j} \, \chii_i^A \oti \chii_j^A \,.
  \labl{A-bimod-char-2}


\subsection{Factorizable Hopf algebras} \label{facHopf}

In this paper we deal with \findim\ Hopf algebras over a field \ko\ that is
algebraically closed and has characteristic zero. In the application to
logarithmic CFT, \ko\ is the field \complex\ of complex numbers. We denote by 
$m \iN \Homk(H\otik H,H)$ the product, by $\eta \iN \Homk(\ko,H)$ the unit, 
by $\Delta \iN \Homk(H,H\otik H)$ the coproduct, by $\eps \iN \Homk(H,\ko)$ 
the counit, and by $\apo \iN \Homk(H,H)$ the antipode of $H$. 

An \emph{R-matrix} for $H$ is an invertible element $R$ of $H\otik H$
which intertwines the coproduct and opposite coproduct in the sense that
  \be
  R\, \Delta\, R^{-1} = \tauHH \cir \Delta \equiv \Delta^{\!\rm op}_{} 
  \ee
and which satisfies the equalities
  \be
  (\Delta \oti \id_H) \circ R = R_{13}\cdot R_{23} \qquand
  (\id_H \oti \Delta) \circ R = R_{13}\cdot R_{12}
  \labl{deqf-qt}
in $H\otik H\otik H$. (The notation $R_{13}$ means that $R$ is to be
considered as an element in the tensor product of the first and third
factors of $H\otik H\otik H$, and similarly for $R_{23}$ etc.)
A Hopf algebra $(H,m,\eta,\Delta,\eps,\apo)$ together with an R-matrix
$R$ is called a \emph{quasitriangular Hopf algebra}.
For more information about quasitriangular Hopf algebras see e.g.\ Chapters 1 
and 2 of \cite{MAji}.

For a quasitriangular Hopf algebra, the invertible element
  \be
  Q := R_{21}\,{\cdot}\, R
  \ee
of $H\otik H$ is called the \emph{monodromy matrix}.
A quasitriangular Hopf algebra for which $Q$ is non-de\-generate,
meaning that it can be expressed as $\sum_\ell h_\ell \oti k_\ell$,
in terms of two vector space bases $\{h_\ell\}$ and $\{k_\ell\}$ of $H$,
is called \emph{factorizable}.
Equivalently, factorizability means that the \emph{Drinfeld map}
   \be
   f_Q := (d_H\oti \id_H) \circ (\idHs\oti Q) ~\in \Hom(\Hs,H)
   \labl{def-drin}
is invertible.
With a view towards the non-quasitriangular Hopf algebras considered for 
logarithmic conformal field theories in e.g.\ \cite{fgst}, one should
note that for the notion of factorizability we only need the existence of
a monodromy matrix $Q$, but not of an R-matrix; moreover, the properties
of $Q$ can be formulated without any reference to $R$ \cite[Sect.\,2]{brug6}.

A factorizable Hopf algebra is minimal in the sense that it does not 
contain a proper quasitriangular Hopf subalgebra \cite[Prop.\,3b]{radf13},
and is thus \cite{radf11}
a quotient of the Drinfeld double $D(B)$ of some Hopf algebra $B$.

A \emph{ribbon element} for a quasitriangular Hopf algebra $H$
is an invertible element $v$ of the center of $H$ that obeys
   \be
   \apo \circ v = v \,, \qquad \eps \circ v = 1 \qquand
   \Delta \circ v = (v\oti v) \cdot Q^{-1} .
   \labl{def-ribbon}
A quasitriangular Hopf algebra together with a ribbon element is called a 
\emph{ribbon Hopf algebra}.

\medskip

By a slight abuse of terminology, for brevity we refer in this paper to 
a \findim\ factorizable ribbon Hopf algebra over \ko\ just as a
\emph{factorizable Hopf algebra}.
There exist plenty of such algebras. For example, the Drinfeld double of a 
\findim\ Hopf algebra $B$ is factorizable provided that the square of the 
antipode of $B$ obeys a certain condition \cite[Thm.\,3]{kaRad}. This includes 
e.g.\ the Drinfeld doubles of finite groups (for which explicit formulas for 
the morphisms $\Corro gpq$ \erf{Sk_morph} can be found in \cite{ffss}).
Another large class $\{ U_{(N,\nu,\omega)} \}$ (with $N\,{>}\,1$ an odd 
integer, $\omega$ a primitive $N$th root of unity, and $\nu\,{<}\,N$ a 
positive integer such that $N$ does not divide $\nu^2$) of factorizable Hopf 
algebras is described in \cite[Sect.\,5.2]{radf13}. $U_{(N,\nu,\omega)}$ has
dimension $N^3/(N,\nu^2)^2$ \cite[Prop.\,10b)]{radf13}, and this family
comprises the small quantum group that is a \findim\ quotient of 
$U_q(\mathrm{sl}(2))$ as the special case $\nu \eq 2$ with $q \eq \omega^{-2}$,
compare \cite[p.\,260)]{radf13} and \cite[Prop.\,4.6]{lyma}.

\medskip

We also need the notions of integrals and cointegrals for Hopf algebras.
A \emph{left integral} of $H$ is an element $\Lambda \iN H$ obeying
  \be
  m \circ (\id_H \oti \Lambda) = \Lambda \circ \eps \,,
  \ee
or, in other words, a morphism of left $H$-modules from the trivial $H$-module 
$(\ko,\eps)$ to the regular $H$-module $(H,m)$. Dually,
a \emph{right cointegral} of $H$ is an element $\lambda \iN \Hs$ that satisfies
  \be
  (\lambda \oti \id_H) \circ \Delta = \eta \circ \lambda \,.
  \ee
Right integrals and left cointegrals are defined analogously.

For a \findim\ Hopf algebra there is, up to normalization, a unique non-zero 
left integral $\Lambda$ and a unique non-zero right cointegral $\lambda$, and
the number $\lambda\cir\Lambda \iN \ko$ is invertible. 
Also, the antipode $\apo$ of $H$ is invertible. If $H$ is quasitriangular,
then the square of the antipode is an inner automorphism, acting as
$h \,{\mapsto}\, u^{-1}\,h\,u$ with $u\iN H$ the \emph{Drinfeld element}
  \be
  u := m \circ (\apo\oti\id_H) \circ R_{21} \,.
  \labl{def:uDrinfeld}
And if $H$ is factorizable, then it is \emph{unimodular} \cite[Prop.\,3c)]{radf13},
i.e.\ the left integral $\Lambda$ is also a right integral, which implies that 
$\apo\cir\Lambda \eq \Lambda$. Moreover, $f_Q(\lambda)$ is an integral, too, 
and thus is a non-zero multiple of $\Lambda$.  One may then fix the 
normalizations of the integral and cointegral in such a way that
  \be
  \lambda \circ \Lambda = 1  \qquand  f_Q(\lambda) = \Lambda \,.
  \labl{Lambdalambda}
This convention is adopted throughout this paper; it determines $\Lambda$ and 
$\lambda$ uniquely except for a common sign factor.


\subsection{Representations of mapping class groups} \label{app:mpg}

As has been established in \cite{lyub11}, the spaces \erf{VE} of chiral blocks
come with natural representations of mapping class groups for surfaces with 
holes (that is, with open disks excised). To describe these or, rather, the
representations on spaces of blocks with outgoing instead of incoming
field insertions (see formula \erf{Vpqg} below), it is convenient to
present these groups through generators (and relations,
but these are irrelevant for us, as we are interested in invariants). A
suitable set of generators of \Mapgn, the mapping class group of genus-$g$ 
surfaces with $n$ holes, is obtained by noticing the exact sequence
  \be
  1 \,\longrightarrow \,\BG gn \,\longrightarrow \,\Mapgn \,\longrightarrow\,
  \Map g0 \,\longrightarrow \, 1
  \ee
of groups, where $\BG gn$ is a central extension of the surface braid group by 
$\zet^n$ (compare Theorem 9.1 of \cite{FAma}). As a set of generators one may 
thus take the union of those for some known presentations of \Map g0\ 
\cite{wajn} and of $\BG gn$ \cite{scotG3}. This amounts to the following 
(non-minimal) system of generators \cite{lyub6,lyub11}:

\def\leftmargini{1.49em}~\\[-2.49em]\begin{enumerate}\addtolength{\itemsep}{-4pt}
    \item
Braidings which interchange neighboring boundary circles.
    \item
Dehn twists about boundary circles.
    \item
Homeomorphisms $S_l$, for $l \eq 1,2,...\,,g$, which act as the identity outside
a certain region $\mathcal T_l$ and as a modular S-transformation in a slightly 
smaller region $\mathcal T_l' \,{\subset}\, \mathcal T_l^{}$ that has the 
topology of a one-holed torus. 
\\
(For the relevant regions $\mathcal T_l^{}$ and $\mathcal T_l'$, as well as the
cycles appearing in the subsequent entries of the list, see the picture below.)
    \item
Dehn twists in tubular neighborhoods of certain cycles $a_m$
and $e_m$, for $m \eq 2,3,$ $...\,,g$.
    \item
Dehn twists in tubular neighborhoods of certain cycles $b_m$ and $d_m$,
for $m \eq 1,2,$ $...\,,g$.
    \item
Dehn twists in tubular neighborhoods of certain cycles $t_{j,m}$, for
$j\eq 1,2,...\,,n{-}1$ and 
  \\
$m \eq 1,2,...\,,g$.
\end{enumerate}

\noindent
In particular, for the torus without holes ($g \eq 1$ and $n \eq 0$), the 
generators $S \eq S_1$ and $T \eq d_1$ furnish the familiar S- and 
T-transformations which generate the modular group \slz.
The regions $\mathcal T_l$ and $\mathcal T_l'$ and cycles $a_m$, $b_m$, $e_m$, 
$d_m$, and $t_{j,m}$ are exhibited in the following picture:
  \Eqpic{surf_gn_PIC} {420} {45} { \put(35,0){ \setlength\unitlength{1.3pt}
    \put(-42,0)  {\INcludepichtft{132f}{65}}
  \put(-23,38)     {\sse$ b_1$}
  \put(23,38.5)    {\sse$ b_{m{-}1}$}
  \put(49,46)      {\sse$ a_m$}
  \put(71,38)      {\sse$ b_m$}
  \put(61,54)      {\sse$ d_m$}
  \put(61,22)      {\sse$ e_m$}
  \put(125,54)     {\sse$ S_l$}
  \put(127,38)     {\sse$ b_{l}$}
  \put(188,38)     {\sse$ b_{k}$}
  \put(253,39)     {\sse$ b_{g}$}
  \put(223,26)     {\sse$ t_{j,k}$}
  \put(291,80)     {\sse$ U_1^{} $}
  \put(306,44)     {\sse$ U_j$}
  \put(306,31)     {\sse$ U_{j+1}$}
  \put(291,-1)     {\sse$ U_n^{} $}
  } }
$\mathcal T_l$ is the shaded region in \eqref{surf_gn_PIC}; it
is a one-holed torus forming a neighborhood of the $l$th handle, while 
$\mathcal T_l'$ is the smaller region indicated by the dotted line inside 
$\mathcal T_l$.

Also shown in \eqref{surf_gn_PIC} are decorations of the boundary circles
by objects $U_1, U_2, ...\,, U_n$ of a factorizable finite tensor category \C. 
The representation of \Mapgn\ constructed in \cite{lyub11} acts on the space
  \be
  V^{U}_{g:n} := \HomC(L^{\otimes g},U)
  \labl{Vpqg}
of morphisms of \C, where
  \be
  U := \bigoplus_{\sigma\in \mathfrak S_n}\,
  U_{\sigma(1)} \oti U_{\sigma(2)} \oti \cdots \oti U_{\sigma(n)} \,,
  \labl{defX}
with $L$ the handle Hopf algebra \erf{defLK};
the summation in \erf{defX} can be restricted to the subgroup
$\mathfrak N \eq \mathfrak N(U_1{,}...{,} U_n)$ of the symmetric group
$\mathfrak S_n$ that is
generated by those permutations $\sigma$ for which for at
least one value of $i$ the objects $U_i$ and $U_{\sigma(i)}$ are non-isomorphic.
In this representation $\pi^U_{g,n}$ of \Mapgn\ the different types of generators 
described in the list above act on the space \erf{Vpqg} as follows (see
\cite[Prop.\,2.4]{fuSs5}):

\def\leftmargini{1.49em}~\\[-2.49em]\begin{enumerate}\addtolength{\itemsep}{-4pt}
    \item
Post-composition with a braiding morphism which interchanges the objects that
label neighboring field insertions.
    \item
Post-composition with a twist isomorphism of the object labeling a field 
insertion.
    \item
Pre-composition with an isomorphism $\id_{L^{\otimes g-l}_{}}
\oti S^L \oti\id_{L^{\otimes l-1}} \iN \EndC(L^{\otimes g})$,
for $l \eq 1,2,...\,,g$.
    \item
Pre-composition with an isomorphism $\id_{L^{\otimes g-m}_{}}
\oti [ \OL\cir (T^L \oti T^L)] \oti \id_{L^{\otimes m-2}_{}}$, respectively 
$\id_{L^{\otimes g-m}_{}} \oti \big[ (T^L \oti \theta_{L^{\otimes m-1}})
\cir \QL_{L^{\otimes m-1}_{}} \big]$, for $m \eq 2,3,$ $...\,,g$.
    \item
Pre-composition with an isomorphism $\id_{L^{\otimes g-m}_{}}
\oti (S^{L^{\scriptstyle -1}}\cir T^L\cir S^L) \oti \id_{L^{\otimes m-1}_{}}$, 
respectively $\id_{L^{\otimes g-m}_{}} \oti T^L \oti \id_{L^{\otimes m-1}}$,
for $m \eq 1,2,$ $...\,,g$.
    \item
The map that sends $f\iN \HomC(L^{\otimes g},U_1\oti\cdots\oti U_n)
\,{\subseteq}\,V^{U}_{g:n}$ to
  \be
  \bearl
  \big(\, \big[\, ( \id_{U_1\otimes\cdots\otimes U_j}
  \oti \tilde d_{U_{j+1}\otimes\cdots\otimes U_n} ) \,\circ\, 
 ( f \oti \id_{\Vee U_n\otimes\cdots\otimes \Vee U_{j+1}} )
  \\{}\\[-.4em] \hspace*{2.5em}
  \circ\, \{ \id_{L^{\otimes g-m}_{}} \oti [
  \QL_{L^{\otimes m-1}\otimes \Vee U_n\otimes\cdots\otimes \Vee U_{j+1}} \cir
  (T^L \oti \theta_{L^{\otimes m-1}\otimes \Vee U_n\otimes\cdots\otimes
  \Vee U_{j+1}}) ] \} \,\big]
  \\{}\\[-.6em] \hspace*{15.5em}
  \otimes\, \id_{U_{j+1}\otimes\cdots\otimes U_n} \,\big) \,\circ\, \big(
  \id_{L^{\otimes g}_{}} \oti \tilde b_{U_{j+1}\otimes\cdots\otimes U_n} \big) 
  \eear
  \labl{LyubactC3}
and acts analogously on the other direct summands $\HomC(L^{\otimes g},U_{\sigma(1)} 
\oti U_{\sigma(2)} {\otimes}\, \cdots \oti U_{\sigma(n)})$ of $V^{U}_{g:n}$,
with $\sigma\iN \mathfrak N$, for $j\eq 1,2,...\,,n{-}1$ and $m \eq 1,2,...\,,g$.
\end{enumerate}

\noindent
Here we have introduced the abbreviations $S^L$, $T^L$, $\OL$, and $\QL_W$ for 
$W\iN\C$, for specific morphisms of \C\ involving tensor powers of $L$.
These morphisms are defined, with the help of dinatural families, by
   \Eqpic{p9-and-more} {420}{91} {
        \put(0,135) {
   \put(0,28)      {$ T^L \,\circ\, \iL_U ~:= $}
   \put(90,0)  {\Includepichtft{26a}
   \put(-11.9,14)  {\sse$ \theta_{\!U^{\!\vee}_{}}^{}$}
   \put(-3,-9.2)   {\sse$ U^{\!\vee} $}
   \put(8.4,63.3)  {\sse$ L $}
   \put(18.9,37.7) {\sse$ \iL_U $}
   \put(17.4,-9.2) {\sse$ U $}
   } } 
   \put(0,50)     {$ \OL \,\circ\, (\iL_U \oti \iL_V) ~:= $}
   \put(114,0) { \Includepichtft{103f}
   \put(-6,-9.2)  {\sse$ U^{\!\vee} $}
   \put(-5.8,89)  {\sse$ \iL_U $}
   \put(8,-9.2)   {\sse$ U $}
   \put(6.1,109)  {\sse$ L $}
   \put(22.8,27.3){\sse$ c $}
   \put(22.8,58.4){\sse$ c $}
   \put(32,-9.2)  {\sse$ V^{\!\vee} $}
   \put(36,109)   {\sse$ L $}
   \put(46.2,89)  {\sse$ \iL_V $}
   \put(46,-9.2)  {\sse$ V $}
   } 
   \put(225,50)   {$ \QL_W \,\circ\, (\iL_U \oti \id_W) ~:= $}
   \put(350,0) { \begin{picture}(0,0)(0,0)
         \scalebox{.38}{\includegraphics{imgs/pic_htft_103b.eps}}\end{picture}
   \put(-5.6,-9.2){\sse$ U^{\!\vee} $}
   \put(7.4,109)  {\sse$ L $}
   \put(9.4,-9.2)   {\sse$ U $}
   \put(22.2,27.3){\sse$ c $}
   \put(22.2,58.4){\sse$ c $}
   \put(34,-9.2)  {\sse$ W $}
   \put(34.3,109) {\sse$ W $}
   \put(17.8,90)  {\sse$ \iL_U $}
   } }
while
   \be
   S^L := (\eps_L \oti \id_L) \circ \OL \circ (\id_L \oti \Lambda_L) \,.
   \labl{S-HK}

\smallskip

For applications in CFT,
we also need to generalize the prescriptions above to the situation that
there are both outgoing and incoming field insertions. To treat this case,
we must partition the set of boundary circles into two subsets having, say, 
$p$ and $q$ elements. Denoting the objects labeling the corresponding 
insertions by $U_1, U_2, ...\,, U_p$ and by $W_1, W_2, ...\,, W_q$,
respectively, we can define objects $U$ and $W$ analogously as in \erf{defX} 
and consider the linear isomorphism
  \be
  \varphi:\quad \HomC(L^{\otimes g}{\otimes}\,W,U)
  \,\stackrel\cong\longrightarrow\, \HomC(L^{\otimes g},U\oti W^\vee) 
  \ee
that is supplied by the right duality of \C. Then by setting
  \be
  \pi^{W,U}_{g,p,q}(\gamma)
   := \varphi^{-1} \circ \pi^{U\otimes W^\vee}_{g,p+q}(\gamma) \circ \varphi
  \labl{piXYgpq}
for $\gamma \iN \Mapgppq$ we obtain a \rep\ of the subgroup \Mapgpq\ of the 
mapping class group \Mapgppq\ that leaves each subset of circles separately 
invariant, on the space $\HomC(L^{\otimes g}{\otimes}\,W,U)$.

Also, in the application to correlation functions of bulk fields in full CFT, 
we deal with the category
\CbC\ instead of \C, and accordingly with the \emph{bulk} handle Hopf algebra
$K$ instead of $L$. Then in particular for $\C \,{\simeq}\, \HMod$ the 
S- and T-transformations result in the pictures \erf{S_KH-TKH} for $S^K$ and
$T^K$ presented in the main text. For further details we refer to
\cite{fuSs5,stig7a}, e.g.\ $\mathcal O^K$ for $\C \,{\simeq}\, \HMod$ is
given by formula (4.2) of \cite{fuSs5}.


   \vfill

\noindent{\sc Acknowledgments:}
This work has been supported in part by the ESF Research Networking Programme
``Interactions of Low-Dimensional Topology and Geometry with Mathematical
Physics''.
CSc is partially supported by the Collaborative Research Centre 676 ``Particles,
Strings and the Early Universe - the Structure of Matter and Space-Time'' and
by the DFG Priority Programme 1388 ``Representation Theory''.

\newpage


  \newcommand\wb{\,\linebreak[0]} \def\wB {$\,$\wb}
  \newcommand\Bi[2]    {\bibitem[#2]{#1}}
  \newcommand\inBo[8]  {{\em #8}, in:\ {\em #1}, {#2}\ ({#3}, {#4} {#5}),
                         p.\ {#6--#7}}
  \newcommand\inBO[9]  {{\em #9}, in:\ {\em #1}, {#2}\ ({#3}, {#4} {#5}),
                         p.\ {#6--#7} {{\tt [#8]}}}
  \renewcommand\J[7]   {{\em #7}, {#1} {#2} ({#3}) {#4--#5} {{\tt [#6]}}}
  \newcommand\JO[6]    {{\em #6}, {#1} {#2} ({#3}) {#4--#5} }
  \newcommand\BOOK[4]  {{\em #1\/} ({#2}, {#3} {#4})}
  \newcommand\prep[2]  {{\em #2}, preprint {\tt #1}}
  \newcommand\phd[3]   {{\em #3}, {PhD thesis, #1} {{\tt [#2]}}}
  \def\adma  {Adv.\wb Math.}
  \def\aspm  {Adv.\wb Stu\-dies\wB in\wB Pure\wB Math.}
  \def\atmp  {Adv.\wb Theor.\wb Math.\wb Phys.}   
  \def\bacp  {Ba\-nach\wB Cen\-ter\wB Publ.}
  \def\coma  {Con\-temp.\wb Math.}
  \def\comp  {Com\-mun.\wb Math.\wb Phys.}
  \def\cpma  {Com\-pos.\wb Math.}
  \def\duke  {Duke\wB Math.\wb J.}
  \def\imrn  {Int.\wb Math.\wb Res.\wb Notices}
  \def\isjm  {Israel\wB J.\wb Math.}
  \def\jhep  {J.\wb High\wB Energy\wB Phys.}
  \def\joal  {J.\wB Al\-ge\-bra}
  \def\jktr  {J.\wB Knot\wB Theory\wB and\wB its\wB Ramif.}
  \def\jopa  {J.\wb Phys.\ A}
  \def\jomp  {J.\wb Math.\wb Phys.}
  \def\jpaa  {J.\wB Pure\wB Appl.\wb Alg.}
  \def\jram  {J.\wB rei\-ne\wB an\-gew.\wb Math.}
  \def\momj  {Mos\-cow\wB Math.\wb J.}
  \def\npbp  {Nucl.\wb Phys.\ B (Proc.\wb Suppl.)}
  \def\nupb  {Nucl.\wb Phys.\ B}
  \def\pams  {Proc.\wb Amer.\wb Math.\wb Soc.}
  \def\phlb  {Phys.\wb Lett.\ B}
  \def\phrl  {Phys.\wb Rev.\wb Lett.}
  \def\plms  {Proc.\wB Lon\-don\wB Math.\wb Soc.}
  \def\prja  {Proc.\wB Japan\wB Acad.}
  \def\pcps  {Proc.\wB Cam\-bridge\wB Philos.\wb Soc.}
  \def\ruma  {Revista de la Uni\'on Matem\'atica Argentina}
  \def\slnm  {Sprin\-ger\wB Lecture\wB Notes\wB in\wB Mathematics}
  \def\taac  {Theo\-ry\wB and\wB Appl.\wb Cat.}
  \def\taia  {Top\-o\-lo\-gy\wB and\wB its\wB Appl.}
  \def\tams  {Trans.\wb Amer.\wb Math.\wb Soc.}
  \def\thmp  {Theor.\wb Math.\wb Phys.}
  \def\trgr  {Trans\-form.\wB Groups}

\small

  \end{document}